\documentclass[sigconf,nonacm]{acmart}
\usepackage{float}
\usepackage{booktabs} 
\usepackage[linesnumbered, ruled, vlined]{algorithm2e}
\usepackage{amsmath}
\usepackage{caption}
\usepackage{subcaption}
\usepackage{graphicx}
\usepackage{tcolorbox}
\usepackage{multicol}
\setcopyright{acmcopyright}

\acmISBN{979-8-4007-0243-3/24/04}
\acmConference[EDBT'26]{EDBT/ICDT 2026 Joint Conference}{24th March - 27th March, 2026}{Tampere, Finland}
\acmYear{2024}
\copyrightyear{2024}
\acmArticle{4}
\acmPrice{15.00}

\begin{document}
\title{Approximately Bisubmodular Regret Minimization in Billboard and Social Media Advertising}

\author{Dildar Ali}
\affiliation{%
 \institution{Department of Computer Science and Engineering, Indian Institute of Technology Jammu}
 \city{Jammu}
 \state{Jammu \& Kashmir}
 \country{India}
 \postcode{181221}}
 \email{2021rcs2009@iitjammu.ac.in}

\author{Suman Banerjee}
\affiliation{%
 \institution{Department of Computer Science and Engineering, Indian Institute of Technology Jammu}
 \streetaddress{1 Th{\o}rv{\"a}ld Circle}
 \city{Jammu}
  \state{Jammu \& Kashmir}
 \country{India}}
\email{suman.banerjee@iitjammu.ac.in}

\author{Yamuna Prasad}
\affiliation{%
 \institution{Department of Computer Science and Engineering, Indian Institute of Technology Jammu}
 \streetaddress{1 Th{\o}rv{\"a}ld Circle}
 \city{Jammu}
  \state{Jammu \& Kashmir}
 \country{India}}
\email{yamuna.prasad@iitjammu.ac.in}

\begin{abstract}
We study the problem of minimizing regret in multi-mode advertisement settings, where an influence provider allocates advertising resources such as social network seeds and billboard slots to multiple advertisers with specified influence demands and payments. Unlike prior work focusing on a single mode of advertising, we consider the interplay between online and offline modes and introduce a novel regret model that captures their interaction effect. This leads to a regret minimization problem that is non-monotone, non-submodular, and NP-hard to approximate within any constant factor. To address this, we propose a monotone, approximately bisubmodular influence model and develop two algorithmic solutions: Projected Subgradient Method based on the Lovász extension of the regret function, and an Approximate Bisubmodular Local Search algorithm with provable guarantees. Experiments on large-scale real-world datasets, including billboard and trajectory data from major U.S. cities, as well as social network graphs, demonstrate that our methods outperform existing baselines in minimizing total regret while satisfying advertiser demands. Our framework is broadly applicable to other resource allocation scenarios beyond advertising.
\end{abstract}

\begin{CCSXML}
<ccs2012>
   <concept>
       <concept_id>10010520.10010521.10010537.10003100</concept_id>
       <concept_desc>Computer systems organization~Databases and Big Data Management</concept_desc>
       <concept_significance>500</concept_significance>
       </concept>
 </ccs2012>
\end{CCSXML}

\ccsdesc[500]{Information Systems~Computational Advertising}
\keywords{Billboard Advertisement, Influence Provider, Advertiser, Regret Minimization}
\maketitle

\section{Introduction}
Almost all commercial houses use advertising as a mechanism to promote their product and create a customer base. As mentioned in recent marketing literature, a commercial house spends around $7-10\%$ of its annual revenue on advertising\footnote{\url{https://www.lamar.com/howtoadvertise/Research/}}. Two popular advertising approaches are advertising through digital billboards and social networks. In a billboard advertisement, a commercial house displays content with the hope that the people nearby will look at the content and be influenced by it. The other way to advertise is through social media. Many internet giants, including Google and Facebook, earn a significant amount of revenue through social network advertising. In this method, a number of highly influential users are chosen, and they are influenced externally. In general, advertisers request the influence provider for the required influence demand in exchange for some payment. Now, if the influence provider provides the demanded influence, then he will receive full payment; otherwise, a partial payment. This leads to a loss of influence provider, and this loss is formulated as \textit{regret}. Most of the existing influence maximization problems are solved from the advertiser's perspective \cite{zhang2020towards,kempe2003maximizing}, and few are solved from the influence provider's perspective for billboard and social network, separately \cite{zhang2021minimizing,ali2024minimizing,ali2024regret,aslay2015viral}. In this work, we consider both the billboard and the social network, and formulate the Regret Minimization Problem (formally defined in Section \ref{Sec:RR}) from the perspective of the influence provider. The general applicability of our problem will be illustrated later in this section.
\paragraph{\textbf{Our Observation.}}
Existing studies deal with single and multi-advertiser settings that share a common objective: (a) to help advertisers achieve the largest influence under budget constraint \cite{zhang2020towards,ali2024effective,kempe2003maximizing}. (b) to minimize the regret of an influence provider while satisfying the advertiser's influence demand \cite{ali2023efficient,ali2024regret,ali2024minimizing,zhang2021minimizing}. A more challenging, unexplored scenario involves advertisers submitting daily proposals with influence demands across both online and offline modes, along with payments conditional on meeting those demands.
\paragraph{\textbf{Motivation.}} 
In the existing literature, to calculate regret in billboard advertisement \cite{zhang2021minimizing,ali2023efficient,ali2024minimizing,ali2024regret,ali2024toward} and Social media advertisement \cite{sharma2024minimizing,aslay2014viral}, a separate regret model exists, and the regret models are non-monotone and non-submodular. In typical advertisements, the advertisers have a specific amount of influence demand, and the regret of the influence providers can be computed using these two regret models, and the sum of these individual regrets will be the total regret of the influence provider. However, this approach will not work when we consider two different modes of advertising due to the \textit{Interaction Effect} (See Definition \ref{Def:Interaction_Effect}) between social networks and billboard slots and will generate different regret compared to traditional methods  \cite{sharma2024minimizing,zhang2021minimizing}. Due to the non-monotonicity of existing regret models, performance guarantees are unattainable. Thus, a monotone regret model that accounts for both social and billboard advertisements, including interaction effects, is essential. This motivates the study of the \textit{Regret Minimization Problem (RM)} to enable effective solutions.

\begin{example}\label{Example:1}
Assume there are four advertiser $\mathcal{A}=\{a_{1}, a_{2}, a_{3}, a_{4}\}$, Six billboard slots $\mathcal{BS}= \{bs_{1}, bs_{2}, \ldots, bs_{6}\}$ (see Table \ref{ETable:1}), seven seed nodes $\mathcal{P} = \{ p_{1},p_{2}, \ldots, p_{7}\}$ (see Table \ref{ETable:2}) and an influence provider $\mathcal{Z}$ with influence demand from social network and billboard slots from the advertisers as shown in Table \ref{ETable:3}. Now, we consider two cases: first, the regret is calculated separately, and the aggregated sum of the regret is presented. The allocation of slots to the advertisers is as follows: $a_{1} = \{bs_{1}, bs_{2}\}$, $a_{2} = \{bs_{5}\}$, $a_{3} = \{bs_{3}, bs_{4}\}$, $a_{4} = \{bs_{6}\}$ and allocation of seed nodes to the advertisers is $a_{1} = \{p_{1}, p_{2}\}$, $a_{2} = \{p_{5}\}$, $a_{3} = \{p_{3}, p_{4}\}$, $a_{4} = \{p_{6}, p_{7}\}$. The regret from billboard slots for the advertiser $a_{1}, a_{2}, a_{3}$, and $a_{4}$ are $14.23$, $15.55$, $9.23$ and $15.23$, respectively. Similarly, for the social network, regrets are $74.63$, $47.10$, $92.30$, and $73.1$, respectively. In the second case where we consider the \textit{Interaction Effect} and use the proposed regret model (see Definition \ref{Def:Combine_Regret_Model}) and the regret for the advertiser $a_{1}, a_{2}, a_{3}$, and $a_{4}$ are $77.34$, $53.77$, $86.03$ and $78$, respectively. So, the total regret for all the advertisers in the first and second cases will be $341.37$ and $295.14$.

\begin{table}[!h]
\begin{center}
\begin{minipage}{0.48\textwidth}
\small
   \centering
   \begin{tabular}{| c | c | c | c | c | c | c |}
   \hline
   $\mathcal{BS}_{i}$ & $bs_{1}$ & $bs_{2}$ & $bs_{3}$ & $bs_{4}$ & $bs_{5}$ & $bs_{6}$ \\ \hline
   $\mathcal{I}(bs_{i})$ & 2 & 4 & 3 & 1 & 6 & 5 \\ \hline
   Cost & \$6 & \$12 & \$9 & \$3 & \$18 & \$15 \\ \hline
   \end{tabular}
   \caption{\label{ETable:1} Billboard Info.}
\end{minipage}
\hfill

\begin{minipage}{0.48\textwidth}
\small
   \centering
   \begin{tabular}{| c | c | c | c | c | c | c |c|}
   \hline
   $\mathcal{P}_{i}$ & $p_{1}$ & $p_{2}$ & $p_{3}$ & $p_{4}$ & $p_{5}$ & $p_{6}$ & $p_{7}$ \\ \hline
   $\mathcal{I}^{\mathcal{G}}(p_{i})$ & 10 & 13 & 13 & 14 & 15 & 12 & 10 \\ \hline
   Cost & \$50 & \$65 & \$65 & \$70 & \$75 & \$60 & \$50\\ \hline
   \end{tabular}
   \caption{\label{ETable:2} Seed Node Info.}
\end{minipage}
\hfill

\begin{minipage}{0.48\textwidth}
\small
   \centering
   \begin{tabular}{ | c | c | c | c | c |}
   \hline
   Advertiser($\mathcal{A}$) & $a_{1}$ & $a_{2}$ & $a_{3}$ & $a_{4}$ \\ \hline
   Demand ($\sigma_{i}$) & 10 & 10 & 5 & 10 \\ \hline
   Budget($\mathcal{Y}_{i}$) & \$20 & \$20 & \$15 & \$20 \\ \hline
   Demand ($\Omega_{i}$) & 30 & 20 & 35 & 25 \\ \hline
   Budget($\mathcal{X}_{i}$) & \$120 & \$75 & \$150 & \$130 \\ \hline
   \end{tabular}
   \caption{\label{ETable:3} Advertiser Info.}
\end{minipage}
\end{center}
\end{table}
\end{example}

\paragraph{\textbf{Our Problem.}}
Motivated by our observations, we introduce the allocation problem from the perspective of the influence provider who is responsible for the allocation of slots or seeds to the advertisers. The influence provider owns a large number of slots and seeds. Each advertiser seeks a subset of slots or seeds, or both, with aggregated influence reaching their influence demand. The influence provider will receive full payment if he satisfies the influence demand of the advertiser; otherwise, he will receive a partial payment. This regret affects the profit of the influence provider. So, we introduce a novel \texttt{regret model} to guide the influence provider in assigning slots and seeds to the advertisers. 


\paragraph{\textbf{Our Solutions.}}
Since \textit{RM} problem is generally intractable, we propose a projected gradient method (PGM) and a greedy-based local search technique that offer end-users different trade-offs between computational efficiency and achievable regret. The first approach, PGM, minimizes the regret function using subgradient descent on its Lovász extension over fractional slot and seed vectors, with projections for feasibility. The second approach allocates slots or seeds by the highest marginal reduction in regret per unit.

\paragraph{\textbf{Empirical Evaluation.}} 
We use real-world and synthetic datasets to measure how proposed algorithms behave with respect to different trajectories in a city. The real-world dataset consists of billboard and trajectory information of the two major cities, New York and Los Angeles, USA. The synthetic dataset is generated by using the pattern of real trajectories and billboard datasets. We define demand-supply and average-individual demand ratios to model macro-level (overall demand) and micro-level (advertiser size) scenarios, capturing diverse real-world settings. Finally, we present evaluation results and insights on the practical benefits of different deployment strategies for the host.

\paragraph{\textbf{General Applicability.}}
The regret formulation in this paper applies broadly to scenarios where companies allocate resources to meet demand, such as trucks, store locations, or staff. Under-provisioning leads to unmet demand, while over-provisioning wastes resources. Though specific objectives may vary, our techniques remain applicable with minor adjustments. For example, in cloud computing, providers allocate resources to clients. Under-provisioning causes performance issues; over-provisioning wastes resources. Techniques from this paper can be adapted with minor changes to handle such resource allocation problems efficiently.

\paragraph{\textbf{Relevant Studies}} Several studies studied in influence maximization in billboard \cite{zhang2020towards,wang2022data,Jali2025influential} as well as social network advertisements \cite{chen2009efficient,chen2010scalable,guo2013personalized}. In this direction, Ali et al. \cite{ali2025influential} first introduce the joint tag and slot selection problem and formulate this problem as a bisubmodular influence maximization problem. To solve this, they introduce orthant-wise greedy maximization approaches. There exists literature \cite{ali2024effective,ali2024multi} that considers tag assignment to the billboard slots so that the total influence is maximized for both single and multi-advertiser settings. Now, in the case of social networks, there exists an influence maximization literature \cite{10.1145/1557019.1557047,bharathi2007competitive,chen2010scalable,jung2012irie} that focuses on finding a subset of nodes in the social network that maximizes total influence. Furthermore, in this direction, for influence maximization problems, Ali et al. \cite{ali2025fairness} introduce the maximin fairness notion for billboard advertisements, and Rui et al. \cite{rui2025efficient} apply it to social network influence maximization.

\par Few studies in the context of billboard advertising consider the regret minimization problem from the perspective of the influence provider caused by providing influence to the advertiser. First, Ashley et al. \cite{aslay2014viral} introduce a new problem domain that involves allocating social network users to advertisers to promote commercial posts. Recently, Sharma et al. \cite{sharma2024minimizing} studied the regret minimization problem in social networks and introduced a greedy-based seed set allocation approach. In the context of billboard advertising, Zhang et al. \cite{zhang2021minimizing} studied the first regret minimization problem, and they proposed several heuristic solutions. In addition, Ali et al. \cite{ali2023efficient,ali2025toward} studied regret minimization problems extensively and proposed several greedy and efficient randomized algorithms in a multi-advertiser setting. In addition, Ali et al. \cite{ali2024minimizing,ali2024regret} extend their work for zonal influence constraints. They consider the zone-specific influence demand of the advertisers and introduce greedy-based solution methodologies to minimize the regret from the influence provider's perspective. Next, we summarized our contributions.

\paragraph{\textbf{Our Contributions.}} 
To the best of our knowledge, this is the first work to address regret minimization by jointly considering billboard and social network advertising. Our key contributions are:
\begin{itemize}
    \item We formulate a regret minimization problem in a multi-advertiser setting across billboard and social networks.
    \item We prove the problem is NP-hard and inapproximable within any constant factor.
    \item We propose two solution methods: the Projected Gradient Method and the Approximate Bisubmodular Local Search.
    \item We conduct experiments on real-world datasets to show the effectiveness of our methods against baselines.
\end{itemize}

\paragraph{\textbf{Organization of the Paper}} 
Rest of the paper is organized as follows. Section \ref{Sec:BPD} describes required background concepts and defines the problem formally. Section \ref{Sec:PS} describes the proposed solution approach with illustration and analysis. Section \ref{Sec:EE} contains the experimental evaluations of the proposed solution approaches. Section \ref{Sec:CFD} concludes this study and gives future research directions.

\section{Background and Problem Definition} \label{Sec:BPD}
In this section, we describe the background of the problem and formally define our problem. For any positive integer $k$, $[k]$ denotes the set $\{1,2, \ldots, k\}$ and for any two positive integers $x$ and $y$, with $x \leq y$, $[x,y]$ denotes the set $\{x, x+1, \ldots, y\}$. Initially, we start by describing the set functions and their properties.

\subsection{Set Function and Its Properties}
Consider the set $\mathcal{V}$ with $n$ elements and a function $\mathcal{F}$ defined on the set $\mathcal{V}$ i.e., $\mathcal{F}: 2^{\mathcal{V}} \longrightarrow \mathbb{R}$. The $\mathcal{F}$ is a set function and is normalized if $\mathcal{F}(\emptyset) = 0$. Now, for given $A \subseteq B \subseteq \mathcal{V}$, $\mathcal{F}$ is said to be \emph{nonnegative} if any $B \subseteq \mathcal{V}$, $\mathcal{F}(B) \geq 0$, monotone if $\mathcal{F}(A) \leq \mathcal{F}(B)$ and submodular if for all $s \in \mathcal{V} \setminus B$, $\mathcal{F}(A \cup \{s\}) - \mathcal{F}(A) \geq \mathcal{F}(B \cup \{s\}) - \mathcal{F}(B)$. Further, the submodular function is extended to biset functions in which two arguments are present. Let a biset function be defined in the ground set $\mathcal{V}_{1} \times \mathcal{V}_{2}$, that is, $2^{{\mathcal{V}_1} \times {\mathcal{V}_2}} \longrightarrow \mathbb{R}$. The properties of the biset function $f$ are as follows. The biset function $f$ is said to be normalized if $f(\emptyset, \emptyset) = 0$ and is said to be monotone if for all $(X_{1},X_{2}) \in 2^{{\mathcal{V}_1} \times {\mathcal{V}_2}}$, where $x \in \mathcal{V}_{1} \setminus X_{1}$, $y \in \mathcal{V}_{2} \setminus X_{2}$ and $f(X_{1} \cup \{x\},X_{2}) \geq f(X_{1},X_{2})$ and $f(X_{1},X_{2} \cup \{y\}) \geq f(X_{1},X_{2})$ hold. The biset function $f$ is said to be bisubmodular if for any $(X_{1},X_{2}), (X^{'}_{1},X^{'}_{2}) \in 2^{{\mathcal{V}_1} \times {\mathcal{V}_2}}$, where $X_{1} \subseteq X^{'}_{1} \subseteq \mathcal{V}_{1}$ and $X_{2} \subseteq X^{'}_{2} \subseteq \mathcal{V}_{2}$, also for all $x \in \mathcal{V}_{1} \setminus X^{'}_{1}$ and $y \in \mathcal{V}_{2} \setminus X^{'}_{2}$ the condition holds: $f(X_{1} \cup \{x\},X_{2}) - f(X_{1},X_{2}) \geq f(X^{'}_{1} \cup \{x\}, X^{'}_{2}) - f(X^{'}_{1}, X^{'}_{2})$ and $f(X_{1},X_{2} \cup \{y\}) - f(X_{1},X_{2}) \geq f(X^{'}_{1}, X^{'}_{2}\cup \{y\}) - f(X^{'}_{1}, X^{'}_{2})$ \cite{singh2012bisubmodular}. Next, we define the $\epsilon$-approximate bisubmodular in Definition \ref{Def:epsilon-approximate-bisubmodular}.

\begin{definition}[$\varepsilon$-Approximately Bisubmodular]\label{Def:epsilon-approximate-bisubmodular}
A biset function $f: 2^{{\mathcal{V}_1} \times {\mathcal{V}_2}} \to \mathbb{R}$ is said to be $\epsilon$-approximately bisubmodular if for any
$X_{1} \subseteq X^{'}_{1} \subseteq \mathcal{V}_{1}$ and $X_{2} \subseteq X^{'}_{2} \subseteq \mathcal{V}_{2}$, also for all $x \in \mathcal{V}_{1} \setminus X^{'}_{1}$ and $y \in \mathcal{V}_{2} \setminus X^{'}_{2}$ the condition holds: 
\begin{equation}
f(X_{1} \cup \{x\},X_{2}) - f(X_{1},X_{2}) \geq f(X^{'}_{1} \cup \{x\}, X^{'}_{2}) - f(X^{'}_{1}, X^{'}_{2}) - \varepsilon,  
\end{equation}
\begin{equation}
f(X_{1},X_{2} \cup \{y\}) - f(X_{1},X_{2}) \geq f(X^{'}_{1}, X^{'}_{2}\cup \{y\}) - f(X^{'}_{1}, X^{'}_{2}) - \varepsilon
\end{equation}
The additive term $\varepsilon \ge 0$ quantifies the deviation from exact bisubmodularity when $\varepsilon = 0$, the function is exactly bisubmodular.
\end{definition}

\subsection{Submodular Minimization}
Consider the set $\mathcal{V}$ with $n$ elements and a function $\mathcal{F}$ defined on the set $\mathcal{V}$ i.e., $\mathcal{F}: 2^{\mathcal{V}} \longrightarrow \mathbb{R}$. The $\mathcal{F}$ is a set function and is normalized if $\mathcal{F}(\emptyset) = 0$. Now, for given $A \subseteq B \subseteq \mathcal{V}$, $\mathcal{F}$ is said to be \emph{nonnegative} if any $B \subseteq \mathcal{V}$, $\mathcal{F}(B) \geq 0$, monotone if $\mathcal{F}(A) \leq \mathcal{F}(B)$ and submodular if for all $s \in \mathcal{V} \setminus B$, $\mathcal{F}(A \cup \{s\}) - \mathcal{F}(A) \geq \mathcal{F}(B \cup \{s\}) - \mathcal{F}(B)$. Minimizing a submodular function $\mathcal{F}$ is equivalent to minimizing its Lovász extension \cite{lovasz1983submodular}, a continuous, convex extension defined over the hypercube $[0,1]^d$. This extension is convex if and only if $\mathcal{F}$ is submodular.
\begin{definition}[Lovász Extension]
For a normalized set function $\mathcal{F}$, the Lovász extension $f_L: \mathbb{R}^d \to \mathbb{R}$ is given by, $f_L(s) = \sum_{k=1}^d s_{j_k} F(j_k \mid S_{k-1})$,
where $s_{j_1} \geq \cdots \geq s_{j_d}$ are the sorted coordinates of $s$, and $S_k = \{j_1, \dots, j_k\}$.
\end{definition}

Now, minimizing $f_{L}$ is equivalent to minimizing the function $\mathcal{F}$. Moreover, when $\mathcal{F}$ is submodular and $k$ is a subgradient of $f_{L}$ at any $s \in \mathbb{R}^{d}$ can be effectively computed by decreasing the order of sorted $s$ and taking $k_{j_k} = \mathcal{F}(j_k \mid S_{k-1})$ for all $k \in \mathcal{V}$ \cite{edmonds2003submodular}. The relation between convexity and submodularity allows for generic convex optimization algorithms, and it can be used to minimize $\mathcal{F}$. Although in the case of submodular and approximately submodular functions, how this relation is affected was studied by existing studies \cite{el2020optimal}. However, it has been unclear how these relations are affected if the function is only approximately bisubmodular. In this paper, we give an answer to this question.
\subsection{Billboard Advertisement}
Billboard advertisements require three key components: an influence function, a trajectory, and a billboard database. A trajectory database $\mathbb{D}$ contains the location information of a moving user, and $[T_1, T_2]$ is the time duration for which the user movement occurs. Consider $\mathbb{D}$, which contains $m$ tuples in the form $<\mathcal{U}^{'},\ell, [t_{1},t_{2}]>$, which denotes a set of people $\mathcal{U}^{'} \subseteq \mathcal{U}$ who were at the location $\ell$ for a period of time $[t_{1},t_{2}]$. So, for every tuple $ q \in \mathbb{D}$, the associated interval $[t_{x},t_{y}] \in [T_{1},T_{2}]$, and $t_{x} \leq t_{y}$. The billboard database $\mathbb{B}$ contains the information of billboards placed in a city represented in the form of a tuple $<b_{id},\ell, cost>$, which signifies the billboard $b_{id}$ is placed at location $\ell$ associated with some cost. Recently, commercial houses have hired billboards for a certain duration, i.e., slots. Considering $\Delta$ as the slot duration for each billboard, the number of slots will be $\frac{T_2-T_1}{\Delta}$. A set of billboard slots can be denoted by $\mathbb{BS}=\{(b_i, [t,t+\Delta]): b_i \in \mathbb{B} \text{ and } t \in \{T_1, T_1+ \Delta, T_1+ 2 \Delta, \ldots, T_2-\Delta \}\}$ and represented in the form of a tuple $<bs_{id}, \ell, time\_slot, cost>$, which represents billboard slot $bs_{id}$ is placed at location $\ell$ with a slot duration. The associated cost of a slot is formalized by a cost function $\mathcal{C}_{BS}: \mathbb{BS} \longrightarrow \mathbb{R}^{+}$. Now, consider that user $u_{i} \in \mathcal{U}$ is present for the duration $[t_{c},t_{d}]$ in a location where the billboard is placed and currently running an advertisement for duration $[t_{a},t_{b}]$. If $[t_{c},t_{d}] \cap [t_{a},t_{b}] \neq \emptyset$, then we can say the user $u_{i}$ is influenced with some probability. In the existing literature \cite{zhang2020towards}, several approaches are proposed to calculate this influence probability. However, based on LAMAR \footnote{\url{https://www.lamar.com/howtoadvertise/Research/}} influence metrics, we incorporate panel size and exposure frequency into our influence model. Specifically, for each $bs_i \in \mathbb{BS}$,  $u_j \in \mathcal{U}$, if $bs_i$ can influence $u_j$, the $\text{Pr}(u_j,bs_i)$ can be define as $\text{Pr}(t_j,bs_i) = \frac{\text{size}(bs_i)}{A},$ where $A > \max_{bs_i \in \mathbb{BS}} \text{size}(bs_i)$, to reflect influence based on panel size. Now, for a given subset of billboard slots $\mathcal{S} \in \mathbb{BS}$, the influence of $\mathcal{S}$ can be denoted as $\mathcal{I}(\mathcal{S})$ and defined in Definition \ref{Def:1}.

\begin{definition} [Influence of Billboard Slots] \label{Def:1}
Given a subset of billboard slots $\mathcal{S} \subseteq \mathbb{BS}$, its influence $\mathcal{I}(\mathcal{S})$ can be computed using Equation \ref{Eq:1}.
\begin{equation} \label{Eq:1}
{\small
\mathcal{I}(\mathcal{S}) = \underset{u_j \in \mathbb{D}}{\sum} 1- \underset{bs_i \in \mathcal{S}}{\prod} (1- Pr(u_j,bs_i))
}
\end{equation}
\end{definition}

\begin{theorem}\label{Th:I} \cite{zhang2020towards}
The influence function $\mathcal{I}()$ is non-negative, monotone, and submodular.
\end{theorem}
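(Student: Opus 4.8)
The plan is to decompose $\mathcal{I}$ into per-trajectory coverage terms and then invoke the fact that non-negativity, monotonicity, and submodularity are all closed under non-negative linear combinations. Fix a user $u_j \in \mathbb{D}$, abbreviate $p_i := Pr(u_j, bs_i)$, and set $g_j(\mathcal{S}) := 1 - \prod_{bs_i \in \mathcal{S}} (1 - p_i)$, so that by Equation \ref{Eq:1} we have $\mathcal{I}(\mathcal{S}) = \sum_{u_j \in \mathbb{D}} g_j(\mathcal{S})$. It then suffices to establish the three properties for a single $g_j$; the corresponding properties for $\mathcal{I}$ follow immediately by summing over $u_j$.

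For \textbf{non-negativity} and normalization, the modelling assumption $A > \max_{bs_i \in \mathbb{BS}} \text{size}(bs_i)$ forces every $p_i \in [0,1]$, hence each factor $(1-p_i) \in [0,1]$ and $\prod_{bs_i \in \mathcal{S}} (1 - p_i) \in [0,1]$; therefore $g_j(\mathcal{S}) \in [0,1] \subseteq \mathbb{R}_{\ge 0}$, and the empty product gives $g_j(\emptyset) = 0$. For \textbf{monotonicity}, if $\mathcal{S} \subseteq \mathcal{T}$ then the product defining $g_j(\mathcal{T})$ consists of every factor of the product for $g_j(\mathcal{S})$ multiplied by additional factors lying in $[0,1]$, so it is no larger; consequently $g_j(\mathcal{S}) \le g_j(\mathcal{T})$.

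The crux is \textbf{submodularity}, and it reduces to one short computation: for any $A \subseteq \mathbb{BS}$ and $bs_s \notin A$ the marginal gain telescopes to
\[
g_j(A \cup \{bs_s\}) - g_j(A) \;=\; p_s \prod_{bs_i \in A} (1 - p_i).
\]
Taking $A \subseteq B \subseteq \mathbb{BS}$ with $bs_s \notin B$, both marginal gains have exactly this form and differ only in the trailing product; since $\prod_{bs_i \in B}(1-p_i) \le \prod_{bs_i \in A}(1-p_i)$ (the extra factors lie in $[0,1]$), the gain at $A$ dominates the gain at $B$, which is precisely the submodular inequality for $g_j$. Summing over all $u_j \in \mathbb{D}$ transfers it to $\mathcal{I}$. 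I do not anticipate a real obstacle here; the only steps requiring care are verifying the telescoping identity for the marginal gain and justifying that each $p_i \in [0,1]$ (so that appending factors can only shrink the product), together with the degenerate case in which all relevant $p_i$ vanish, where every quantity above equals $0$ and the inequalities hold trivially.
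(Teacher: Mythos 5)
Your proof is correct: the per-user decomposition $\mathcal{I}(\mathcal{S})=\sum_{u_j\in\mathbb{D}} g_j(\mathcal{S})$ with $g_j(\mathcal{S})=1-\prod_{bs_i\in\mathcal{S}}(1-p_i)$, the observation that $p_i\in[0,1]$ under the panel-size model, and the telescoping identity $g_j(A\cup\{bs_s\})-g_j(A)=p_s\prod_{bs_i\in A}(1-p_i)$ together give exactly the standard coverage-function argument for non-negativity, monotonicity, and submodularity, which then transfer to $\mathcal{I}$ by non-negative summation. Note that the paper itself supplies no proof of this statement but imports it by citation from \cite{zhang2020towards}; your argument is essentially the same classical one underlying that result, so there is nothing to reconcile.
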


\subsection{Social Network Advertisement}
Consider, the social network information for the people $\mathcal{U} =\{u_{1},u_{2},\ldots,u_{n}\}$ represented as a weighted undirected graph $\mathcal{G}(\mathcal{U},\mathcal{E}, \mathcal{W})$. The edge $\mathcal{E}(\mathcal{G})$ represents a binary relationship among users, and the edge weight function $\mathcal{W}$ is associated with its corresponding influence probability, i.e., $\mathcal{W}: \mathcal{E}(\mathcal{G}) \longrightarrow (0,1]$. If an edge $(u_{i}u_{j}) \notin \mathcal{E}(\mathcal{G})$, then the influence probability $\mathcal{W}_{(u_{i}u_{j})} = 0$. To study the diffusion process of social networks, we consider \emph{Independent Cascade Model} stated in Definition \ref{Def:2}. 

\begin{definition}[Independent Cascade Model]\label{Def:2}
The Independent Cascade (IC) Model is a stochastic diffusion model defined over a undirected graph $\mathcal{G}(\mathcal{U},\mathcal{E}, \mathcal{W})$, where each edge $(u_{i}u_{j}) \in \mathcal{E}$ is associated with an influence probability $\mathcal{W}_{(u_{i}u_{j})} \in (0,1]$. 
\end{definition}

This can be formalized by an influence function $\mathcal{I}^{\mathcal{G}}$ which maps each node in the network to its corresponding influence value, i.e., $\mathcal{I}^{\mathcal{G}}: 2^{\mathcal{U}} \longrightarrow \mathbb{R}^{+}$ with $\mathcal{I}^{\mathcal{G}}(\emptyset) = 0$.

\begin{theorem}\label{Th:I^G} \cite{kempe2003maximizing}
The influence function $\mathcal{I}^{\mathcal{G}}$ is non-negative, monotone, and submodular under the \textsc{Independent Cascade Model}.
\end{theorem}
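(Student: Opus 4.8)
The plan is to prove all three properties simultaneously via the classical \emph{live-edge} (or reachability) reformulation of the Independent Cascade process. First I would define, for a given seed set $S \subseteq \mathcal{U}$, a random subgraph $g$ of $\mathcal{G}$ obtained by independently retaining each edge $(u_{i}u_{j}) \in \mathcal{E}$ with probability $\mathcal{W}_{(u_{i}u_{j})}$ and discarding it otherwise; call such a $g$ a \emph{live-edge graph}, and let $\mathcal{R}_{g}(S)$ denote the set of vertices reachable from some vertex of $S$ using only retained edges. The first key step is to argue that the law of the final activated set under the IC diffusion coincides with the law of $\mathcal{R}_{g}(S)$. This follows from a deferred-decisions coupling: reveal the outcome of an edge $(u_{i}u_{j})$ only at the moment $u_{i}$ first becomes active and attempts to activate $u_{j}$, so that each edge coin is flipped exactly once with the correct bias and the order of revelation is immaterial; formally one fixes all edge coins in advance and shows by induction on diffusion rounds that the active set after round $t$ equals the set of vertices within live-edge distance $t$ of $S$. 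Consequently $\mathcal{I}^{\mathcal{G}}(S) = \mathbb{E}_{g}\big[\,|\mathcal{R}_{g}(S)|\,\big]$, where the expectation is over the product distribution on edges.

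Given this identity, the three properties reduce to statements about the deterministic quantity $|\mathcal{R}_{g}(S)|$ for a fixed $g$, since expectation preserves non-negativity, monotonicity, and submodularity (the last because $\mathcal{I}^{\mathcal{G}}$ is a nonnegative convex combination of the functions $S \mapsto |\mathcal{R}_{g}(S)|$). Non-negativity and normalization are immediate: $|\mathcal{R}_{g}(S)| \ge 0$ and $\mathcal{R}_{g}(\emptyset) = \emptyset$. For monotonicity, $A \subseteq B$ implies $\mathcal{R}_{g}(A) \subseteq \mathcal{R}_{g}(B)$ (any path witnessing reachability from $A$ also witnesses it from $B$), hence $|\mathcal{R}_{g}(A)| \le |\mathcal{R}_{g}(B)|$. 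For submodularity, I would observe that $\mathcal{R}_{g}(S) = \bigcup_{u \in S} \mathcal{R}_{g}(\{u\})$, so $S \mapsto |\mathcal{R}_{g}(S)|$ is exactly a coverage function over the universe $\mathcal{U}$ with the sets $\{\mathcal{R}_{g}(\{u\})\}_{u \in \mathcal{U}}$; such coverage functions are submodular by the standard argument that adding a fixed element to a smaller collection covers at least as many new vertices as adding it to a larger one. Averaging over $g$ then yields that $\mathcal{I}^{\mathcal{G}}$ is non-negative, monotone, and submodular.

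I expect the only genuinely delicate step to be the first one — rigorously justifying that the sequential, time-indexed activation dynamics and the static live-edge reachability model induce the same distribution on final active sets, the subtlety being that an edge $(u_{i}u_{j})$ is "used" only if $u_{i}$ ever activates, so one must ensure that conditioning on the diffusion history up to that point does not bias that edge's retained/discarded outcome. Everything after that is routine: non-negativity and monotonicity are one-line monotonicity-of-reachability arguments, and submodularity is the textbook coverage-function fact combined with closure of submodularity under nonnegative linear combinations. Since this is a restatement of the Kempe--Kleinberg--Tardos result cited as \cite{kempe2003maximizing}, I would also note that the statement is established there and present the above only as a self-contained sketch.
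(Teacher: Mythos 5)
Your proposal is correct, but note that the paper itself offers no proof of this statement: it is imported verbatim from the cited reference \cite{kempe2003maximizing}, whose argument is exactly the live-edge coupling and reachability/coverage decomposition you sketch. So your write-up matches the (cited) proof in approach, including the one genuinely delicate step you flag, namely the deferred-decisions coupling showing the diffusion and the static live-edge model induce the same distribution on final active sets.
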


\subsection{Influence Model}
In our context, to jointly measure the influence of billboards and social networks, we propose an influence model, as defined in Definition \ref{Def:CIM}.

\begin{definition}[Influence Model]\label{Def:CIM}
Given a subset of billboard slots $\mathcal{S} \in \mathbb{BS}$ and a set of seed nodes $\mathcal{P} \in \mathcal{G}$, the influence can be denoted as $\Phi(\mathcal{S},\mathcal{P})$ and stated in Equation \ref{Eq:CIM}. 
\begin{equation} \label{Eq:CIM}
\Phi(\mathcal{S},\mathcal{P}) = \mathcal{I}(\mathcal{S}) + \mathcal{I}^{\mathcal{G}}(\mathcal{P}) + \Psi(\mathcal{S},\mathcal{P})
\end{equation}
where $\mathcal{I}(\mathcal{S})$ is the influence of slots set $\mathcal{S}$, $\mathcal{I}^{\mathcal{G}}(\mathcal{P})$ denotes the influence from seed set $\mathcal{P}$, and $\Psi(\mathcal{S},\mathcal{P})$ denotes the interaction effect between slots and seed nodes.
\end{definition}
The influence function $\Phi$ maps each possible combination from the set of slots and seed nodes to their corresponding combined influence value, i.e., $\Phi: 2^{\mathbb{BS}} \times 2^{\mathcal{G}} \longrightarrow \mathbb{R}^{+}_{0}$ with $\Phi(\emptyset, \emptyset)=0$. The interaction effect is considered in the existing literature \cite{pavlou2000measuring,sundar2017using,deighton1984interaction}. However, there is no specific mechanism to compute it. So, in our problem context, we mathematically formulate the interaction effects and define it in Definition \ref{Def:Interaction_Effect}.

\begin{definition}[Interaction Effect]\label{Def:Interaction_Effect}
An interaction effect quantifies how the combined influence of billboards and social media deviates from their independent effects. Mathematically,

\begin{equation}
{\small
\Psi(\mathcal{S},\mathcal{P}) = \rho \cdot \sum_{u \in \mathcal{U}}( 1 - \prod_{b \in \mathcal{S}} (1 - Pr(u, b))) \cdot \sum_{v \in \mathcal{P}}Pr^{'}(u, v)
}
\end{equation}
where, $\rho \in [0,1]$ controlling the strength of the interaction, $1 - \prod_{b \in \mathcal{S}} (1 - Pr(u, b))$ is the probability user $u$ being influenced by at least one billboard slot, and $Pr^{'}(u, v)$ is the probability that user $u$ is activated by seed node $v$.
\end{definition}

\begin{proposition}\label{Prop:OrthantWiseBisubmodular}
Let $\mathcal{I}: 2^{\mathbb{BS}} \to \mathbb{R}$ and $\mathcal{I}^{G}: 2^{\mathcal{V}(\mathcal{G})} \to \mathbb{R}$ be submodular functions. Define $\mathcal{H}: 2^{\mathbb{BS}} \times 2^{\mathcal{V}(\mathcal{G})} \to \mathbb{R}, \quad \mathcal{H}(\mathcal{S}, \mathcal{P}) := \mathcal{I}(\mathcal{S}) + \mathcal{I}^{G}(\mathcal{P})$. Then $\mathcal{H}$ is orthant-wise bisubmodular.
\end{proposition}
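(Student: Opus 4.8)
The plan is to exploit the fact that $\mathcal{H}$ is \emph{separable}: its value is the sum of a function of $\mathcal{S}$ alone and a function of $\mathcal{P}$ alone. Consequently every marginal gain in one coordinate is completely insensitive to the current state of the other coordinate, and the two defining inequalities of (orthant-wise) bisubmodularity collapse to the ordinary submodularity inequalities for $\mathcal{I}$ and $\mathcal{I}^{G}$ respectively.

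Concretely, I would proceed as follows. Fix $\mathcal{S} \subseteq \mathcal{S}' \subseteq \mathbb{BS}$ and $\mathcal{P} \subseteq \mathcal{P}' \subseteq \mathcal{V}(\mathcal{G})$, and take any $b \in \mathbb{BS} \setminus \mathcal{S}'$. Compute the marginal gain of $b$ in the first coordinate:
\[
\mathcal{H}(\mathcal{S} \cup \{b\}, \mathcal{P}) - \mathcal{H}(\mathcal{S}, \mathcal{P}) = \bigl(\mathcal{I}(\mathcal{S} \cup \{b\}) + \mathcal{I}^{G}(\mathcal{P})\bigr) - \bigl(\mathcal{I}(\mathcal{S}) + \mathcal{I}^{G}(\mathcal{P})\bigr) = \mathcal{I}(\mathcal{S} \cup \{b\}) - \mathcal{I}(\mathcal{S}),
\]
and likewise $\mathcal{H}(\mathcal{S}' \cup \{b\}, \mathcal{P}') - \mathcal{H}(\mathcal{S}', \mathcal{P}') = \mathcal{I}(\mathcal{S}' \cup \{b\}) - \mathcal{I}(\mathcal{S}')$. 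Since $\mathcal{I}$ is submodular and $\mathcal{S} \subseteq \mathcal{S}'$, the former is at least the latter, which is exactly the first defining inequality. By the symmetric computation with any $v \in \mathcal{V}(\mathcal{G}) \setminus \mathcal{P}'$, the second-coordinate marginal gains equal $\mathcal{I}^{G}(\mathcal{P} \cup \{v\}) - \mathcal{I}^{G}(\mathcal{P})$ and $\mathcal{I}^{G}(\mathcal{P}' \cup \{v\}) - \mathcal{I}^{G}(\mathcal{P}')$, and submodularity of $\mathcal{I}^{G}$ gives the second inequality. Together these establish orthant-wise bisubmodularity of $\mathcal{H}$.

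There is essentially no hard step here; the proposition is a direct structural consequence of separability. The only point that warrants a sentence of care is making explicit that the $\mathcal{I}^{G}(\mathcal{P})$ term cancels in the first-coordinate difference (and $\mathcal{I}(\mathcal{S})$ in the second), so that no cross-term can spoil the inequality — this is precisely what fails once the genuine interaction term $\Psi$ of Definition~\ref{Def:Interaction_Effect} is added to $\Phi$, which is why the full objective is only \emph{approximately} bisubmodular and motivates Definition~\ref{Def:epsilon-approximate-bisubmodular}. If useful, I would also note in passing that the same computation shows $\mathcal{H}$ inherits normalization and monotonicity from $\mathcal{I}$ and $\mathcal{I}^{G}$, though only bisubmodularity is claimed in the statement.
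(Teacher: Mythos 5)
Your proof is correct and follows essentially the same route as the paper: exploit separability of $\mathcal{H}$, observe that the term in the untouched coordinate cancels in every marginal, and reduce each inequality to the submodularity of $\mathcal{I}$ or $\mathcal{I}^{G}$. In fact, by allowing $\mathcal{S}\subseteq\mathcal{S}'$ and $\mathcal{P}\subseteq\mathcal{P}'$ to differ simultaneously you establish the slightly stronger nested-pair inequality of the paper's bisubmodularity definition, which immediately implies the orthant-wise statement (the paper's own proof just fixes one coordinate and argues submodularity of the restriction), so nothing is missing.
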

\begin{proof}
We prove the two required conditions separately. Fix any $\mathcal{P} \subseteq \mathcal{V}(\mathcal{G})$. Consider the function $\mathcal{H}_{\mathcal{P}}(\mathcal{S}) = \mathcal{H}(\mathcal{S}, \mathcal{P}) = \mathcal{I}(\mathcal{S}) + \mathcal{I}^{G}(\mathcal{P})$.
Since $\mathcal{I}^{G}(\mathcal{P})$ is constant with respect to $\mathcal{S}$, and $\mathcal{I}()$ is submodular by Theorem \ref{Th:I}, it follows that $\mathcal{H}_{\mathcal{P}}(\mathcal{S})$ is submodular. Formally, for all $\mathcal{S}_1 \subseteq \mathcal{S}_2 \subseteq \mathbb{BS}$ and $s \in \mathbb{BS} \setminus \mathcal{S}_2$, submodularity of $\mathcal{H}_{\mathcal{P}}$ implies $\mathcal{H}(\mathcal{S}_1 \cup \{s\}, \mathcal{P}) -\mathcal{H}(\mathcal{S}_1, \mathcal{P}) \ge \mathcal{H}(\mathcal{S}_2 \cup \{s\}, \mathcal{P}) - \mathcal{H}(\mathcal{S}_2, \mathcal{P})$, which reduces to $\mathcal{I}(\mathcal{S}_1 \cup \{s\}) - \mathcal{I}(\mathcal{S}_1) \ge \mathcal{I}(\mathcal{S}_2 \cup \{s\}) - \mathcal{I}(\mathcal{S}_2)$, holds by the submodularity of $\mathcal{I}()$. Now, Fix any $\mathcal{S} \subseteq \mathbb{BS}$. Consider the function $\mathcal{H}_{\mathcal{S}}(\mathcal{P}) = \mathcal{H}(\mathcal{S}, \mathcal{P}) = \mathcal{I}(\mathcal{S}) + \mathcal{I}^{G}(\mathcal{P})$.
Since $\mathcal{I}(\mathcal{S})$ is constant with respect to $\mathcal{P}$, and $\mathcal{I^{G}}$ is submodular by Theorem \ref{Th:I^G}, it follows that $\mathcal{H}_{\mathcal{S}}(\mathcal{P})$ is submodular. Formally, for all $\mathcal{P}_1 \subseteq \mathcal{P}_2 \subseteq \mathcal{V}(\mathcal{G})$ and  $p \in \mathcal{V}(\mathcal{G}) \setminus \mathcal{P}_2$, submodularity of $H_{\mathcal{S}}$ implies $\mathcal{H}(\mathcal{S}, \mathcal{P}_1 \cup \{p\}) - \mathcal{H}(\mathcal{S}, \mathcal{P}_1) \ge \mathcal{H}(\mathcal{S}, \mathcal{P}_2 \cup \{p\}) - \mathcal{H}(\mathcal{S}, \mathcal{P}_2)$, which reduces to $\mathcal{I}^{G}(\mathcal{P}_1 \cup \{p\}) - \mathcal{I}^{G}(\mathcal{P}_1) \ge \mathcal{I}^{G}(\mathcal{P}_2 \cup \{p\}) - \mathcal{I}^{G}(\mathcal{P}_2)$, which holds by the submodularity of $\mathcal{I}^{G}()$.
Since $\mathcal{H}(\mathcal{S}, \mathcal{P})$ is submodular in $\mathcal{S}$ for fixed $\mathcal{P}$, and submodular in $\mathcal{P}$ for fixed $\mathcal{S}$, $\mathcal{H}()$ is orthant-wise bisubmodular.
\end{proof}

\begin{proposition}\label{Prop:ApproxPsi}
Let
{\small
$$
\Psi(\mathcal{S}, \mathcal{P}) = \rho \cdot \sum_{u \in \mathcal{U}} \left(1 - \prod_{b \in \mathcal{S}} (1 - \text{Pr}(u, b))\right) \cdot \sum_{v \in \mathcal{P}} \text{Pr}^{'}(u, v),
$$
}

where $\rho > 0$, and all probabilities lie in $[0,1]$. Then $\Psi$ is approximately bisubmodular, i.e., for $\mathcal{S}_1 \subseteq \mathcal{S}_2$, $\mathcal{P}_1 \subseteq \mathcal{P}_2$, and $b \notin \mathcal{S}_2$, $v \notin \mathcal{P}_2$,
{\small
$$
\begin{aligned}
&\Psi(\mathcal{S}_1 \cup \{b\}, \mathcal{P}_1) - \Psi(\mathcal{S}_1, \mathcal{P}_1) \\
&\geq \Psi(\mathcal{S}_2 \cup \{b\}, \mathcal{P}_2) - \Psi(\mathcal{S}_2, \mathcal{P}_2) - \varepsilon,
\end{aligned}
$$
}
and similarly in $\mathcal{P}$, where
{\small
$$
\varepsilon = \rho \cdot \max_{u \in \mathcal{U}} \text{Pr}(u, b) \cdot \sum_{v \in \mathbb{V}} \text{Pr}^{'}(u, v).
$$}
\end{proposition}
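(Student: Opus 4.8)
The plan is to exploit the product structure of $\Psi$. Writing $g_u(\mathcal{S}) := 1 - \prod_{b \in \mathcal{S}}(1 - \text{Pr}(u,b))$ and $h_u(\mathcal{P}) := \sum_{v \in \mathcal{P}} \text{Pr}^{'}(u,v)$, we have $\Psi(\mathcal{S},\mathcal{P}) = \rho \sum_{u \in \mathcal{U}} g_u(\mathcal{S})\, h_u(\mathcal{P})$. First I would record the structural facts that drive everything: $g_u$ is a coverage-type function, hence monotone and submodular, with $0 \le g_u(\cdot) \le 1$; and $h_u$ is modular (linear) and monotone, with $0 \le h_u(\mathcal{P}) \le H_u := \sum_{v \in \mathbb{V}} \text{Pr}^{'}(u,v)$. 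In particular, the marginal of a new slot $b \notin \mathcal{S}$ has the closed form $g_u(\mathcal{S} \cup \{b\}) - g_u(\mathcal{S}) = \text{Pr}(u,b)\prod_{b' \in \mathcal{S}}(1 - \text{Pr}(u,b'))$, which I abbreviate as $\delta_u(\mathcal{S},b)$. Since $\mathcal{S}_1 \subseteq \mathcal{S}_2$ forces $\prod_{b' \in \mathcal{S}_1}(1-\text{Pr}(u,b')) \ge \prod_{b' \in \mathcal{S}_2}(1-\text{Pr}(u,b'))$, we obtain $\delta_u(\mathcal{S}_1,b) \ge \delta_u(\mathcal{S}_2,b) \ge 0$ --- this is exactly submodularity of $g_u$ written out --- and also $\delta_u(\mathcal{S}_1,b) \le \text{Pr}(u,b)$.

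Next I would expand the slot-direction marginal of $\Psi$. Because $h_u(\mathcal{P}_i)$ is unaffected by adding $b$, we get $\Psi(\mathcal{S}_i \cup \{b\},\mathcal{P}_i) - \Psi(\mathcal{S}_i,\mathcal{P}_i) = \rho \sum_u \delta_u(\mathcal{S}_i,b)\, h_u(\mathcal{P}_i)$ for $i = 1,2$. The object to control is the difference of these two marginals, $\rho \sum_u \big[\delta_u(\mathcal{S}_1,b) h_u(\mathcal{P}_1) - \delta_u(\mathcal{S}_2,b) h_u(\mathcal{P}_2)\big]$, and the claim is that it is at least $-\varepsilon$. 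The crux of the argument is to add and subtract the mixed term $\delta_u(\mathcal{S}_1,b) h_u(\mathcal{P}_2)$, splitting the summand as $\delta_u(\mathcal{S}_1,b)\big(h_u(\mathcal{P}_1) - h_u(\mathcal{P}_2)\big) + \big(\delta_u(\mathcal{S}_1,b) - \delta_u(\mathcal{S}_2,b)\big) h_u(\mathcal{P}_2)$. The second piece is $\ge 0$ since $\delta_u$ is nonincreasing in its set argument and $h_u \ge 0$; the first piece is $\le 0$ because $h_u$ is monotone and $\mathcal{P}_1 \subseteq \mathcal{P}_2$, but its absolute value is at most $\delta_u(\mathcal{S}_1,b)\,h_u(\mathcal{P}_2) \le \text{Pr}(u,b)\,H_u$. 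Summing over $u$ and multiplying by $\rho$ lower-bounds the difference of marginals by $-\rho \sum_u \text{Pr}(u,b) H_u$, which is of the form of the $\varepsilon$ stated in the proposition.

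For the seed direction I would run the mirror-image argument, which is slightly simpler because $h_u$ is modular: adding $v \notin \mathcal{P}_i$ gives marginal $\rho \sum_u g_u(\mathcal{S}_i)\,\text{Pr}^{'}(u,v)$, so the difference of the two marginals is $\rho \sum_u \big(g_u(\mathcal{S}_1) - g_u(\mathcal{S}_2)\big)\text{Pr}^{'}(u,v)$. Here $g_u(\mathcal{S}_1) \le g_u(\mathcal{S}_2)$ by monotonicity, so the difference is nonpositive, but $g_u(\mathcal{S}_2) - g_u(\mathcal{S}_1) \le g_u(\mathcal{S}_2) \le 1$, so the difference is at least $-\rho \sum_u \text{Pr}^{'}(u,v)$, again a slack of the same shape. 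Combining both directions yields $\varepsilon$-approximate bisubmodularity in the sense of Definition \ref{Def:epsilon-approximate-bisubmodular}.

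I expect the main obstacle to be conceptual rather than computational: $\Psi$ is a nonnegative combination of products $g_u(\mathcal{S})\,h_u(\mathcal{P})$, and a product of a submodular and a modular monotone function is in general neither submodular nor bisubmodular, so no exact inequality can hold. The work is therefore in isolating the cross term that breaks bisubmodularity --- accomplished by the add-and-subtract step --- and showing that, because every probability lies in $[0,1]$ and $h_u$ is capped by $H_u$, this term is uniformly bounded; that uniform bound is precisely what the additive $\varepsilon$ in the definition is meant to absorb.
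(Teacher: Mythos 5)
Your proof is correct and follows essentially the same route as the paper: decompose $\Psi$ into the coverage term $\varphi_u(\mathcal{S})$ times the modular term $\theta_u(\mathcal{P})$, compare the slot-direction marginals by an add-and-subtract of the mixed term, bound the nonpositive cross term using the probability caps, and treat the seed direction as the easier mirror image. Note only that, exactly as in the paper's own argument, the slack you derive is a sum over users, $\rho\sum_{u}\Pr(u,b)\,H_u$, which can exceed the single-user quantity $\varepsilon$ displayed in the statement by up to a factor $|\mathcal{U}|$; this looseness is inherited from the paper's proof rather than introduced by your argument.
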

\begin{proof}
Let $\varphi_u(\mathcal{S}) = 1 - \prod_{b \in \mathcal{S}} (1 - \Pr(u, b))$ and $\theta_u(\mathcal{P}) = \sum_{v \in \mathcal{P}} \Pr'(u, v)$. Then, 
\[
\Psi(\mathcal{S}, \mathcal{P}) = \rho \cdot \sum_{u \in \mathcal{U}} \varphi_u(\mathcal{S}) \cdot \theta_u(\mathcal{P}).
\]
Since $\varphi_u(\mathcal{S})$ is monotone submodular and $\theta_u(\mathcal{P})$ is modular, we consider the marginal gain as,
$$
\begin{aligned}
&\Delta_{\mathcal{S}, b}^{\Psi}(\mathcal{P}) = \Psi(\mathcal{S} \cup \{b\}, \mathcal{P}) - \Psi(\mathcal{S}, \mathcal{P}) \\
&=\rho \sum_{u \in \mathcal{U}} [\varphi_u(\mathcal{S} \cup \{b\}) - \varphi_u(\mathcal{S})] \cdot \theta_u(\mathcal{P}).
\end{aligned}
$$
Because $\varphi_u$ is submodular, its marginal gain decreases as $\mathcal{S}$ grows. Since $\theta_u$ is increasing in $\mathcal{P}$, we get,
\[
\Delta_{\mathcal{S}_1, b}^{\Psi}(\mathcal{P}_1) \ge \Delta_{\mathcal{S}_2, b}^{\Psi}(\mathcal{P}_2) - \rho \cdot \sum_{u \in \mathcal{U}} \Delta \varphi_u \cdot \Delta \theta_u.
\]

Assuming $\Pr(u,b) \le \alpha$, $\Pr'(u,v) \le \beta$, and $|\mathcal{P}_2| \le k$, then
\[
\Delta \varphi_u \le \alpha, \quad \Delta \theta_u \le k\beta \quad \Rightarrow \quad \varepsilon \le \rho \cdot |\mathcal{U}| \cdot \alpha \cdot k\beta.
\]

Thus, the approximate bisubmodularity inequality holds with additive error $\varepsilon$.
\end{proof}

\begin{theorem}\label{Th:Non-bisubmodular_Proof}
Given a trajectory database $\mathbb{D}$, billboard slots $\mathbb{BS}$, and Social Network Users $\mathcal{V}(\mathcal{G})$, the influence function $\Phi(.,.)$ is non-negative, monotone, and $~\epsilon-$approximately bisubmodular.
\end{theorem}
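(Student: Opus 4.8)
The plan is to exploit the additive decomposition $\Phi(\mathcal{S},\mathcal{P}) = \mathcal{H}(\mathcal{S},\mathcal{P}) + \Psi(\mathcal{S},\mathcal{P})$, where $\mathcal{H}(\mathcal{S},\mathcal{P}) = \mathcal{I}(\mathcal{S}) + \mathcal{I}^{\mathcal{G}}(\mathcal{P})$, and to inherit each of the three asserted properties from the summands through results already in hand: Theorems~\ref{Th:I} and~\ref{Th:I^G} for $\mathcal{I}$ and $\mathcal{I}^{\mathcal{G}}$, Proposition~\ref{Prop:OrthantWiseBisubmodular} for $\mathcal{H}$, and Proposition~\ref{Prop:ApproxPsi} for $\Psi$. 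The driving observations are that non-negativity and coordinate-wise monotonicity are preserved under non-negative sums, and that the family of $\varepsilon$-approximately bisubmodular functions (Definition~\ref{Def:epsilon-approximate-bisubmodular}) is closed under addition with the additive errors summing; since $\mathcal{H}$ turns out to be exactly bisubmodular (error $0$), $\Phi = \mathcal{H} + \Psi$ will inherit the error $\varepsilon$ of $\Psi$.

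For non-negativity, $\mathcal{I}(\mathcal{S}) \ge 0$ and $\mathcal{I}^{\mathcal{G}}(\mathcal{P}) \ge 0$ by Theorems~\ref{Th:I} and~\ref{Th:I^G}, and writing $\varphi_u(\mathcal{S}) := 1 - \prod_{b \in \mathcal{S}}(1-\Pr(u,b))$ and $\theta_u(\mathcal{P}) := \sum_{v \in \mathcal{P}}\Pr'(u,v)$ we have $\Psi(\mathcal{S},\mathcal{P}) = \rho \sum_{u} \varphi_u(\mathcal{S})\,\theta_u(\mathcal{P}) \ge 0$, because $\rho \ge 0$, each $\varphi_u(\mathcal{S}) \in [0,1]$, and each $\theta_u(\mathcal{P}) \ge 0$ (all probabilities lie in $[0,1]$). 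For monotonicity, $\mathcal{I}$ and $\mathcal{I}^{\mathcal{G}}$ are monotone by the same theorems, and both coordinate-wise marginals of $\Psi$ are non-negative: adding a slot $b$ multiplies the product defining $\varphi_u$ by $(1-\Pr(u,b)) \le 1$ and hence cannot decrease $\varphi_u$, while adding a seed $v$ increases $\theta_u$ by $\Pr'(u,v) \ge 0$; multiplying by the remaining non-negative factor and summing over $u$ keeps the marginal $\ge 0$. A sum of coordinate-wise monotone functions is coordinate-wise monotone, so $\Phi$ is monotone in $\mathcal{S}$ for fixed $\mathcal{P}$ and in $\mathcal{P}$ for fixed $\mathcal{S}$, and $\Phi \ge 0$.

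For $\varepsilon$-approximate bisubmodularity, fix $\mathcal{S}_1 \subseteq \mathcal{S}_2 \subseteq \mathbb{BS}$, $\mathcal{P}_1 \subseteq \mathcal{P}_2 \subseteq \mathcal{V}(\mathcal{G})$, and $b \in \mathbb{BS}\setminus \mathcal{S}_2$. Decomposing the marginal gain of $\Phi$ into those of $\mathcal{H}$ and $\Psi$,
\begin{align*}
\Phi(\mathcal{S}_1 \cup \{b\},\mathcal{P}_1) - \Phi(\mathcal{S}_1,\mathcal{P}_1)
 &= \big[\mathcal{H}(\mathcal{S}_1 \cup \{b\},\mathcal{P}_1) - \mathcal{H}(\mathcal{S}_1,\mathcal{P}_1)\big] \\
 &\quad + \big[\Psi(\mathcal{S}_1 \cup \{b\},\mathcal{P}_1) - \Psi(\mathcal{S}_1,\mathcal{P}_1)\big],
\end{align*}
I would lower bound the first bracket by $\mathcal{H}(\mathcal{S}_2 \cup \{b\},\mathcal{P}_2) - \mathcal{H}(\mathcal{S}_2,\mathcal{P}_2)$ via Proposition~\ref{Prop:OrthantWiseBisubmodular} (for the separable $\mathcal{H}$ the marginal of $b$ is independent of the seed argument, so the orthant-wise statement is exactly the $\varepsilon = 0$ instance of Definition~\ref{Def:epsilon-approximate-bisubmodular}), and the second bracket by $\Psi(\mathcal{S}_2 \cup \{b\},\mathcal{P}_2) - \Psi(\mathcal{S}_2,\mathcal{P}_2) - \varepsilon$ via Proposition~\ref{Prop:ApproxPsi}. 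Adding the two bounds yields $\Phi(\mathcal{S}_1 \cup \{b\},\mathcal{P}_1) - \Phi(\mathcal{S}_1,\mathcal{P}_1) \ge \Phi(\mathcal{S}_2 \cup \{b\},\mathcal{P}_2) - \Phi(\mathcal{S}_2,\mathcal{P}_2) - \varepsilon$, the first inequality of Definition~\ref{Def:epsilon-approximate-bisubmodular}; the seed-coordinate inequality for $v \in \mathcal{V}(\mathcal{G})\setminus \mathcal{P}_2$ follows by the identical argument using the seed clauses of Propositions~\ref{Prop:OrthantWiseBisubmodular} and~\ref{Prop:ApproxPsi}.

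The argument is essentially bookkeeping, so the care points are cosmetic rather than mathematical: first, reconciling the ``orthant-wise'' wording of Proposition~\ref{Prop:OrthantWiseBisubmodular} with the two-set form of Definition~\ref{Def:epsilon-approximate-bisubmodular}, which is legitimate here only because $\mathcal{H}$ is separable; and second, making sure the additive constant from Proposition~\ref{Prop:ApproxPsi} is taken uniformly over the added element (e.g.\ the constant of Proposition~\ref{Prop:ApproxPsi} equipped with an additional outer maximum over the slot $b$ being added, the analogous constant for the seed direction, and finally the larger of the two), so that a single $\varepsilon$ certifies the definition for every admissible $\mathcal{S}_1 \subseteq \mathcal{S}_2$, $\mathcal{P}_1 \subseteq \mathcal{P}_2$ and every added $b$ or $v$. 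With that constant fixed, the displayed inequalities hold verbatim and the theorem follows.
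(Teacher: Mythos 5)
Your proposal is correct and follows essentially the same route as the paper's proof: decompose $\Phi = \mathcal{H} + \Psi$ with $\mathcal{H} = \mathcal{I} + \mathcal{I}^{\mathcal{G}}$, inherit non-negativity and monotonicity from Theorems~\ref{Th:I} and~\ref{Th:I^G} plus a direct check on $\Psi$, and combine Proposition~\ref{Prop:OrthantWiseBisubmodular} (exact, error $0$) with Proposition~\ref{Prop:ApproxPsi} (error $\varepsilon$) so the sum is $\varepsilon$-approximately bisubmodular. If anything, your write-up is slightly more careful than the paper's, since you explicitly justify passing from the orthant-wise statement for $\mathcal{H}$ to the two-set form of Definition~\ref{Def:epsilon-approximate-bisubmodular} via separability, a step the paper leaves implicit.
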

\begin{proof}
Each term in $\Phi(\mathcal{S},\mathcal{P})$ represents a probability-based influence function. First, $\mathcal{I}(\mathcal{S})$ is a sum of probabilities, hence $\mathcal{I}(\mathcal{S}) \geq 0$. Secondly, $\mathcal{I}^{\mathcal{G}}(\mathcal{P})$ follows the Independent Cascade Model (ICM)\cite{kempe2003maximizing} and represents an expected number of influenced users, ensuring $\mathcal{I}^{\mathcal{G}}(\mathcal{P}) \geq 0$. The interaction effect $\Psi(\mathcal{S},\mathcal{P})$ is a product of two non-negative influence terms, guaranteeing $\Psi(\mathcal{S},\mathcal{P}) \geq 0$. Thus, $\Phi(\mathcal{S},\mathcal{P}) \geq 0$. Next, we prove monotonicity. In $\mathcal{I}(\mathcal{S})$, adding a billboard $b$ increases the probability of influencing users, ensuring $\mathcal{I}(\mathcal{S})$ is increasing. The function $\mathcal{I}^{\mathcal{G}}(\mathcal{P})$ is known to be monotone under ICM \cite{kempe2003maximizing}. The interaction effect $\Psi(\mathcal{S},\mathcal{P})$ increases as either $\mathcal{S}$ or $\mathcal{P}$ grows because increasing the influence in either component leads to a larger combined effect. Hence, $\Phi(\mathcal{S},\mathcal{P})$ is monotone. We now consider the bisubmodularity properties of $\Phi$. Both $\mathcal{I}(\mathcal{S})$ and $\mathcal{I}^{\mathcal{G}}(\mathcal{P})$ are submodular set functions. Therefore, the function $\mathcal{H}(\mathcal{S}, \mathcal{P}) := \mathcal{I}(\mathcal{S}) + \mathcal{I}^{\mathcal{G}}(\mathcal{P})$ is orthant-wise bisubmodular (see Proposition \ref{Prop:OrthantWiseBisubmodular}). The term $\Psi(\mathcal{S}, \mathcal{P})$ is a product of a monotone, submodular function $\varphi_u(\mathcal{S}) = 1 - \prod_{b \in \mathcal{S}} (1 - \Pr(u, b))$ and a modular function $\theta_u(\mathcal{P}) = \sum_{v \in \mathcal{P}} \Pr'(u, v)$. As shown in Proposition \ref{Prop:ApproxPsi}, such a product is not strictly submodular, but it satisfies an $\epsilon$-approximate bisubmodularity condition $\Psi(\mathcal{S}_1 \cup \{b\}, \mathcal{P}_1) - \Psi(\mathcal{S}_1, \mathcal{P}_1) \ge \Psi(\mathcal{S}_2 \cup \{b\}, \mathcal{P}_2) - \Psi(\mathcal{S}_2, \mathcal{P}_2) - \epsilon,$ for all $\mathcal{S}_1 \subseteq \mathcal{S}_2$, $\mathcal{P}_1 \subseteq \mathcal{P}_2$, and some $\epsilon > 0$ dependent on the influence probabilities and user set. Since $\Phi(\mathcal{S}, \mathcal{P})$ is the sum of two submodular functions and one $\epsilon$-approximately bisubmodular function, the overall function $\Phi$ remains $\epsilon$-approximately bisubmodular.
\end{proof}

\subsection{Regret Model.}
Assume there are $n$ advertisers $\mathcal{A} = \{a_{1},a_{2},\ldots,a_{n}\}$ and an influence provider $\mathcal{Z}$. The advertisers submit the campaign proposal to the influence provider $\mathcal{Z}$ with influence demand $(\sigma)$  and corresponding payment $(\mathcal{K})$. The influence provider has access to the advertiser database, and this can be represented in the form of a tuple $<a_{i},\sigma_{i}, \mathcal{K}_{i}>$ for all $i \in [n]$. Let an allocation $\mathcal{N} = \{\left(\mathcal{S}_{i}, \mathcal{P}_{i}\right) \mid i = 1, \dots, n\}$ where $\mathcal{S}$ be the set of billboard slots and $\mathcal{P}$  be the seed nodes assigned to the advertisers. Now, as per the payment rules, if $Supply > Demand$, then the full payment will be made to the influence provider; else, a partial payment will be made. Note that given two slot sets $\mathcal{S}_{1}$ and $\mathcal{S}_{2}$ and seed sets $\mathcal{P}_{1}$ and $\mathcal{P}_{2}$ with $|\mathcal{S}_{1}| << |\mathcal{S}_{2}|$ and $|\mathcal{P}_{1}| << |\mathcal{P}_{2}|$, in practice it is desirable to achieve low regret with less number of slots and seeds. By drawing on the inspiration from existing literature \cite{boyd2004convex,aslay2014viral}, we add a penalty term to discourage the use of larger slot and seed sets. Next, we define a regret model in Definition \ref{Def:Combine_Regret_Model}.

\begin{definition}[Regret Model]\label{Def:Combine_Regret_Model}
Let $\mathcal{N}_{a_i}$ be the set of allocated subsets of billboard slots or seed nodes or both for the advertiser $a_{i}$, and the regret associated with this allocation is denoted by $\mathcal{R}(\mathcal{N}_{a_i})$.
{\small
 \[
  \mathcal{R}(\mathcal{N}_{a_i}) = 
\begin{cases}
    \mathcal{K}_{a_i} \cdot (1- \gamma \cdot \frac{\min(\Phi(\mathcal{S}_{a_i},\mathcal{P}_{a_i}), \sigma_{a_i})}{\sigma_{a_i}}) \\ + ~\delta \cdot \log(1+|\mathcal{S}_{a_i}|+|\mathcal{P}_{a_i}|) \\
\end{cases}
\]}
Here, the fraction $ \frac{\min(\Phi(\mathcal{S}_{a_i},\mathcal{P}_{a_i}),\sigma_{a_i})}{\sigma_{a_i}}$ is part of the satisfied influence for the advertiser $a_i$. The $\gamma \in [0,1]$ and $\delta \in [0,1]$ are the unsatisfied penalty and cardinality penalty, respectively.
\end{definition}

\begin{proposition}\label{Prop:3}
Given a trajectory database $\mathbb{D}$, billboard slot information $\mathbb{BS}$, and social network users $\mathcal{V}(\mathcal{G})$, the regret function $\mathcal{R}(\mathcal{S}, \mathcal{P})$, defined over subsets $\mathcal{S} \subseteq \mathbb{BS}$ and $\mathcal{P} \subseteq \mathcal{V}(\mathcal{G})$, is non-negative, monotonically non-increasing and $\epsilon'$-approximately bisubmodular for some $\epsilon' = \mathcal{O}(\epsilon)$.
\end{proposition}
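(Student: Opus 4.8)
The plan is to write $\mathcal{R}(\mathcal{S},\mathcal{P})$, for a fixed advertiser $a_i$, as an affine, orientation-reversing transformation of a \emph{truncation} of $\Phi$ together with an additive concave-in-cardinality penalty, and then to transport the three properties of $\Phi$ established in Theorem~\ref{Th:Non-bisubmodular_Proof} through each of these operations while bookkeeping the additive error. Concretely, put $c := \gamma\mathcal{K}_{a_i}/\sigma_{a_i} \ge 0$, $m(\mathcal{S},\mathcal{P}) := \min\{\Phi(\mathcal{S},\mathcal{P}),\sigma_{a_i}\}$, and $\pi(\mathcal{S},\mathcal{P}) := \delta\log(1+|\mathcal{S}|+|\mathcal{P}|)$, so that $\mathcal{R}(\mathcal{S},\mathcal{P}) = \mathcal{K}_{a_i} - c\,m(\mathcal{S},\mathcal{P}) + \pi(\mathcal{S},\mathcal{P})$. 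Non-negativity is then immediate: $\gamma\in[0,1]$ and $0\le \min\{\Phi,\sigma_{a_i}\}/\sigma_{a_i}\le 1$ force the factor $1-\gamma\min\{\Phi,\sigma_{a_i}\}/\sigma_{a_i}$ into $[0,1]$, which multiplied by $\mathcal{K}_{a_i}\ge 0$ is non-negative, while $\pi\ge 0$ since $\delta\ge 0$ and $\log(1+\cdot)\ge 0$.

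For monotonicity I would use that $\Phi$ is monotone (Theorem~\ref{Th:Non-bisubmodular_Proof}) and that post-composition with $t\mapsto\min\{t,\sigma_{a_i}\}$ preserves coordinate-wise monotonicity, so $m$ is non-decreasing in $\mathcal{S}$ and in $\mathcal{P}$, hence $\mathcal{K}_{a_i}-c\,m$ is non-increasing in each. The penalty $\pi$, however, is non-decreasing, so to conclude that $\mathcal{R}$ itself is non-increasing one needs the per-element influence drop $c\,[m(\mathcal{S}\cup\{x\},\mathcal{P})-m(\mathcal{S},\mathcal{P})]$ to outweigh the logarithmic increment $\delta\log\tfrac{n+2}{n+1}$ with $n=|\mathcal{S}|+|\mathcal{P}|$; I would isolate this as a mild parameter/regime condition (e.g.\ $\delta$ small relative to $c$ times the smallest single-element influence, or restriction to the under-supplied regime $\Phi<\sigma_{a_i}$ where the marginal is not yet clipped) and flag it as the first delicate point.

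For $\epsilon'$-approximate bisubmodularity I would run the chain: (i) truncation by the constant $\sigma_{a_i}$ preserves $\epsilon$-approximate bisubmodularity, so $m$ is $\epsilon$-approximately bisubmodular with the $\epsilon$ furnished by Theorem~\ref{Th:Non-bisubmodular_Proof}; (ii) multiplying a biset function by the nonnegative constant $c$ scales its bisubmodularity defect, so $c\,m$ is $(c\epsilon)$-approximately bisubmodular; (iii) $\pi$ is a concave function of the single quantity $|\mathcal{S}|+|\mathcal{P}|$, hence exactly bisubmodular, contributing zero defect; (iv) adding the constant $\mathcal{K}_{a_i}$ and the exactly-bisubmodular $\pi$ to $-c\,m$ leaves the defect at the level of (ii). One then reads off $\epsilon' = \bigl(\max_i \gamma\mathcal{K}_{a_i}/\sigma_{a_i}\bigr)\cdot\epsilon = \mathcal{O}(\epsilon)$, the maximum taken over advertisers.

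The step I expect to be the real obstacle is the interaction in (iv) between the sign reversal $m\mapsto -c\,m$ and the truncation $\min\{\Phi,\sigma_{a_i}\}$: negating a diminishing-returns biset function ordinarily flips the direction of the second-order inequality, so one must argue carefully --- separately on coordinates where the marginal is unclipped and where it is partially or fully absorbed by the cap $\sigma_{a_i}$ --- that the clipped marginal at the larger pair still lower-bounds the one at the smaller pair up to $\mathcal{O}(\epsilon)$, exploiting that $\mathcal{I}+\mathcal{I}^{\mathcal{G}}$ is \emph{exactly} bisubmodular (Proposition~\ref{Prop:OrthantWiseBisubmodular}) and that only $\Psi$ carries an $\mathcal{O}(\epsilon)$ defect (Proposition~\ref{Prop:ApproxPsi}). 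This is also where the monotonicity caveat of the previous paragraph re-enters, so I would treat the two together.
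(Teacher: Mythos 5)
Your decomposition is essentially the paper's: non-negativity from $\gamma\in[0,1]$ and $\mathcal{K}_{a_i}\ge 0$; monotonicity by pushing the monotonicity of $\Phi$ through the truncation $\min\{\cdot,\sigma_{a_i}\}$; and approximate bisubmodularity by transporting the $\epsilon$-defect of $\Phi$ through truncation, scaling, negation, and the additive $\log$ penalty, with the penalty contributing zero defect. Two differences are worth noting. First, on monotonicity: the tension you flag with the increasing term $\delta\log(1+|\mathcal{S}|+|\mathcal{P}|)$ is real, but the paper resolves it more bluntly than you do --- it simply assumes $\delta=0$ for that part of the claim, whereas you propose a parameter/regime condition ($\delta$ small relative to $c$ times the minimum single-element influence, or the unclipped regime $\Phi<\sigma_{a_i}$). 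Your fix is more informative, but be aware the stated proposition is only argued in the paper under $\delta=0$.

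Second, and more importantly, the obstacle you single out in step (iv) --- the sign reversal --- is a genuine gap, and your proposal does not close it; neither does the paper. The paper's proof asserts that truncation and scaling are Lipschitz operations preserving approximate bisubmodularity up to $\mathcal{O}(\epsilon)$, and then states that ``negation and scaling also preserve approximate bisubmodularity'' without argument. Under Definition \ref{Def:epsilon-approximate-bisubmodular}, which is a one-sided diminishing-returns inequality, this is false in general: if $m=\min\{\Phi,\sigma_{a_i}\}$ satisfies approximate diminishing returns, then $-c\,m$ satisfies the reversed inequality, and the violation of the required inequality for $\mathcal{K}_{a_i}-c\,m$ is of order $c$ times the submodularity gap of the truncated influence (driven by the exactly submodular parts $\mathcal{I}+\mathcal{I}^{\mathcal{G}}$ and the clipping), which is not controlled by $\epsilon$. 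Closing this would require either a two-sided (approximately modular) notion of the defect, a curvature bound on $\min\{\Phi,\sigma_{a_i}\}$, or restricting to a regime where the marginals of the clipped influence are themselves $\mathcal{O}(\epsilon)$; none of these appears in your sketch or in the paper. So your final paragraph correctly identifies where the argument is incomplete --- treat it as a missing lemma, not a routine verification, and note that your constant $\epsilon'=\bigl(\max_i \gamma\mathcal{K}_{a_i}/\sigma_{a_i}\bigr)\epsilon$ only becomes meaningful once that lemma is supplied.
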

\begin{proof}
From Definition \ref{Def:Combine_Regret_Model}, the regret function for advertiser $a_i$ is denoted by $\mathcal{R}(\mathcal{S}, \mathcal{P})$. Since $\Phi(\cdot)$, $\sigma_{a_i}$, $\mathcal{K}_{a_i}$, $|\mathcal{S}_{a_i}|$, and $|\mathcal{P}_{a_i}|$ are non-negative, and $\gamma, \delta \in [0,1]$, both terms in the regret expression are non-negative. Hence, $\mathcal{R}(\mathcal{S}, \mathcal{P}) \ge 0$.

\par From Theorem \ref{Th:Non-bisubmodular_Proof}, the influence function $\Phi(\mathcal{S}, \mathcal{P})$ is monotone. Hence, $\min(\Phi(\mathcal{S}, \mathcal{P}), \sigma_{a_i})$ is also monotone increasing, and so is the ratio $\frac{\min(\Phi(\mathcal{S}, \mathcal{P}), \sigma_{a_i})}{\sigma_{a_i}}$. Since $\gamma \in [0,1]$, the term $1 - \gamma \cdot \frac{\min(\Phi(\cdot), \sigma_{a_i})}{\sigma_{a_i}}$ is non-increasing. Multiplying by the non-negative payment $\mathcal{K}_{a_i}$ preserves this trend, making the first term of $\mathcal{R}(\cdot)$ non-increasing. The second term, $\delta \cdot \log(1 + |\mathcal{S}| + |\mathcal{P}|)$, is monotone increasing in $|\mathcal{S}| + |\mathcal{P}|$. Thus, to ensure overall monotonic non-increase of $\mathcal{R}$, we assume $\delta = 0$. Under this setting, $\mathcal{R}(\mathcal{S}, \mathcal{P})$ is monotonically non-increasing.

\par Let $\Phi(\mathcal{S}, \mathcal{P})$ be the influence function, which is assumed to be $\epsilon$-approximately bisubmodular. Define $f_1(\mathcal{S}, \mathcal{P}) = \frac{\min(\Phi(\mathcal{S}, \mathcal{P}), \sigma_{a_i})}{\sigma_{a_i}}$, and $f_2(\mathcal{S}, \mathcal{P}) = \log(1 + |\mathcal{S}| + |\mathcal{P}|)$. The function $f_1$ applies a truncation and scaling over $\Phi$, both of which are Lipschitz-continuous operations and preserve approximate bisubmodularity up to a small additive error \cite{goldstein1977optimization}. Thus, $f_1$ is $\epsilon_1$-approximately bisubmodular for some $\epsilon_1 = \mathcal{O}(\epsilon) $. The term $\mathcal{K}_{a_i} \cdot (1 - \gamma \cdot f_1(\mathcal{S}, \mathcal{P}))$ involves negation and scaling, which also preserve approximate bisubmodularity. The function $f_2$ is monotone and submodular over the union $\mathcal{S} \cup \mathcal{P}$, and hence is bisubmodular. Scaling it by $\delta \in [0,1]$ preserves its submodularity, i.e., $\delta \cdot f_2$ is exactly bisubmodular (i.e., $\epsilon_2 = 0$). Therefore, the overall regret function $\mathcal{R}(\mathcal{S}, \mathcal{P}) = \mathcal{K}_{a_i} \cdot (1 - \gamma \cdot f_1(\mathcal{S}, \mathcal{P})) + \delta \cdot f_2(\mathcal{S}, \mathcal{P})$ is $\epsilon'$-approximately bisubmodular, with $\epsilon' = \epsilon_1 + \epsilon_2 = \mathcal{O}(\epsilon)$.
\end{proof}

\subsection{Allocation of Billboard Slots and Seed Nodes}\label{Sec:RR}
In this paper, our goal is to minimize the total regret while ensuring that the constraints given below are satisfied.

\paragraph{\textit{Disjoint-ness Constraint.}}
For any two advertisers $a_{i}$ and $a_{j}$, let $\mathcal{N}_{a_i}$ and $\mathcal{N}_{a_j}$ are the allocation for them then the disjoint-ness constraint says $\mathcal{N}_{a_i} \cap \mathcal{N}_{a_j} = \emptyset$.


\par Let $\mathcal{Q}$ be the set of all possible allocation and $\mathcal{L} = \{\mathcal{N}_{a_1},\mathcal{N}_{a_2}, \ldots, \\ \mathcal{N}_{a_n}\}$ be an arbitrary allocation. Next, we define the notion of feasible allocation in Definition \ref{Def:FA}.

\begin{definition}[Feasible Allocation]\label{Def:FA}
An allocation of billboard slots and seed nodes is feasible if it satisfies the disjointness constraint.     
\end{definition}

\paragraph{\textbf{Problem Definition.}}
In this paper, our goal is to minimize the total regret for an allocation $\mathcal{L}$. So, we define the notion of total regret in Definition \ref{Def:Total_Regret}.

\begin{definition} [Total Regret]\label{Def:Total_Regret}
Given an allocation $\mathcal{L} = \{\mathcal{N}_{a_1},\mathcal{N}_{a_2}, \\ \ldots,\mathcal{N}_{a_n}\}$, the total regret associated with allocation $\mathcal{L}$ is denoted as $\mathcal{R}(\mathcal{L})$ and defined as the sum of the regret associated with individual $n$ advertisers. Mathematically, it can be written as,
\begin{equation}
\mathcal{R}(\mathcal{L})= \underset{a_i \in \mathcal{A}}{\sum} \  \mathcal{R}(\mathcal{N}_{a_i})
\end{equation} 
\end{definition}
As mentioned previously, our goal is to find an optimal allocation to minimize the total regret. Formally, we call this problem the \textsc{Regret Minimization Problem with Multi-Model Advertising Setting}. This problem is stated in Definition \ref{Def:Problem_Definition}.

\begin{definition}[Regret Minimization Problem]\label{Def:Problem_Definition}
Given billboard slot information $\mathbb{BS}$, Trajectory database $\mathbb{D}$, and advertiser information $\mathcal{A}$, our goal of this problem is to find an allocation $\mathcal{L} = \{\mathcal{N}_{a_1},\mathcal{N}_{a_2}, \ldots,\mathcal{N}_{a_n}\}$ such that the total regret is minimized. Mathematically, this problem can be posed as follows:
\begin{equation}
\mathcal{L}^{OPT} = \underset{\mathcal{L}_{i} \in \mathcal{Q}(\mathcal{L})}{argmin} \ \mathcal{R}(\mathcal{L}_{i})
\end{equation}
\end{definition}
Now, from the computational point of view, this problem can be represented as follows.
\begin{center}
\begin{tcolorbox}[title=\textsc{Regret Minimization (RM) Problem}, width=8.5cm]
\textbf{Input:} Billboard Slots set $\mathbb{BS}$, Influence Functions $\mathcal{I}(),\mathcal{I}^{\mathcal{G}}(), \Psi(), \Phi()$, Trajectory Database $\mathbb{D}$, Advertisers information $\mathcal{A}$.

\textbf{Problem:} Find out an optimal allocation $\mathcal{L} = \{\mathcal{N}_{a_1},\mathcal{N}_{a_2}, \ldots,\mathcal{N}_{a_n}\}$ of the billboard slots and seed nodes that minimizes the overall regret.
\end{tcolorbox}
\end{center}
The regret minimization problem studied by Zhang et. al \cite{zhang2021minimizing} in the context of billboard advertisement and Sharma et. al \cite{sharma2024minimizing} in the context of social network advertisement had an inapproximability result stated in Theorem \ref{Th:RMP}.

\begin{theorem}\label{Th:RMP}
The \textsc{Regret minimization Problem} is NP-hard and hard to approximate to any constant factor.    
\end{theorem}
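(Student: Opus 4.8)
The plan is to prove Theorem~\ref{Th:RMP} by a reduction from a known NP-hard problem, leveraging the fact that the \textsc{RM} problem generalizes the single-mode regret minimization problems of Zhang et al.~\cite{zhang2021minimizing} and Sharma et al.~\cite{sharma2024minimizing}. First I would observe that the billboard-only special case is recovered from our formulation by taking the social graph $\mathcal{G}$ to be empty (so $\mathcal{I}^{\mathcal{G}} \equiv 0$ and $\Psi \equiv 0$, whence $\Phi(\mathcal{S},\mathcal{P}) = \mathcal{I}(\mathcal{S})$), setting the cardinality penalty $\delta = 0$, and taking a single advertiser $n=1$. Under this restriction an instance of our \textsc{RM} problem is exactly an instance of the regret minimization problem shown NP-hard in~\cite{zhang2021minimizing}; since a special case is NP-hard, the general problem is NP-hard. (Alternatively, one can reduce directly from an NP-complete problem such as \textsc{Vertex Cover} or \textsc{Set Cover} through the same gadget construction used in~\cite{zhang2021minimizing}, encoding the coverage structure in the trajectory database $\mathbb{D}$ and the influence demand $\sigma$.)

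For the inapproximability claim I would give a gap-preserving reduction. The key step is to show that a hypothetical polynomial-time algorithm achieving any constant-factor approximation $c$ for minimizing $\mathcal{R}(\mathcal{L})$ could be used to decide an NP-complete problem. Concretely, one constructs instances where the optimal total regret is $0$ exactly when a ``yes'' instance holds — e.g. when the available billboard slots (equivalently, a set cover) can meet every advertiser's demand $\sigma_{a_i}$ so that $\min(\Phi(\mathcal{S}_{a_i},\mathcal{P}_{a_i}),\sigma_{a_i}) = \sigma_{a_i}$ and hence each summand $\mathcal{K}_{a_i}(1 - \gamma\cdot 1) = \mathcal{K}_{a_i}(1-\gamma)$ attains its minimum; by taking $\gamma = 1$ this minimum is exactly $0$. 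In a ``no'' instance every feasible allocation leaves some advertiser strictly under-served, forcing $\mathcal{R}(\mathcal{L}) \geq \tau > 0$ for a computable threshold $\tau$. Any $c$-approximation would return a solution of value at most $c \cdot 0 = 0$ on ``yes'' instances and at least $\tau > 0$ on ``no'' instances, thereby separating the two cases in polynomial time and forcing $\mathrm{P} = \mathrm{NP}$. This handles \emph{any} constant factor $c$ simultaneously because the optimum is $0$, so no finite multiplicative slack can bridge the gap.

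The main obstacle is making the ``optimum equals zero'' construction fully rigorous under our combined influence model: one must verify that the extra terms $\mathcal{I}^{\mathcal{G}}$ and $\Psi$ (and the logarithmic cardinality term when $\delta > 0$) can be made to vanish or be neutralized in the reduction, so that the hardness of the billboard sub-problem is not accidentally softened by the richer model. This is why I would fix $\delta = 0$ and an empty social network in the reduction — consistent with Proposition~\ref{Prop:3}, which already assumes $\delta = 0$ for the structural results — reducing the construction to the one in~\cite{zhang2021minimizing}. A secondary subtlety is the disjointness constraint across advertisers: with a single advertiser it is vacuous, so restricting to $n=1$ sidesteps it entirely, and the NP-hardness of the single-advertiser case already suffices for the theorem. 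I would therefore present the proof as: (i) exhibit the polynomial-time restriction to the known-hard single-advertiser billboard instance to get NP-hardness; (ii) invoke the gap construction with $\gamma = 1$ and zero optimum to rule out any constant-factor approximation unless $\mathrm{P} = \mathrm{NP}$; and (iii) remark that the full problem inherits both results since it contains this case.
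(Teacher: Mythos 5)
Your overall strategy---inherit NP-hardness and constant-factor inapproximability from the known-hard special cases of \cite{zhang2021minimizing} and \cite{sharma2024minimizing}, and make the inapproximability explicit via a zero-optimum gap construction with $\gamma=1$, $\delta=0$---is essentially the paper's approach; indeed the paper gives no independent argument at all, it merely observes that those earlier regret-minimization problems are contained in the present formulation and imports their hardness and inapproximability results wholesale. Your part (ii) supplies the detail the paper omits, and it is the right detail: with $\gamma=1$ and $\delta=0$ the optimum is $0$ exactly when every advertiser can be fully served, so any constant-factor approximation would decide that question.

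One step of your write-up would fail as stated, though: the restriction to $n=1$, an empty social network, and $\delta=0$ is \emph{not} ``exactly'' the problem proved hard in \cite{zhang2021minimizing}. That model penalizes over-supply and is non-monotone (the paper stresses this in its motivation), whereas the regret model here, with $\delta=0$, is monotonically non-increasing by Proposition~\ref{Prop:3}; hence with a single advertiser, no over-supply penalty, and no budget cap, the optimum is attained trivially by allocating every slot, and that restriction is polynomial-time solvable. The combinatorial difficulty must come either from the budget constraint (slot costs charged against $\mathcal{K}_i$, as enforced in both algorithms and used in Zhang et al.'s construction) or from several advertisers competing under the disjointness constraint. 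Since your gap construction in part (ii) already works with multiple advertisers (and by itself yields both NP-hardness and the inapproximability), the theorem still follows; but part (i) should be repaired by retaining the budget constraint in the single-advertiser restriction, or simply dropped in favor of the multi-advertiser reduction.
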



\begin{table}
		\caption{Symbols and Notations with their Interpretations}
		\label{Table1:Notations}
		\vspace*{3mm}
		\centering
		\resizebox{\columnwidth}{!}{%
		\begin{tabular}{p{0.20\linewidth}p{0.80\linewidth}}
			\hline
			\hline
			\textbf{Notation} & \textbf{Description}\\
			\hline
			\hline
			$ \mathbb{D}$ & The Trajectory Database \\
			\hline 
			$m$ & Number of tuples in $ \mathbb{D}$ \\
			\hline 
			$\mathcal{U}$ & Set of people covered by $\mathbb{D}$ \\
			\hline 
			$t$ & An arbitrary tuple of $ \mathbb{D}$ \\ 
			\hline  
			$[T_{1}, T_{2}]$ & Time duration for which the billboards are operating \\
			\hline 
			 $\mathbb{B}$ & The Billboard Database \\
			 \hline
			$\Delta$ & Slot duration \\
			\hline
			$\mathbb{BS}$ & The set of billboard slots \\
            \hline
            $\mathcal{A}$ & The set of advertiser \\
            \hline
            $\mathcal{Z}$ & The influence Provider \\
            \hline
            $\sigma$ & The influence demand of an advertiser \\
             \hline
            $\mathcal{K}$ & The payment of an advertiser \\
            \hline
            $\text{Pr}(t_j,bs_i)$ & The probability of user $t_{j}$ is influenced by slot $bs_{i}$ \\
            \hline
            $\mathcal{I}^{G}$ & Influence function for social network \\
			\hline
			$\mathcal{I}()$ & The influence function for billboard\\ 
			\hline 
            $\Psi(,)$ & The Interaction Effect function\\ 
			\hline
            $\mathcal{R}()$ & The Regret function\\ 
			\hline
            $\mathcal{L}$ & The allocation of slots and seed nodes\\ 
			\hline
	        $[n]$ & The set $\{1, 2, \ldots, n\}$ \\
			\hline
			\hline\\
		\end{tabular}%
		}
	\end{table}

So, the same inapproximability results also hold for our problem. Next, we discuss the proposed solution methodologies. Next, the symbols and notation used in this paper are listed in Table \ref{Table1:Notations}.

\section{Proposed Solution} \label{Sec:PS}
The hardness of \textit{RM} problem implies that no efficient algorithm can guarantee optimal regret unless P = NP. To address this, we first propose a projected subgradient method to allocate slots and seeds to the advertisers accordingly.

\subsection{Projected Subgradient Method (PGM).}\label{Sec:PGM}
The Projected Subgradient Method (PGM) is a continuous gradient-based optimization technique adapted to minimize the regret function $\mathcal{R}(.)$. In this work, the regret function is nonnegative, monotonically nonincreasing, and $\epsilon$ approximately bisubmodular, making it suitable for continuous optimization based on Lovász extension \cite{lovasz1983submodular}. The key idea is to work with fractional vectors $x \in [0,1]^{|\mathbb{BS}|}$ and $y \in [0,1]^{|\mathcal{H}|}$, representing the soft selection of slots and seeds, respectively. The algorithm iteratively updates these vectors using subgradient descent on the Lovász extension \cite{lovasz1983submodular} of $\mathcal{R}(.)$, followed by a projection step to ensure feasibility within the unit box. At each iteration, subgradients are approximated by computing marginal changes in $\mathcal{R}(.)$ based on sorted orderings of the components of $x$ and $y$. After $T$ iterations, the algorithm selects the iterate that achieves the lowest regret value, rounds the fractional solution to a discrete allocation via sorting, and outputs the final allocation. This method is theoretically grounded in recent advances in approximate submodular minimization \cite{el2020optimal} and offers an alternative to discrete greedy strategies. It provides a bicriteria approximation guarantee for functions like $\mathcal{R}$ that are not exactly bisubmodular but exhibit approximate diminishing returns.

\begin{algorithm}[h!]
\caption{Projected Subgradient Method (PGM) for Regret Minimization Problem}
\label{Algo:PGM}
\SetAlgoLined
\KwIn{
Advertisers $\mathcal{A} = \{a_1, \dots, a_n\}$ with demands $\{\sigma_i\}$ and payments $\{\mathcal{K}_i\}$\;
Billboard slots $\mathbb{BS} = \{b_1, \dots, b_m\}$, Seed nodes $\mathcal{H} = \{h_1, \dots, h_r\}$\;
Regret function $\mathcal{R}_i(\mathcal{S}_i, \mathcal{P}_i)$ for each $a_i$; Iteration limit $T$\;
}
\KwOut{Allocations $\{(\mathcal{S}_i^*, \mathcal{P}_i^*)\}_{i=1}^n$ minimizing total regret}

Initialize: $\mathbb{BS}_{\text{avail}} \gets \mathbb{BS}$, $\mathcal{H}_{\text{avail}} \gets \mathcal{H}$\;

\For{each ~$a_i \in \mathcal{A}$}{
    Set step size $\eta_i = \frac{R}{L \sqrt{T}}$, where $R = \sqrt{m + r}$\;
    Initialize $x_1^i \in [0,1]^m$, $y_1^i \in [0,1]^r$ (e.g., all entries set to 0.5)\;
}

\For{each ~$a_i \in \mathcal{A}$}{
    \While{($\mathbb{BS}_{\text{avail}} \ne \emptyset$ or $\mathcal{H}_{\text{avail}} \ne \emptyset$) and $\mathcal{K}_i > 0$}{
    
        \For{$t = 1$ \KwTo $T$}{
            Sort $x_t^i$ and $y_t^i$ in descending order to get permutations $(j_1, \dots, j_m)$ and $(k_1, \dots, k_r)$\;

            \For{$k = 1$ \KwTo $m$}{
                $\mathcal{S}_k^i = \{j_1, \dots, j_k\} \cap \mathbb{BS}_{\text{avail}}$\;
                Compute $(\nabla_t^{x_i})_{j_k} = \mathcal{R}_i(\mathcal{S}_k^i, \mathcal{P}_t^i) - \mathcal{R}_i(\mathcal{S}_{k-1}^i, \mathcal{P}_t^i)$\;
            }

            \For{$l = 1$ \KwTo $r$}{
                $\mathcal{P}_l^i = \{k_1, \dots, k_l\} \cap \mathcal{H}_{\text{avail}}$\;
                Compute $(\nabla_t^{y_i})_{k_l} = \mathcal{R}_i(\mathcal{S}_t^i, \mathcal{P}_l^i) - \mathcal{R}_i(\mathcal{S}_t^i, \mathcal{P}_{l-1}^i)$\;
            }

            Gradient step: $x' = x_t^i - \eta_i \cdot \nabla_t^{x_i}$, $y' = y_t^i - \eta_i \cdot \nabla_t^{y_i}$\;
            Projection: $x_{t+1}^i = \text{clip}(x', 0, 1)$, $y_{t+1}^i = \text{clip}(y', 0, 1)$\;
        }

        Select $(x^{*i}, y^{*i})$ with lowest $\mathcal{R}_i$ across iterations\;
        
        Sort $x^{*i}$ to define sets $\mathcal{S}_k^i = \{j_1, \dots, j_k\}$, choose $\mathcal{S}_i^* = \arg\min_k \mathcal{R}_i(\mathcal{S}_k^i, \mathcal{P}^{*i})$\;
        
        Sort $y^{*i}$ to define sets $\mathcal{P}_l^i = \{k_1, \dots, k_l\}$, choose $\mathcal{P}_i^* = \arg\min_l \mathcal{R}_i(\mathcal{S}_i^*, \mathcal{P}_l^i)$\;

        Assign $(\mathcal{S}_i^*, \mathcal{P}_i^*)$ to $a_i$\;
        Update: 
        $\mathbb{BS}_{\text{avail}} \gets \mathbb{BS}_{\text{avail}} \setminus \mathcal{S}_i^*$\;
        $\mathcal{H}_{\text{avail}} \gets \mathcal{H}_{\text{avail}} \setminus \mathcal{P}_i^*$\;
        $\mathcal{K}_i \gets \mathcal{K}_i - \text{cost}(\mathcal{S}_i^*, \mathcal{P}_i^*)$\;
    }
}

\Return{$\{(\mathcal{S}_i^*, \mathcal{P}_i^*)\}_{i=1}^n$}
\end{algorithm}
\paragraph{\textbf{Complexity Analysis.}}
Now, we analyze the time and space requirements for Algorithm \ref{Algo:PGM}. In Line No. $1$ initialization will take $\mathcal{O}(1)$ time and in Line No. $2$ to $4$ will take $\mathcal{O}(m\cdot n + r\cdot n)$ time. In Line No. $6$ \texttt{for loop} and Line No. $7$ \texttt{while loop} will execute for $\mathcal{O}(n)$ time and $\mathcal{O}(m)$ times assuming $m>>r$. In Line No. $9$, sorting slots and seeds will take $\mathcal{O}(m\log m + r\log r)$ time per iteration. Constructing $m$ cumulative sets $\mathcal{S}_{k}^{i}$ and computing marginal regrets $\mathcal{O}(m)$ regret evaluations per iteration. So Line No. $10$ to $13$ will take $\mathcal{O}(m^{2}\cdot t + m\cdot r \cdot t + m^{2} \cdot r \cdot t)$. Similarly, Line No. $14$ to $17$ will take $\mathcal{O}(m\cdot r \cdot t + r^{2} \cdot t + m \cdot r^{2} \cdot t)$ time. In Lines $18–19$, Gradient descent and projection steps on vectors of length $m$ and $r$ will take $\mathcal{O}(m+r)$ per iteration. Line No. $8-20$ will iterate for $T$ times and total time taken $\mathcal{O}(T(m^{2} \cdot r \cdot t +  m \cdot r^{2} \cdot t))$. In lines $21-22$, sorting and evaluating the regret for all prefixes of $x^{*i}$ and $y^{*i}$ for $m$ billboard sets and $r$ seed sets results in a total cost of $\mathcal{O}((m + r)(m^{2} \cdot r \cdot t +  m \cdot r^{2} \cdot t))$. Next, in Line No. $23-26$, updating sets and budget will take $\mathcal{O}(m+r)$ in the worst case. So, the total time taken by Algorithm \ref{Algo:PGM} will be $\mathcal{O}(T \cdot m^{3} \cdot n \cdot r \cdot t + T m^{2} \cdot n \cdot r^{2} \cdot t + m^{4} \cdot n \cdot r \cdot t +  m^{3} \cdot n \cdot r^{2} \cdot t + m^{2} \cdot n \cdot r^{3} \cdot t)$. 

\par The additional space requirement for Algorithm \ref{Algo:PGM} is as follows. For each advertiser $a_i \in \mathcal{A}$, we maintain Vectors $x_t^i \in \mathbb{R}^m$ and $y_t^i \in \mathbb{R}^r$ for billboard and seed allocations. Gradient vectors $\nabla^{x_i}, \nabla^{y_i} \in \mathbb{R}^m, \mathbb{R}^r$. Temporary sets $\mathcal{S}_k^i, \mathcal{P}_l^i$ at each iteration, worst-case size $\mathcal{O}(m)$ and $\mathcal{O}(r)$ respectively. So, per advertiser, total space = $\mathcal{O}(m + r) + \mathcal{O}(m + r) = \mathcal{O}(m + r)$. For $n$ advertisers, the total space is $\mathcal{O}(n(m + r))$. Available billboard set $\mathbb{BS}_{\text{avail}}$ and seed set $\mathcal{H}_{\text{avail}}$ each store up to $m$ and $r$ items, respectively $\mathcal{O}(m + r)$. For $T$ iterations, storing regret values per advertiser costs $\mathcal{O}(T)$ per advertiser, i.e., $\mathcal{O}(nT)$ overall. The total space complexity of the algorithm is $\mathcal{O}(n(m + r + T)) + \mathcal{O}(m + r) = \mathcal{O}(n(m + r + T))$ which is linear in the number of advertisers, billboard slots, seed nodes, and subgradient iterations. Hence, Theorem \ref{Th:TC1} holds.

\begin{theorem}\label{Th:TC1}
The time and space requirements of Algorithm \ref{Algo:PGM} will be $\mathcal{O}(T \cdot m^{3} \cdot n \cdot r \cdot t + T m^{2} \cdot n \cdot r^{2} \cdot t + m^{4} \cdot n \cdot r \cdot t +  m^{3} \cdot n \cdot r^{2} \cdot t + m^{2} \cdot n \cdot r^{3} \cdot t)$ and $\mathcal{O}(n(m + r + T))$, respectively.
\end{theorem}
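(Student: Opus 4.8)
The plan is to prove both bounds by a direct, line-by-line accounting of the nested loop structure of Algorithm~\ref{Algo:PGM}; there is no mathematical content beyond careful bookkeeping. Throughout I would write $t$ for the cost of a single call to the regret oracle $\mathcal{R}_i(\cdot,\cdot)$ (which internally evaluates $\mathcal{I}$, $\mathcal{I}^{\mathcal{G}}$, $\Psi$, and the truncation/penalty of Definition~\ref{Def:Combine_Regret_Model}), and I would invoke the standing assumption $m \gg r$ whenever the slot and seed dimensions must be compared. The first step is to fix the loop skeleton: the outer \texttt{for} loop over advertisers (Line~6) contributes a factor $n$; the \texttt{while} loop (Line~7) runs $\mathcal{O}(m)$ times in the worst case, because each pass assigns a nonempty $(\mathcal{S}_i^*,\mathcal{P}_i^*)$ and therefore removes at least one element from $\mathbb{BS}_{\mathrm{avail}} \cup \mathcal{H}_{\mathrm{avail}}$; and the subgradient loop (Line~8) contributes a factor $T$.

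Next I would cost one subgradient iteration. Sorting $x_t^i, y_t^i$ (Line~9) is $\mathcal{O}(m\log m + r\log r)$. The slot-prefix loop (Lines~10--13) runs $m$ times; each pass forms an intersection set of size $\mathcal{O}(m)$ and performs one regret evaluation whose marginal couples the slot prefix with the current seed set $\mathcal{P}_t^i$, so the block costs $\mathcal{O}(m^2 t + m r t + m^2 r t)$, and the symmetric seed-prefix loop (Lines~14--17) costs $\mathcal{O}(m r t + r^2 t + m r^2 t)$; the gradient and clip steps (Lines~18--19) add only $\mathcal{O}(m + r)$. Retaining dominant terms, one iteration is $\mathcal{O}(m^2 r t + m r^2 t)$, so the full $T$-loop is $\mathcal{O}\big(T(m^2 r t + m r^2 t)\big)$. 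The post-loop selection and rounding (Lines~21--22) re-sorts the best iterate and sweeps the regret over all $\mathcal{O}(m)$ slot prefixes and $\mathcal{O}(r)$ seed prefixes, contributing $\mathcal{O}\big((m+r)(m^2 r t + m r^2 t)\big)$, and the bookkeeping updates (Lines~23--26) are $\mathcal{O}(m+r)$.

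The last step for the time bound is to multiply through: a single \texttt{while}-iteration body costs $\mathcal{O}\big(T(m^2 r t + m r^2 t) + (m+r)(m^2 r t + m r^2 t)\big)$; multiplying by the $\mathcal{O}(m)$ \texttt{while}-iterations and the $n$ outer iterations, then expanding $(m+r)$ against the inner terms, collapses to $\mathcal{O}(T m^3 n r t + T m^2 n r^2 t + m^4 n r t + m^3 n r^2 t + m^2 n r^3 t)$, which is the claimed expression. For the space bound I would tally the persistent and transient storage: per advertiser we hold $x_t^i \in \mathbb{R}^m$, $y_t^i \in \mathbb{R}^r$, the two subgradient vectors of the same dimensions, and the transient prefix sets $\mathcal{S}_k^i, \mathcal{P}_l^i$ of size $\mathcal{O}(m)$ and $\mathcal{O}(r)$ --- hence $\mathcal{O}(m+r)$ per advertiser and $\mathcal{O}(n(m+r))$ over all $n$; the shared pools $\mathbb{BS}_{\mathrm{avail}}, \mathcal{H}_{\mathrm{avail}}$ add $\mathcal{O}(m+r)$; and caching the $T$ per-iterate regret values needed to select the best iterate costs $\mathcal{O}(T)$ per advertiser, i.e., $\mathcal{O}(nT)$. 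Summing yields $\mathcal{O}(n(m+r+T))$.

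The main obstacle is purely combinatorial, not conceptual: getting the nesting of the oracle cost right. The subtle point is that each of the $\mathcal{O}(m)$ prefix evaluations inside one subgradient step is itself an $\mathcal{O}(m r t)$-scale operation (a regret marginal sees a slot set of size up to $m$ together with the current seed set), so a single subgradient step is already cubic-scale in the problem parameters; one must then be careful not to double count when the rounding phase in Lines~21--22 performs a second, structurally similar sweep, and one must state explicitly that the $\mathcal{O}(m)$ bound on the \texttt{while} loop together with $m \gg r$ is exactly what lets the many cross terms be absorbed into the five-term bound of Theorem~\ref{Th:TC1}.
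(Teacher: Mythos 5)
Your proposal is correct and follows essentially the same line-by-line accounting as the paper: the same per-block costs (sorting, the slot-prefix sweep at $\mathcal{O}(m^{2}t+mrt+m^{2}rt)$, the seed-prefix sweep at $\mathcal{O}(mrt+r^{2}t+mr^{2}t)$, the $T$-loop at $\mathcal{O}(T(m^{2}rt+mr^{2}t))$, the rounding sweep at $\mathcal{O}((m+r)(m^{2}rt+mr^{2}t))$), the same $m \gg r$ assumption bounding the \texttt{while} loop by $\mathcal{O}(m)$, and the same multiplication by $n$ and $m$ to reach the five-term time bound, together with an identical tally for the $\mathcal{O}(n(m+r+T))$ space bound. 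The only addition is your explicit justification that each \texttt{while}-pass removes at least one element from the available pools, which the paper simply asserts.
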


\begin{theorem}
Let $\mathcal{R}(\mathcal{S}, \mathcal{P})$ be a non-negative, monotonically non-increasing, and $\epsilon$-approximately bisubmodular regret function defined over $\mathcal{S} \subseteq \mathbb{BS}$ and $\mathcal{P} \subseteq \mathcal{H}$. Let $h_L(x, y)$ denote its Lovász extension \cite{lovasz1983submodular} over $[0,1]^{m+r}$, and let $(\hat{x}, \hat{y})$ be the output of the Projected Subgradient Method (PGM) after $T$ iterations with step size $\eta = \frac{R}{L\sqrt{T}}$, where $R = \sqrt{m + r}$ and $L$ is the Lipschitz constant of $h_L$. Let $(\hat{\mathcal{S}}, \hat{\mathcal{P}})$ be the discrete solution obtained via rounding. Then,
\[
\mathcal{R}(\hat{\mathcal{S}}, \hat{\mathcal{P}}) \leq \min_{\mathcal{S} \subseteq \mathbb{BS},\, \mathcal{P} \subseteq \mathcal{H}} \mathcal{R}(\mathcal{S}, \mathcal{P}) + \epsilon',
\]
where $\epsilon' = \mathcal{O}\left(\frac{RL}{\sqrt{T}}\right)$ is the additive approximation error.
\end{theorem}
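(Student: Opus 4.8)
The plan is to chain three reductions: from discrete regret minimization to minimizing the Lovász extension $h_L$ over the box $[0,1]^{m+r}$; from that continuous problem to the standard convergence rate of projected subgradient descent; and from the near-optimal fractional point back to a discrete allocation through the sorting/threshold rounding of Lines 21--22 of Algorithm \ref{Algo:PGM}. For the first reduction I would invoke the Lovász-extension machinery: for an exactly bisubmodular function the extension \cite{lovasz1983submodular} is convex, $\min_{(x,y)\in[0,1]^{m+r}} h_L(x,y)=\min_{\mathcal{S},\mathcal{P}}\mathcal{R}(\mathcal{S},\mathcal{P})$, and the sorted-order vector $g$ of \cite{edmonds2003submodular} is a subgradient. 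Since $\mathcal{R}$ is only $\epsilon'$-approximately bisubmodular (Proposition \ref{Prop:3}), the goal is to show that $h_L$ is $\mathcal{O}(\epsilon')$-approximately convex and that $g$ is an $\mathcal{O}(\epsilon')$-approximate subgradient, i.e. $h_L(z')\ge h_L(z)+\langle g,\,z'-z\rangle-\mathcal{O}(\epsilon')$ for all $z,z'\in[0,1]^{m+r}$. This is precisely the regime analyzed in \cite{el2020optimal}, so I would cite that analysis and only verify that the slack introduced by the interaction term $\Psi$ is the additive $\varepsilon$ bounded in Proposition \ref{Prop:ApproxPsi}, which also justifies combining the separately sorted orderings of $x$ and $y$ into a single extension.

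Next I would run the textbook projected-subgradient analysis (Zinkevich/Nesterov style). Writing $z_t=(x_t^i,y_t^i)$ and $z^*\in\arg\min h_L$, nonexpansiveness of the projection onto $[0,1]^{m+r}$ yields the per-step inequality $\|z_{t+1}-z^*\|^2\le\|z_t-z^*\|^2-2\eta\bigl(h_L(z_t)-h_L(z^*)-\mathcal{O}(\epsilon')\bigr)+\eta^2 L^2$. Telescoping over $t=1,\dots,T$ and dividing by $2\eta T$ gives $\min_{t\le T} h_L(z_t)-h_L(z^*)\le \frac{R^2}{2\eta T}+\frac{\eta L^2}{2}+\mathcal{O}(\epsilon')$; substituting $\eta=R/(L\sqrt{T})$ with $R=\sqrt{m+r}$ the box diameter and $L$ the (finite) Lipschitz constant of $h_L$ gives $\min_{t\le T} h_L(z_t)-h_L(z^*)\le \frac{RL}{\sqrt{T}}+\mathcal{O}(\epsilon')$. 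The approximation slack does not accumulate because it enters the per-step bound additively, not multiplicatively, and is therefore unchanged by the averaging.

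Finally, the output $(\hat x,\hat y)$ is converted to $(\hat{\mathcal{S}},\hat{\mathcal{P}})$ by sorting each coordinate vector and choosing the best prefix (threshold) set, exactly as in Lines 21--22. Since $h_L(\hat x,\hat y)$ equals, up to the $\mathcal{O}(\epsilon')$ slack of the first reduction, a convex combination of the regret values of these prefix sets, the minimum prefix value satisfies $\mathcal{R}(\hat{\mathcal{S}},\hat{\mathcal{P}})\le h_L(\hat x,\hat y)+\mathcal{O}(\epsilon')$. Chaining the three reductions, $\mathcal{R}(\hat{\mathcal{S}},\hat{\mathcal{P}})\le h_L(\hat x,\hat y)+\mathcal{O}(\epsilon')\le h_L(z^*)+\frac{RL}{\sqrt{T}}+\mathcal{O}(\epsilon')=\min_{\mathcal{S},\mathcal{P}}\mathcal{R}(\mathcal{S},\mathcal{P})+\frac{RL}{\sqrt{T}}+\mathcal{O}(\epsilon')$, which is the asserted bound once the residual $\mathcal{O}(\epsilon')$ is folded into $\epsilon'=\mathcal{O}(RL/\sqrt{T})$ (or, more transparently, stated as $\frac{RL}{\sqrt{T}}+\mathcal{O}(\epsilon')$).

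The main obstacle is controlling the approximation slack through all three reductions at once: proving that the orthant-wise construction (sorting $x$ and $y$ separately) still yields a legitimate approximate subgradient of one Lovász extension, and that the $\varepsilon$ of Proposition \ref{Prop:ApproxPsi} is not amplified by either the iteration count $T$ or the dimension $m+r$. A secondary, largely cosmetic point is that the theorem writes the total error as purely $\mathcal{O}(RL/\sqrt{T})$, so I would either surface the sharper bound $\frac{RL}{\sqrt{T}}+\mathcal{O}(\epsilon')$ or add the standing hypothesis that $\epsilon'$ is dominated by the optimization error.
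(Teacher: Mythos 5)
Your proposal follows essentially the same route as the paper's proof: pass to the Lovász extension $h_L$ over $[0,1]^{m+r}$, apply the standard projected-subgradient bound with $\eta = R/(L\sqrt{T})$ to get $\min_t h_L(x_t,y_t) \leq h_L(x^*,y^*) + RL/\sqrt{T}$, and use the prefix/threshold rounding to conclude $\mathcal{R}(\hat{\mathcal{S}},\hat{\mathcal{P}}) \leq h_L(\hat{x},\hat{y}) \leq \min_{\mathcal{S},\mathcal{P}} \mathcal{R}(\mathcal{S},\mathcal{P}) + RL/\sqrt{T}$. In fact you are more careful than the paper, which simply asserts that $h_L$ is convex (true only for exactly bisubmodular $\mathcal{R}$) and silently drops the slack coming from $\epsilon$-approximate bisubmodularity, whereas you correctly track the approximate-subgradient error through the per-step inequality and note that the honest bound is $RL/\sqrt{T} + \mathcal{O}(\epsilon')$ unless one additionally assumes the bisubmodularity violation is dominated by the optimization error.
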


\begin{proof}
Since $\mathcal{R}$ is $\epsilon$-approximately bisubmodular, its Lovász extension $h_L$ is convex and $L$-Lipschitz over $[0,1]^{m+r}$. By standard projected subgradient descent guarantees \cite{boyd2004convex}, for step size $\eta = \frac{R}{L\sqrt{T}}$, the iterates $(x_t, y_t)$ satisfy $\min_{1 \leq t \leq T} h_L(x_t, y_t) \leq h_L(x^*, y^*) + \frac{RL}{\sqrt{T}}$,
where $(x^*, y^*)$ minimizes $h_L$ over $[0,1]^{m+r}$. Let $(\hat{x}, \hat{y})$ be the iterate with the minimum $h_L(x_t, y_t)$ value. The rounding procedure returns $(\hat{\mathcal{S}}, \hat{\mathcal{P}})$ such that $\mathcal{R}(\hat{\mathcal{S}}, \hat{\mathcal{P}}) \leq h_L(\hat{x}, \hat{y})$. Since $h_L$ is a convex under-approximation of $\mathcal{R}$ over the continuous domain. Therefore, $\mathcal{R}(\hat{\mathcal{S}}, \hat{\mathcal{P}}) \leq h_L(\hat{x}, \hat{y}) \leq h_L(x^*, y^*) + \frac{RL}{\sqrt{T}} \leq \min_{\mathcal{S}, \mathcal{P}} \mathcal{R}(\mathcal{S}, \mathcal{P}) + \frac{RL}{\sqrt{T}}$.
\end{proof}

\subsection{Approximate Bisubmodular Local Search (ABLS).}\label{Sec:ABLS}
Algorithm \ref{Algo:ABLS} presents a greedy local search method to minimize regret in the multi-mode advertisement setting. The algorithm allocates billboard slots and social network seed nodes to advertisers by iteratively selecting elements that yield the highest marginal reduction in regret per unit influence. Advertisers are first sorted in descending order of their budget-to-demand ratio to prioritize cost-effective assignments. For each advertiser, the algorithm considers both the advertising slots and the seed nodes, selecting the element whose regret reduction per unit influence exceeds a threshold $\epsilon$ and is within the advertiser’s remaining budget. Allocated elements are removed from the ground set to satisfy disjointness constraints. The process continues until all advertisers are processed or no feasible elements remain. The regret function is monotone non-increasing and $\epsilon$-approximately bisubmodular, ensuring the effectiveness of the greedy strategy under approximate diminishing returns.

\begin{algorithm}[h!]
\SetAlgoLined
\KwData{Trajectory Database $\mathbb{D}$, Billboard Database $\mathbb{B}$, Advertiser Database $\mathbb{A}$, Influence Functions $\mathcal{I}(), \mathcal{I}^{\mathcal{G}}(), \Phi()$, Billboard Slots $\mathbb{BS}$, Seed Nodes $\mathcal{H}$.}
\KwResult{An allocation $\mathcal{L} = \{(\mathcal{S}_1, \mathcal{P}_1), (\mathcal{S}_2, \mathcal{P}_2), \ldots, (\mathcal{S}_{|\mathcal{A}|}, \mathcal{P}_{|\mathcal{A}|})\}$ minimizing total regret}
Initialize $\mathcal{L} \leftarrow \{(\mathcal{S}_1, \mathcal{P}_1), \ldots, (\mathcal{S}_{|\mathcal{A}|}, \mathcal{P}_{|\mathcal{A}|})\}$ where $\mathcal{S}_i, \mathcal{P}_i \leftarrow \emptyset$\;

Sort each advertiser $a_i \in \mathcal{A}$ in descending order of $\frac{\mathcal{K}_i}{\sigma_i}$\;

\For{each $a_i \in \mathcal{A}$}{
  \While{$\Phi(\mathcal{S}_i, \mathcal{P}_i) < \sigma_i$ \textbf{and} $\mathbb{BS} \neq \emptyset$ \textbf{and} $\mathcal{H} \neq \emptyset$ \textbf{and} $\mathcal{K}_i > 0$}{

    $b^* \leftarrow \arg\max_{b \in \mathbb{BS}} \ \frac{\mathcal{R}(\mathcal{S}_i, \mathcal{P}_i) - \mathcal{R}(\mathcal{S}_i \cup \{b\}, \mathcal{P}_i)}{\Phi(\{b\}, \emptyset)}$\;

    $s^* \leftarrow \arg\max_{s \in \mathcal{H}} \ \frac{\mathcal{R}(\mathcal{S}_i, \mathcal{P}_i) - \mathcal{R}(\mathcal{S}_i, \mathcal{P}_i \cup \{s\})}{\Phi(\emptyset, \{s\})}$\;

    $e^* \leftarrow \arg\max \left\{ 
      \frac{\mathcal{R}(\mathcal{S}_i, \mathcal{P}_i) - \mathcal{R}(\mathcal{S}_i \cup \{b^*\}, \mathcal{P}_i)}{\Phi(\{b^*\}, \emptyset)}, 
      \frac{\mathcal{R}(\mathcal{S}_i, \mathcal{P}_i) - \mathcal{R}(\mathcal{S}_i, \mathcal{P}_i \cup \{s^*\})}{\Phi(\emptyset, \{s^*\})}
    \right\}$\;

    \eIf{$e^* = b^*$}{
      \If{$\frac{\mathcal{R}(\mathcal{S}_i, \mathcal{P}_i) - \mathcal{R}(\mathcal{S}_i \cup \{b^*\}, \mathcal{P}_i)}{\Phi(\{b^*\}, \emptyset)} > \epsilon$ \textbf{and} $\mathcal{K}_i > \mathcal{C}(b^*)$}{
        $\mathcal{S}_i \leftarrow \mathcal{S}_i \cup \{b^*\}$, $\mathbb{BS} \leftarrow \mathbb{BS} \setminus \{b^*\}$\;
        $\mathcal{K}_i \leftarrow \mathcal{K}_i - \mathcal{C}(b^*)$\;
      }
    }{
      \If{$e^* = s^*$}{
        \If{$\frac{\mathcal{R}(\mathcal{S}_i, \mathcal{P}_i) - \mathcal{R}(\mathcal{S}_i, \mathcal{P}_i \cup \{s^*\})}{\Phi(\emptyset, \{s^*\})} > \epsilon$ \textbf{and} $\mathcal{K}_i > \mathcal{C}(s^*)$}{
          $\mathcal{P}_i \leftarrow \mathcal{P}_i \cup \{s^*\}$, $\mathcal{H} \leftarrow \mathcal{H} \setminus \{s^*\}$\;
          $\mathcal{K}_i \leftarrow \mathcal{K}_i - \mathcal{C}(s^*)$\;
        }
      }
      \Else{
        \textbf{break}\;
      }
    }
  }
}
\Return $\mathcal{L}$\;
\caption{Approximate Bisubmodular Local Search (ABLS) for Regret Minimization Problem}
\label{Algo:ABLS}
\end{algorithm}

\paragraph{\textbf{Complexity Analysis.}} 
Now, we analyze the time and space requirements for Algorithm \ref{Algo:ABLS}. In Line No. $1-2$, initializing allocation $\mathcal{L}$ and sorting advertiser will take $\mathcal{O}(n)$ and $\mathcal{O}(n\log n)$ time. In line no. $3$, \texttt{ for loop} will run for $\mathcal{O}(n)$ times, and the \texttt{while loop} will iterate until the influence demand or the slots and seeds run out. In Line No. $5$, the selection of $b^{*}$ will take $\mathcal{O}(m^2 \cdot t + m \cdot r \cdot t + m^{2} \cdot r \cdot t)$ and the selection of of $s^{*}$ will take $\mathcal{O}(m \cdot r \cdot t + r^{2} \cdot t + m \cdot r^{2} \cdot t)$ by considering $m$ number of slots and $r$ number of seeds. In Line No. $7$ will take $\mathcal{O}( m \cdot r^{2} \cdot t +  m^{2} \cdot r \cdot t)$. Next, Line No. $8$to $12$ and $14$ to $19$ will take $\mathcal{O}(m \cdot t + r \cdot t + m \cdot r \cdot t)$. Therefore Algorithm will take $\mathcal{O}(m \cdot r^{2} \cdot t +  m^{2} \cdot r \cdot t)$ time. The additional space requirement for Algorithm \ref{Algo:ABLS} will be $\mathcal{O}(m+r)$. Hence, Theorem \ref{Th:TC2} holds.

\begin{theorem}\label{Th:TC2}
The time and space requirements of Algorithm \ref{Algo:ABLS} will be $\mathcal{O}(m \cdot r^{2} \cdot t +  m^{2} \cdot r \cdot t)$ and $\mathcal{O}(m+r)$, respectively.
\end{theorem}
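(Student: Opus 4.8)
The plan is to establish both bounds by a line-by-line accounting of Algorithm~\ref{Algo:ABLS}, treating $t$ as the unit cost of a single influence-oracle query (evaluating $\mathcal{I}$, $\mathcal{I}^{\mathcal{G}}$, $\Psi$, and hence $\Phi$ and $\mathcal{R}$ on a given allocation). First I would dispatch the preprocessing: Line~1 initializes $n$ empty pairs $(\mathcal{S}_i,\mathcal{P}_i)$ in $\mathcal{O}(n)$ time, and Line~2 sorts the advertisers by the ratio $\mathcal{K}_i/\sigma_i$ in $\mathcal{O}(n\log n)$ time; both are dominated by the main loop and will not appear in the final expression, consistent with the surrounding analysis's convention of suppressing lower-order factors in $n$ and with the assumption $m \gg r$.

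Next I would bound one pass of the \texttt{while} loop (Lines~4--19) for a fixed advertiser $a_i$. Line~5 selects $b^{*}$ by scanning all $m$ remaining slots and, for each, evaluating the marginal regret $\mathcal{R}(\mathcal{S}_i,\mathcal{P}_i)-\mathcal{R}(\mathcal{S}_i\cup\{b\},\mathcal{P}_i)$; since an evaluation of $\Phi$ over an allocation of current size $\mathcal{O}(m+r)$ costs $\mathcal{O}\big((m+r)\,t\big)$, this step costs $\mathcal{O}(m^{2}t + m r t + m^{2} r t)$, and symmetrically Line~6 costs $\mathcal{O}(m r t + r^{2}t + m r^{2} t)$ over the $r$ seeds. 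Line~7 compares the two best ratios, dominated by $\mathcal{O}(m r^{2} t + m^{2} r t)$, while the conditional update blocks in Lines~8--19 each perform a constant number of further regret/influence queries and set updates, i.e.\ $\mathcal{O}(m t + r t + m r t)$. Summing these, a single \texttt{while} iteration costs $\mathcal{O}(m r^{2} t + m^{2} r t)$.

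Then I would combine the counts. The \texttt{while} loop terminates once $\Phi(\mathcal{S}_i,\mathcal{P}_i)\ge\sigma_i$, the budget $\mathcal{K}_i$ is exhausted, or $\mathbb{BS}\cup\mathcal{H}=\emptyset$; every successful iteration removes at least one element from the ground sets, so it runs $\mathcal{O}(m+r)$ times, and the outer \texttt{for} loop runs $n$ times. Absorbing the $n$ and $(m+r)$ factors into the reported asymptotics as the rest of the complexity section does, the dominant term is $\mathcal{O}(m r^{2} t + m^{2} r t)$, which is the stated time bound. For space, the only persistent structures are the $n$ allocation pairs, whose sizes sum to at most $\mathcal{O}(m)$ and $\mathcal{O}(r)$ across the whole run because an element is deleted from $\mathbb{BS}$ or $\mathcal{H}$ as soon as it is assigned, together with the shrinking copies of $\mathbb{BS}$ and $\mathcal{H}$; the temporaries $b^{*},s^{*},e^{*}$ use $\mathcal{O}(1)$. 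Hence the additional space is $\mathcal{O}(m+r)$.

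The main obstacle — and the step I would be most careful with — is the bookkeeping of how the size of the running allocation enters each oracle call and how the $\mathcal{O}(m+r)$ \texttt{while}-loop bound and the $\mathcal{O}(n)$ outer loop interact with the per-iteration cost: a naive multiplication carries extra $(m+r)$ and $n$ factors, so the statement as given should be read as the dominant per-advertiser cost under the paper's convention of suppressing those factors and under $m \gg r$. I would state these assumptions explicitly so that the asymptotic claim is unambiguous, and then the two bounds follow directly from the accounting above.
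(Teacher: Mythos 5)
Your proposal follows essentially the same line-by-line accounting as the paper: identical per-line costs for Lines 1--2, 5, 6, 7, and 8--19, culminating in the same dominant per-iteration term $\mathcal{O}(m\cdot r^{2}\cdot t + m^{2}\cdot r\cdot t)$ and the same $\mathcal{O}(m+r)$ space bound. If anything, you are more explicit than the paper in noting that the outer \texttt{for} loop ($n$ advertisers) and the \texttt{while} loop ($\mathcal{O}(m+r)$ iterations) would contribute additional multiplicative factors that the stated bound silently suppresses, which is a fair and accurate reading of the paper's own (looser) derivation.
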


\begin{theorem} Let \( \mathcal{R}: 2^{\mathbb{BS}} \times 2^{\mathcal{H}} \rightarrow \mathbb{R}_{\geq 0} \) be a regret function that is monotonically non-increasing and \( \varepsilon \)-approximately bisubmodular. Let \( \mathcal{A} = \{a_1, \ldots, a_n\} \) be the set of advertisers, each with budget \( \mathcal{K}_i \), and let \( \mathcal{K}_{\max} = \max_{i \in [n]} \mathcal{K}_i \). Let \( L^* = \{(\mathcal{S}_i^*, \mathcal{P}_i^*)\}_{i=1}^n \) be an optimal allocation minimizing total regret as $\mathrm{OPT} = \sum_{i=1}^n \mathcal{R}(\mathcal{S}_i^*, \mathcal{P}_i^*)$. Then Algorithm~\ref{Algo:ABLS} returns an allocation \( \mathcal{L} = \{(\mathcal{S}_i, \mathcal{P}_i)\}_{i=1}^n \) such that
\[
\sum_{i=1}^n \mathcal{R}(\mathcal{S}_i, \mathcal{P}_i) \leq \mathrm{OPT} + \varepsilon \cdot n \cdot \mathcal{K}_{\max}.
\]
\end{theorem}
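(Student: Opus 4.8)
The plan is to analyze the algorithm advertiser-by-advertiser and show that the greedy per-unit regret reduction, combined with the $\varepsilon$-approximate bisubmodularity of $\mathcal{R}$, loses at most an additive $\varepsilon \cdot \mathcal{K}_{\max}$ per advertiser relative to that advertiser's optimal contribution $\mathcal{R}(\mathcal{S}_i^*, \mathcal{P}_i^*)$. Summing over the $n$ advertisers then yields the claimed bound $\mathrm{OPT} + \varepsilon \cdot n \cdot \mathcal{K}_{\max}$. Since the ground sets are depleted disjointly across advertisers (Line~10, 12, 17), and each advertiser is processed independently in its \texttt{while} loop, the per-advertiser argument is the heart of the proof.

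The key steps, in order, are as follows. First I would fix an advertiser $a_i$ and let $(\mathcal{S}_i, \mathcal{P}_i)$ be the set produced by the algorithm and $(\mathcal{S}_i^*, \mathcal{P}_i^*)$ the optimal allocation for $a_i$. Because $\mathcal{R}$ is monotonically non-increasing in each coordinate, adding any elements only decreases regret; the greedy threshold $\epsilon$ in Lines~9 and~16 means the algorithm stops adding elements precisely when no remaining element gives per-unit reduction exceeding $\epsilon$. Second, I would express the gap $\mathcal{R}(\mathcal{S}_i, \mathcal{P}_i) - \mathcal{R}(\mathcal{S}_i^*, \mathcal{P}_i^*)$ as a telescoping sum of marginal gains along a path that adds the elements of $\mathcal{S}_i^* \setminus \mathcal{S}_i$ and $\mathcal{P}_i^* \setminus \mathcal{P}_i$ one at a time on top of $(\mathcal{S}_i, \mathcal{P}_i)$. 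Third, I would bound each such marginal gain: by $\varepsilon$-approximate bisubmodularity (Definition~\ref{Def:epsilon-approximate-bisubmodular}, as extended to $\mathcal{R}$ in Proposition~\ref{Prop:3}), the marginal gain of any element on top of the larger set $(\mathcal{S}_i, \mathcal{P}_i)$ is at most the marginal gain on top of the smaller set the greedy considered, plus $\varepsilon$; and that smaller-set marginal gain, normalized by the element's influence, was at most the stopping threshold times that influence — yielding a per-element slack controlled by $\varepsilon$ and the payment $\mathcal{K}_i$. Fourth, I would collect the per-element slacks and argue that, after accounting for the budget constraint $\mathcal{K}_i$ and the bounded number of elements an advertiser can absorb, the total slack for $a_i$ is at most $\varepsilon \cdot \mathcal{K}_i \le \varepsilon \cdot \mathcal{K}_{\max}$.

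The main obstacle will be Step~three: carefully relating the element-wise marginal gains that the \emph{optimal} set realizes on top of the greedy solution to the per-unit thresholds that the \emph{greedy} enforced on top of its \emph{own} intermediate sets. The subtlety is twofold — first, the approximate bisubmodular inequality only directly compares a fixed element's marginal gain across nested sets, so one must chain it through the sequence of intermediate greedy sets, accumulating an $\varepsilon$ per step; second, the greedy stopping condition is on per-unit (influence-normalized) reduction, so translating it back into an absolute regret bound requires controlling the influence values $\Phi(\{b\},\emptyset)$ and $\Phi(\emptyset,\{s\})$ from below, or equivalently bounding the number of elements and folding the cost/payment structure into the $\mathcal{K}_{\max}$ factor. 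Once that accounting is set up cleanly, the telescoping and summation over advertisers are routine.
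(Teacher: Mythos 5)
Your plan follows essentially the same route as the paper's proof: a per-advertiser comparison of the greedy allocation against that advertiser's optimal pair $(\mathcal{S}_i^*,\mathcal{P}_i^*)$, using the $\varepsilon$-approximately bisubmodular diminishing-returns inequality chained across the greedy steps so that the slack accumulates to at most $\varepsilon\cdot\mathcal{K}_i$, followed by summing over advertisers and bounding $\mathcal{K}_i\le\mathcal{K}_{\max}$. The obstacle you flag in your third step --- converting the per-unit (influence- or cost-normalized) greedy criterion and the chained $\varepsilon$ terms into the absolute $\varepsilon\cdot\mathcal{K}_i$ bound --- is exactly the step the paper itself treats only sketchily, and your telescoping variant (adding the elements of $\mathcal{S}_i^*\setminus\mathcal{S}_i$ and $\mathcal{P}_i^*\setminus\mathcal{P}_i$ on top of the final greedy set, rather than comparing marginals at each intermediate greedy set as the paper does) is a cosmetic rather than substantive difference.
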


\begin{proof}
Fix any advertiser \( a_i \in \mathcal{A} \), and let \( (\mathcal{S}_i^*, \mathcal{P}_i^*) \) and \( (\mathcal{S}_i, \mathcal{P}_i) \) be its optimal and Algorithm \ref{Algo:ABLS} allocations, respectively. Denote \( \mathcal{U}_i^* = \mathcal{S}_i^* \cup \mathcal{P}_i^* \) and let \( \mathcal{U}_i^t \) be the greedy allocation after \( t \) steps. At each step \( t \), the algorithm selects an element \( e_t \) that maximizes regret reduction per unit cost wll be
\[
e_t = \arg\max_{e} \frac{\mathcal{R}(\mathcal{U}_i^{t-1}) - \mathcal{R}(\mathcal{U}_i^{t-1} \cup \{e\})}{\mathcal{C}(e)}.
\]
By $\varepsilon$-approximate bisubmodularity, for any $e \in \mathcal{U}_i^* \setminus \mathcal{U}_i^{t-1}$, the marginal gain satisfies that $\mathcal{R}(\mathcal{U}_i^{t-1}) - \mathcal{R}(\mathcal{U}_i^{t-1} \cup \{e\}) \geq \mathcal{R}(\emptyset) - \mathcal{R}(\{e\}) - \varepsilon \cdot t$. Hence, the gain \( \Delta^t = \mathcal{R}^{t-1} - \mathcal{R}^t \) from \( e_t \) satisfies
\[
\Delta^t \geq \max_{e \in \mathcal{U}_i^* \setminus \mathcal{U}_i^{t-1}} \left[ \frac{\mathcal{R}(\mathcal{U}_i^{t-1}) - \mathcal{R}(\mathcal{U}_i^{t-1} \cup \{e\})}{\mathcal{C}(e)} \right] \cdot \mathcal{C}(e_t).
\]
Summing over all steps, total regret reduction satisfies $\mathcal{R}(\emptyset) - \mathcal{R}(\mathcal{S}_i, \mathcal{P}_i) \geq \mathcal{R}(\emptyset) - \mathcal{R}(\mathcal{S}_i^*, \mathcal{P}_i^*) - \varepsilon \cdot \mathcal{K}_i$, which gives $\mathcal{R}(\mathcal{S}_i, \mathcal{P}_i) \leq \mathcal{R}(\mathcal{S}_i^*, \mathcal{P}_i^*) + \varepsilon \cdot \mathcal{K}_i$.
Summing over all \( i \in [n] \), and using \( \mathcal{K}_i \leq \mathcal{K}_{\max} \), the total regret is bounded by
\[
\sum_{i=1}^n \mathcal{R}(\mathcal{S}_i, \mathcal{P}_i) \leq \mathrm{OPT} + \varepsilon \cdot n \cdot \mathcal{K}_{\max}.
\]
\end{proof}

\section{Experimental Evaluations} \label{Sec:EE}
In this section, we describe the experimental evaluation of the proposed solution approaches. Initially, we start by describing the datasets used in our experiments.
\subsection{Dataset Descriptions.}
For experimental evaluation, we used three datasets: Trajectory, Social Network, and Billboard. The trajectory data sets used in our study have also been used by existing studies \cite{10.1145/3308558.3313635,9099985}. This dataset includes long-term (about 22 months from April 2012 to January 2014) global-scale check-in data collected from Foursquare \footnote{\url{https://sites.google.com/site/yangdingqi/home}}. The first dataset contains 22,809,624 check-ins by 114,324 users on 3,820,891 venues. The second dataset contains 90,048,627 check-ins from 2,733,324 users in 11,180,160 venues. These two datasets contain global check-ins with countries like 'US', 'Canada', 'Netherlands', etc. However, only 'US' is in our interest. So, we filter out only the country 'US' and combine these two datasets into one containing 1,24,539 check-ins, of which 51,318 are unique users. Next, the social network dataset includes two snapshots of user social networks before and after the check-in data collection period. The social network data includes 363,704 (old) and 607,333 (new) friendships for the same users mentioned above in the trajectory dataset. Our work aims to filter out these two datasets for the country 'US' and generate two new datasets, which include 95,994 (old) and 1,29,864 (new) friendships. The billboard data sets are crawled from LAMAR\footnote{\url{http://www..lamar.com/InventoryBrowser}}, one of the largest billboard providers worldwide. The dataset for New York City includes 716 billboards, i.e., 10,31,040 billboard slots, and Los Angeles contains 1483 billboards, i.e., 21,35,520 billboard slots. We combine these two datasets to create a new billboard dataset for the 'US'. So, the new dataset contains 31,66,560 billboard slots used in our work. Finally, all experiments were run on an HP Z4 workstation with an Xeon 3.50 GHz CPU and 64 GB RAM. 
\begin{figure*}[ht]
    \centering
    \setlength{\tabcolsep}{2pt} 
    \renewcommand{\arraystretch}{0.9} 
    \begin{tabular}{ccccc}     
        \includegraphics[width=0.185\linewidth]{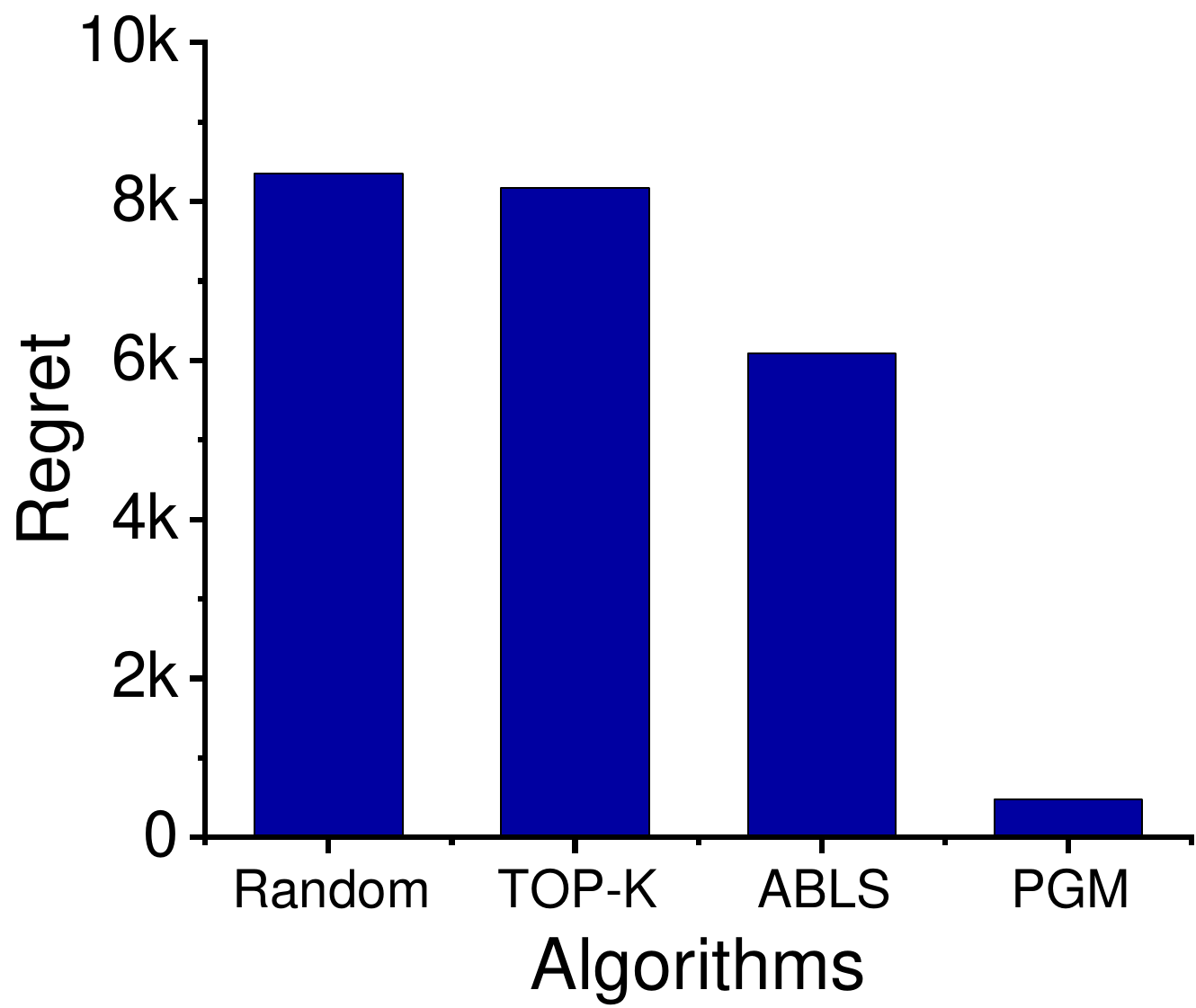} &
        \includegraphics[width=0.185\linewidth]{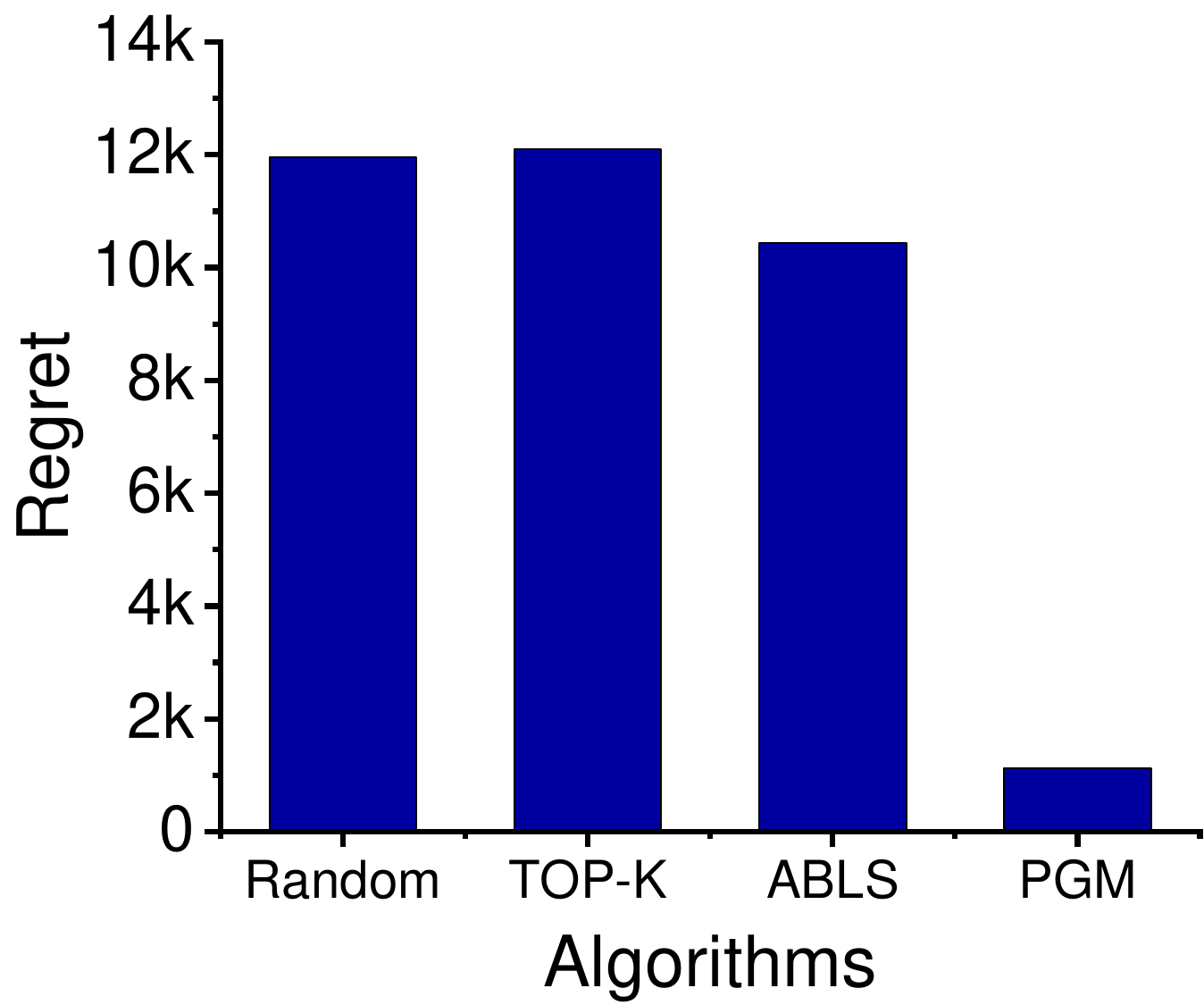} &
        \includegraphics[width=0.185\linewidth]{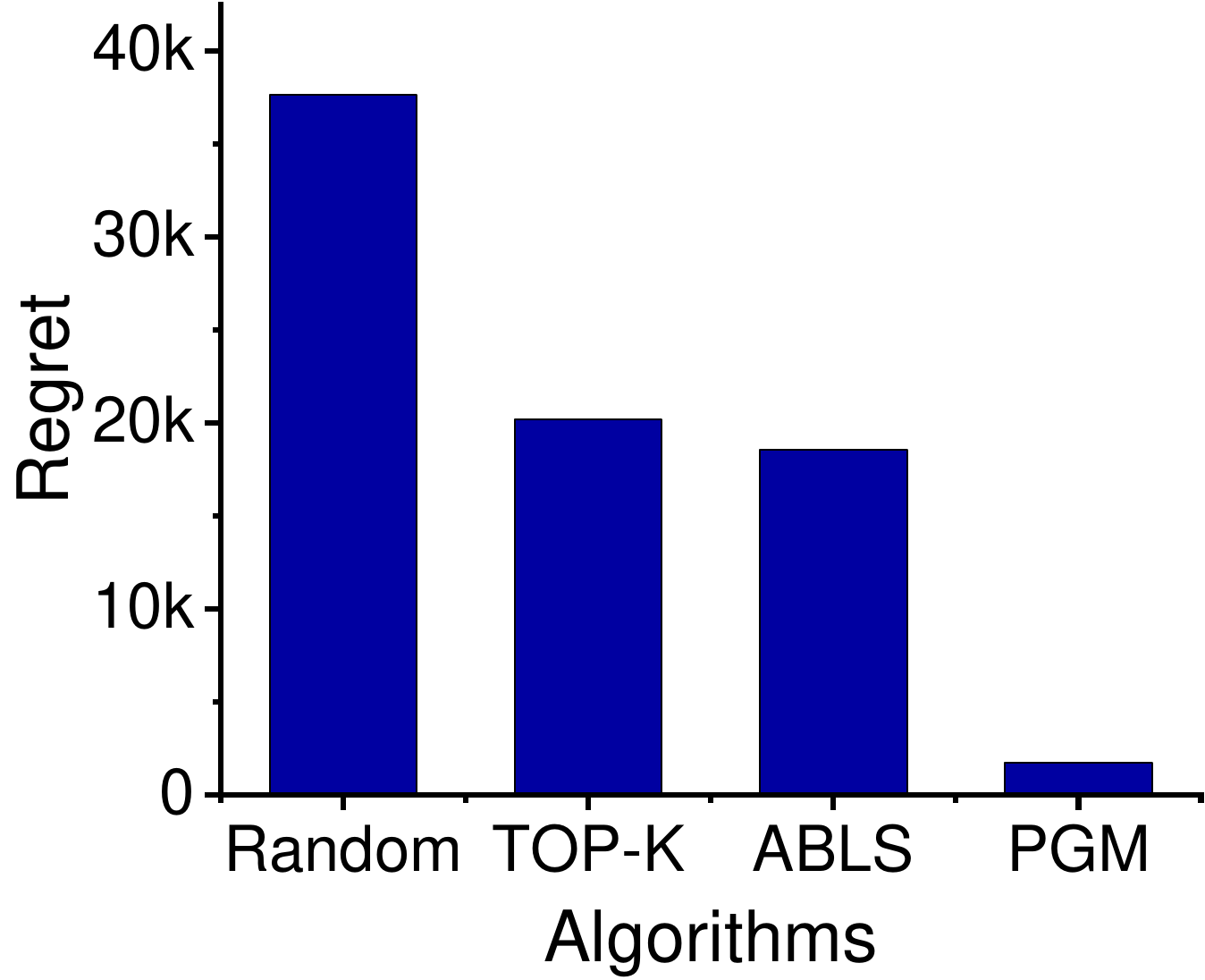} &
        \includegraphics[width=0.185\linewidth]{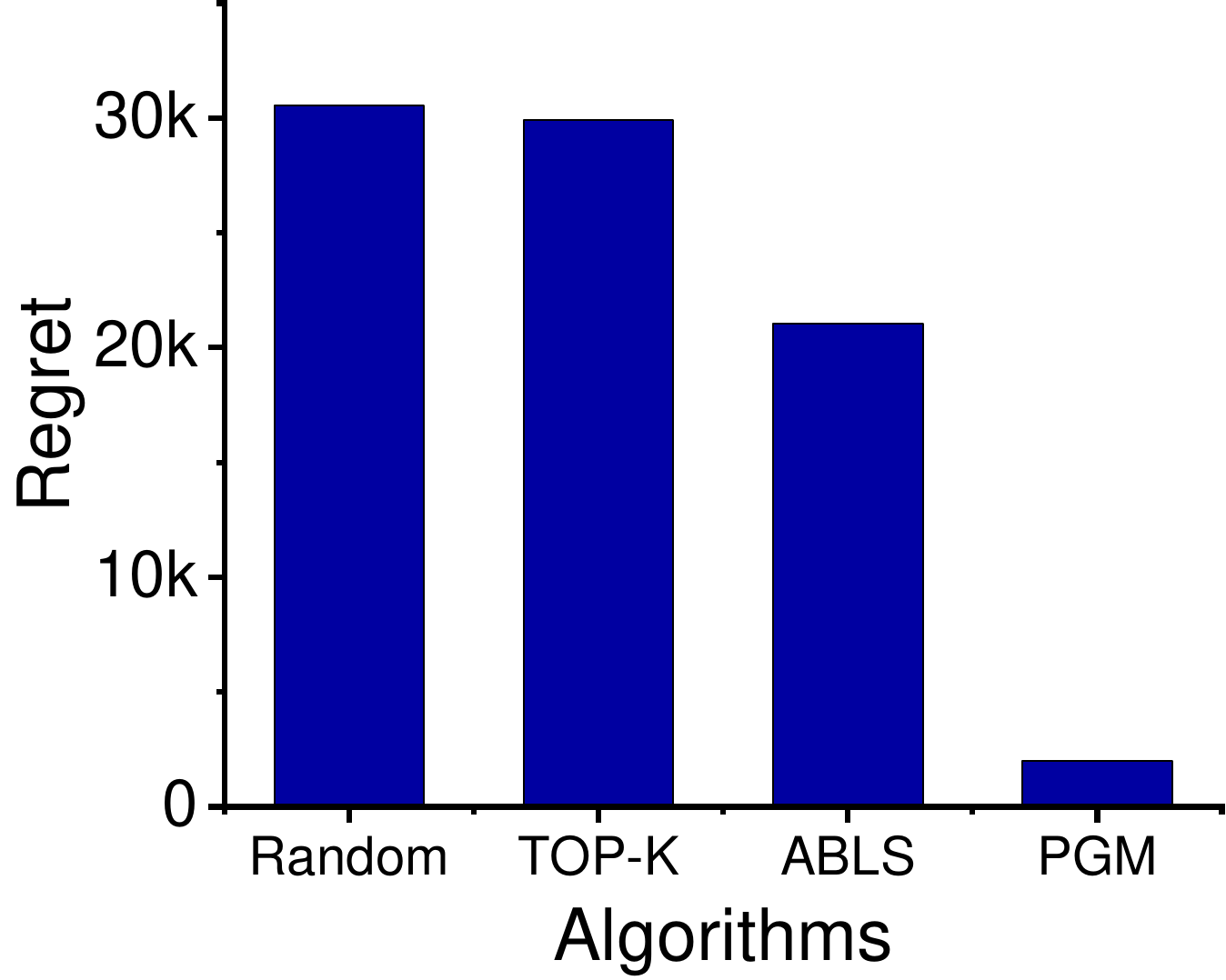} &
        \includegraphics[width=0.185\linewidth]{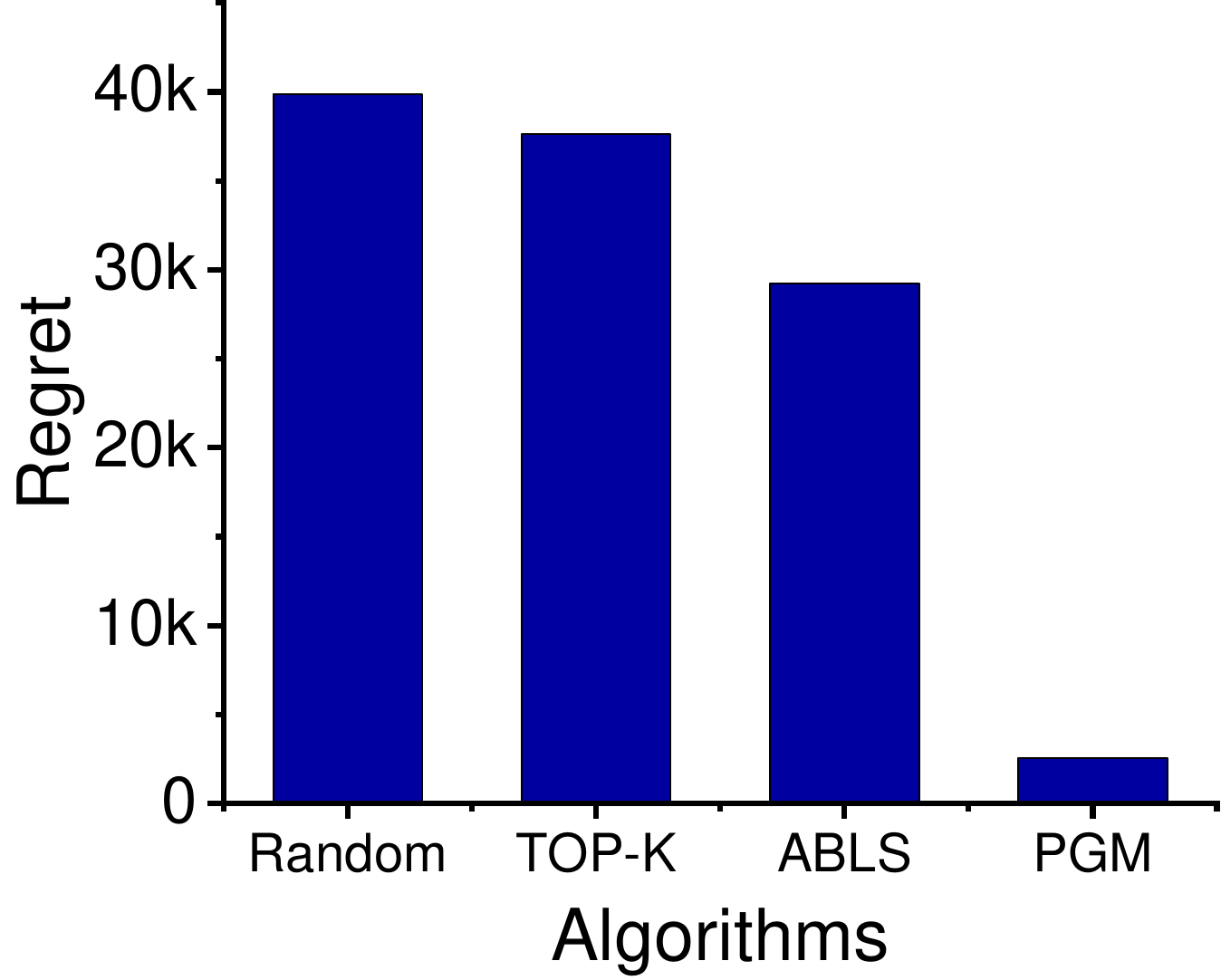} \\
        {\tiny (a) $\alpha = 40 \%$} &
        {\tiny (b) $\alpha = 60 \%$} &
        {\tiny (c) $\alpha = 80 \%$} &
        {\tiny (d) $\alpha = 100 \%$} &
        {\tiny (e) $\alpha = 120 \%$} \\
        \includegraphics[width=0.185\linewidth]{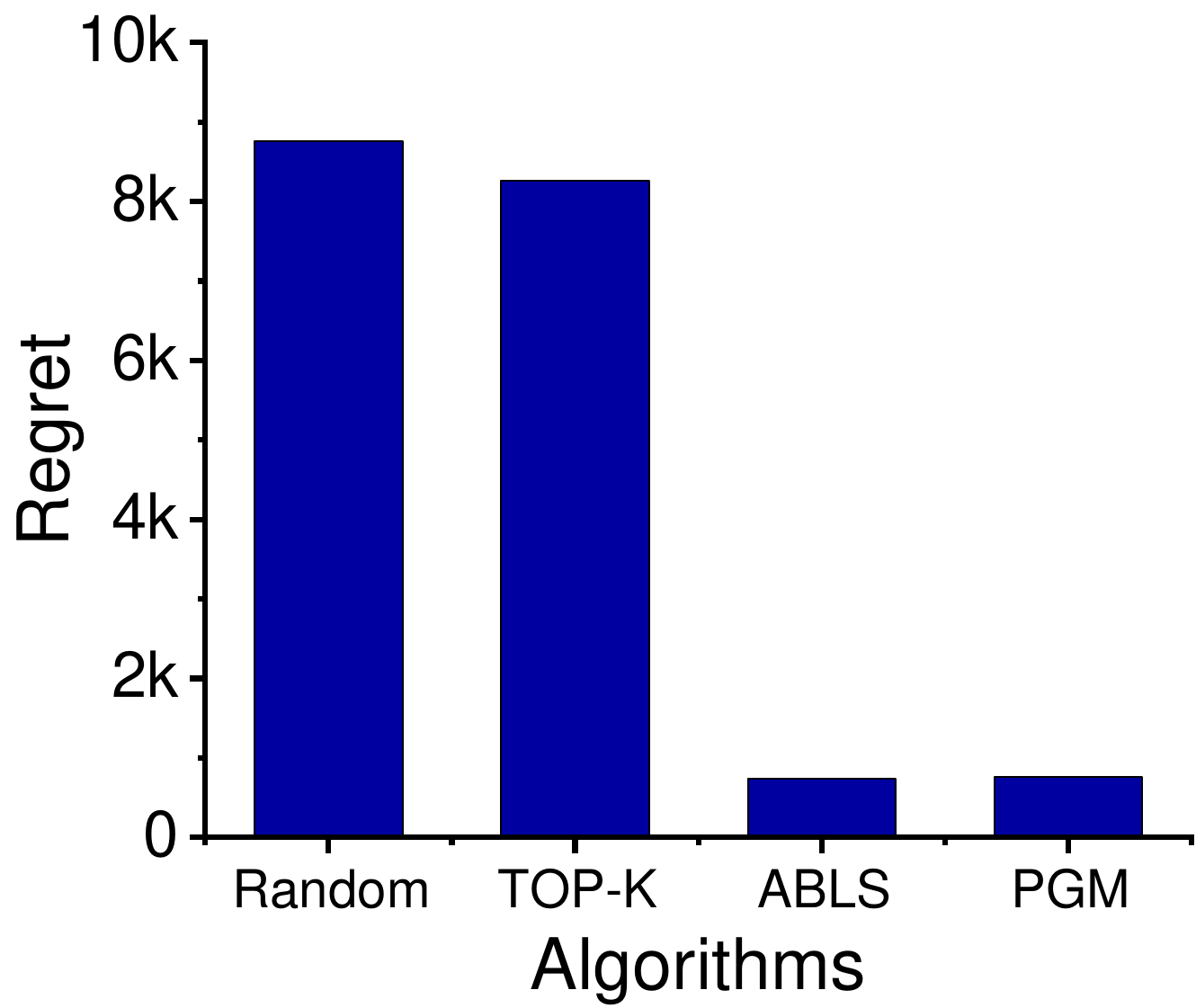} &
        \includegraphics[width=0.185\linewidth]{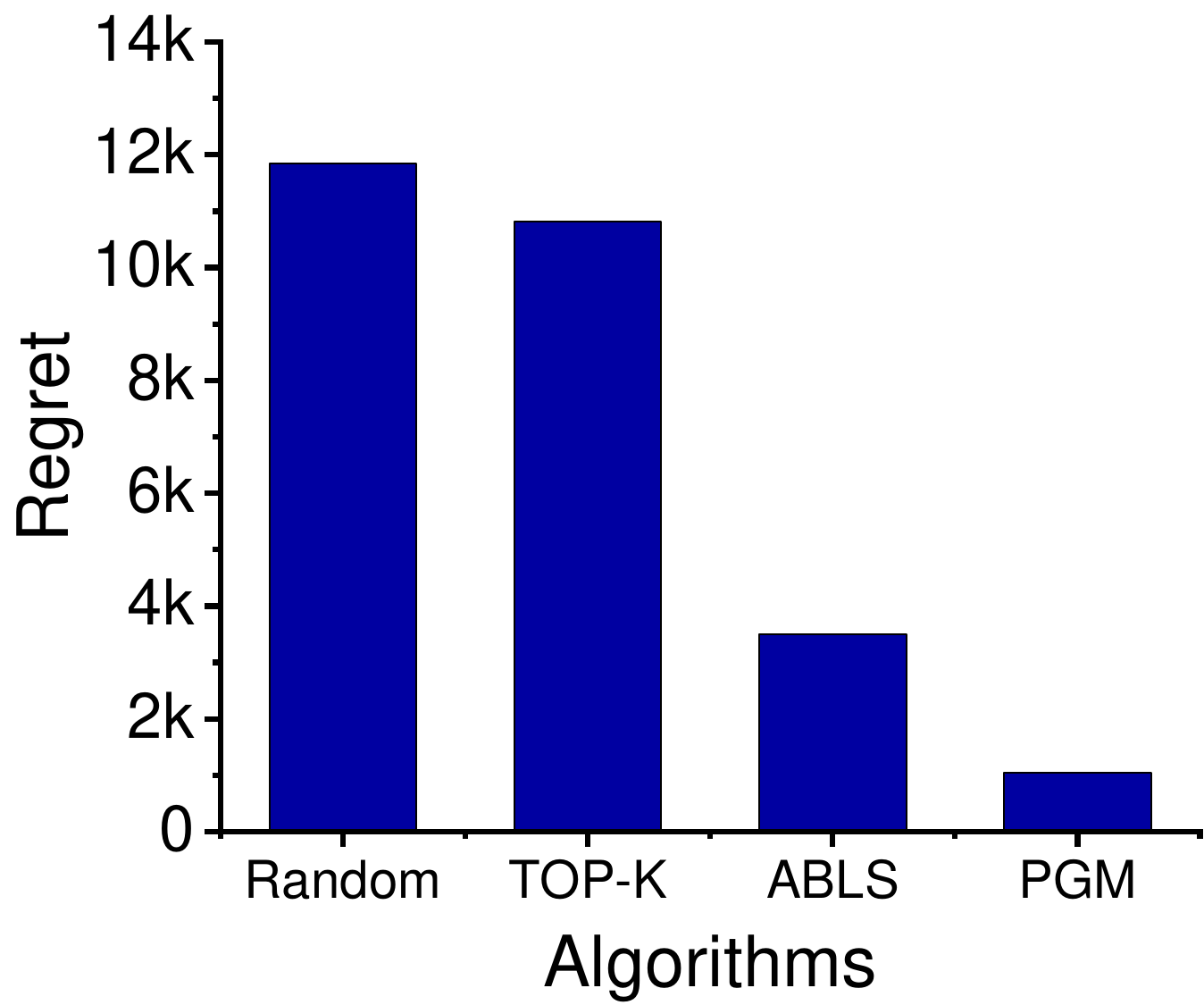} &
        \includegraphics[width=0.185\linewidth]{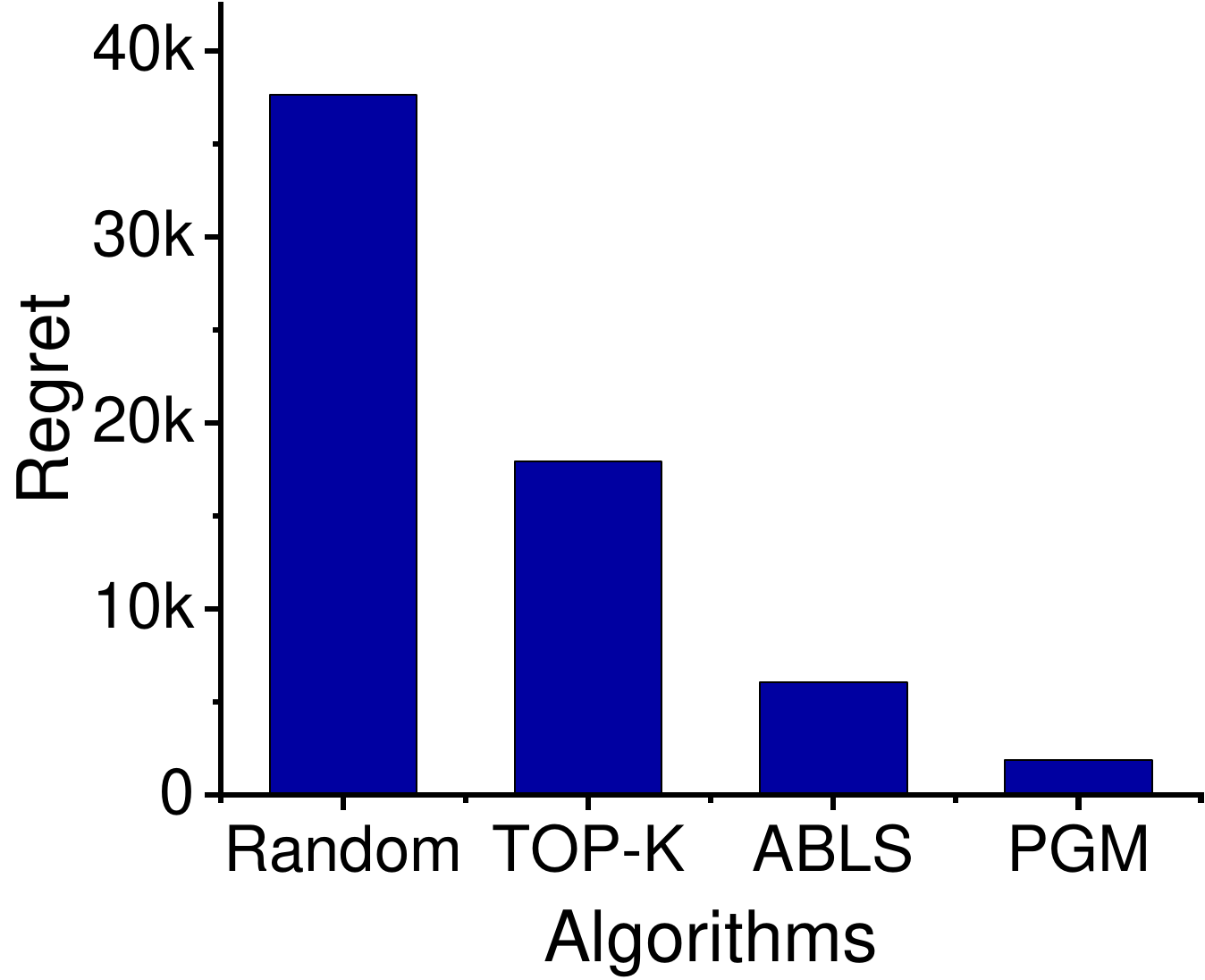} &
        \includegraphics[width=0.185\linewidth]{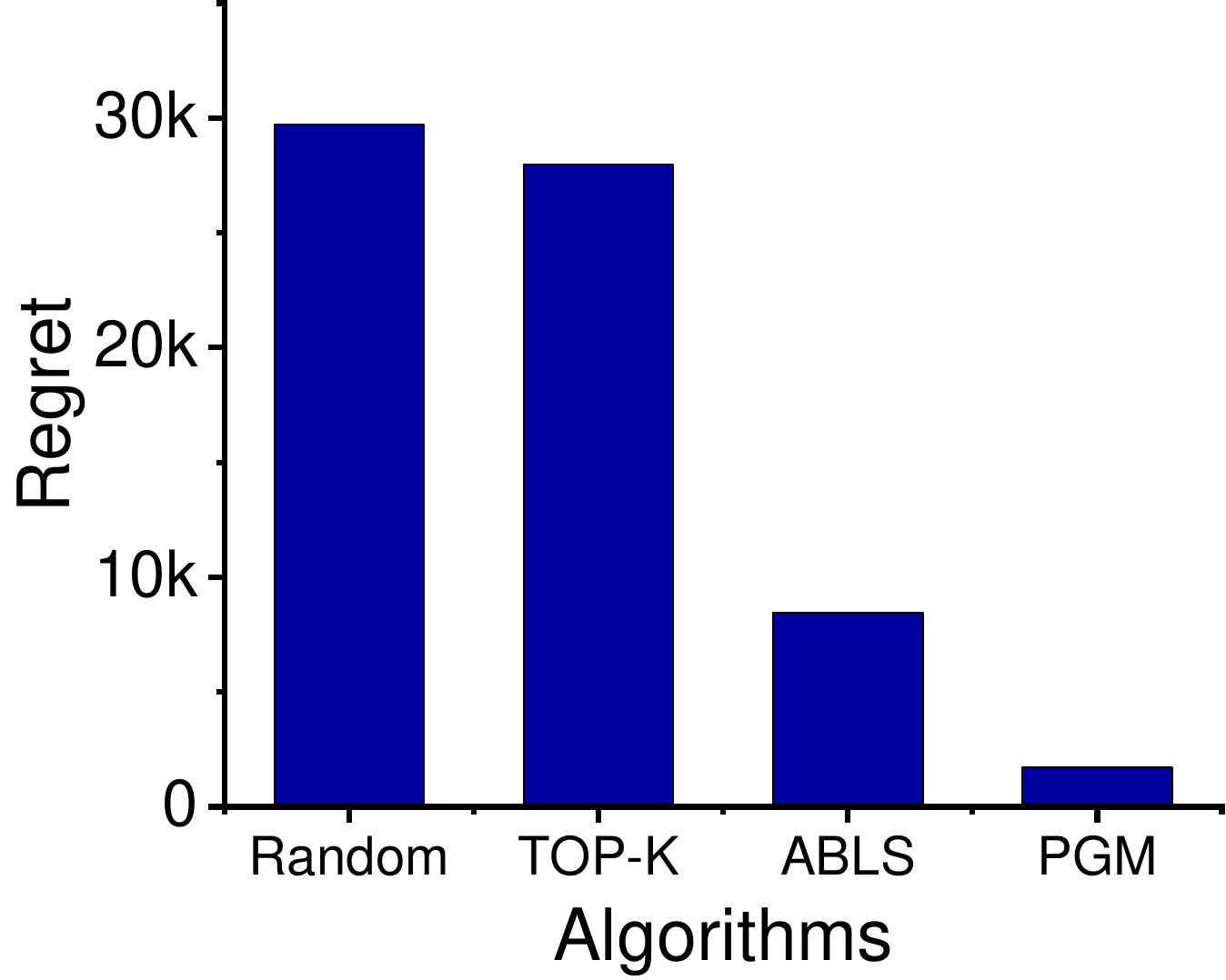} &
        \includegraphics[width=0.185\linewidth]{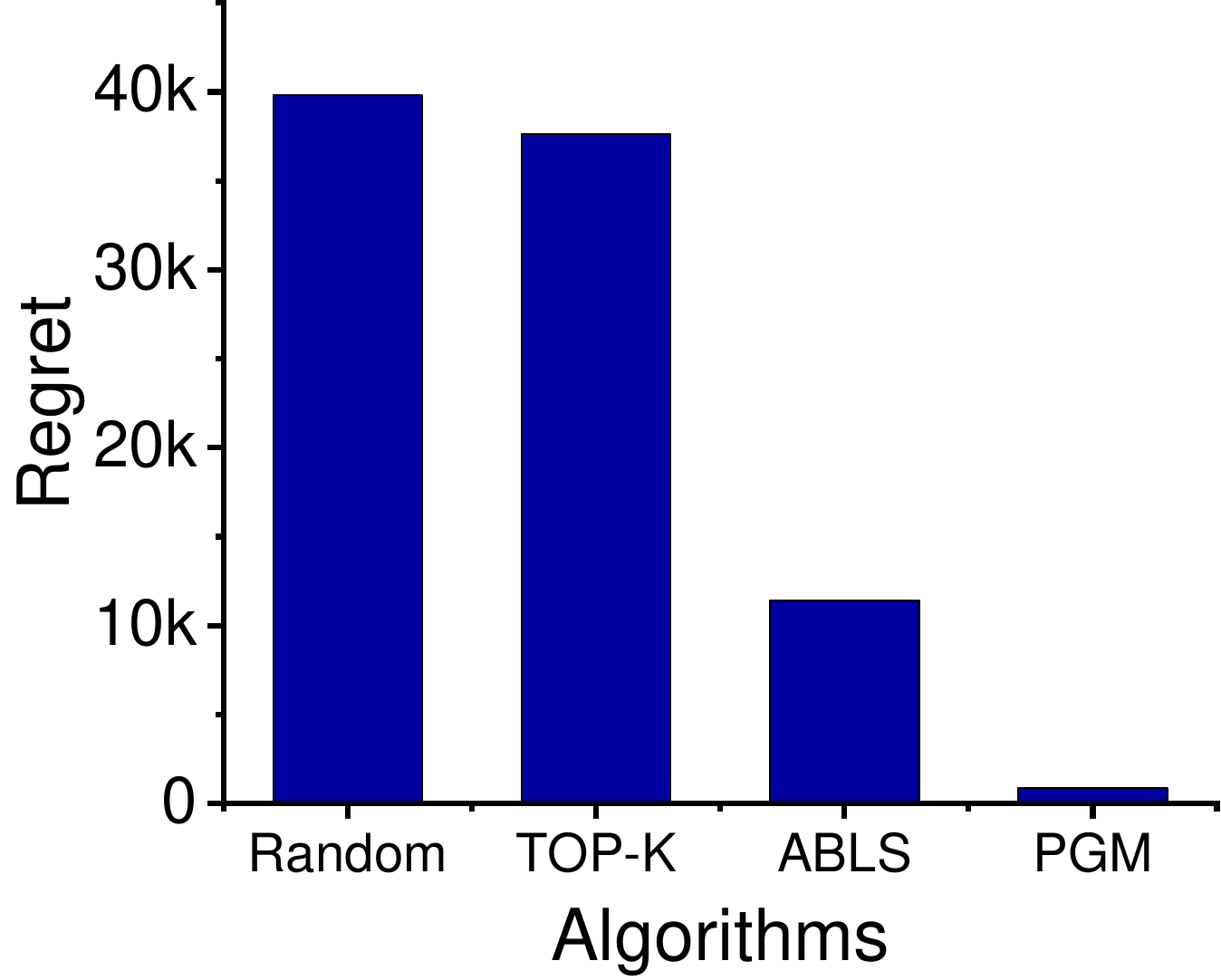} \\
        
        {\tiny (f) $\alpha = 40 \%$} &
        {\tiny (g) $\alpha = 60 \%$} &
        {\tiny (h) $\alpha = 80 \%$} &
        {\tiny (i) $\alpha = 100 \%$} &
        {\tiny (j) $\alpha = 120 \%$} \\
        \includegraphics[width=0.185\linewidth]{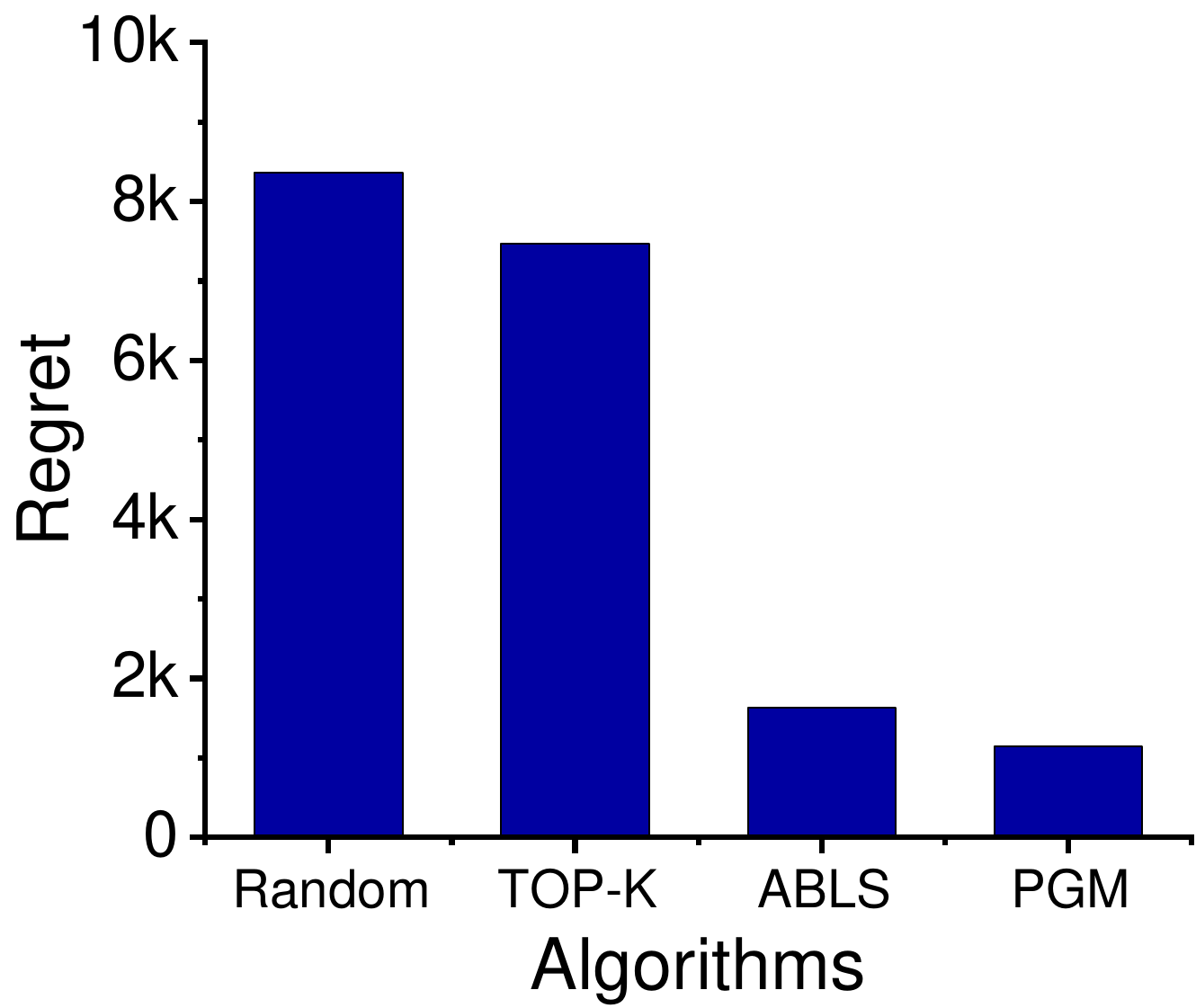} &
        \includegraphics[width=0.185\linewidth]{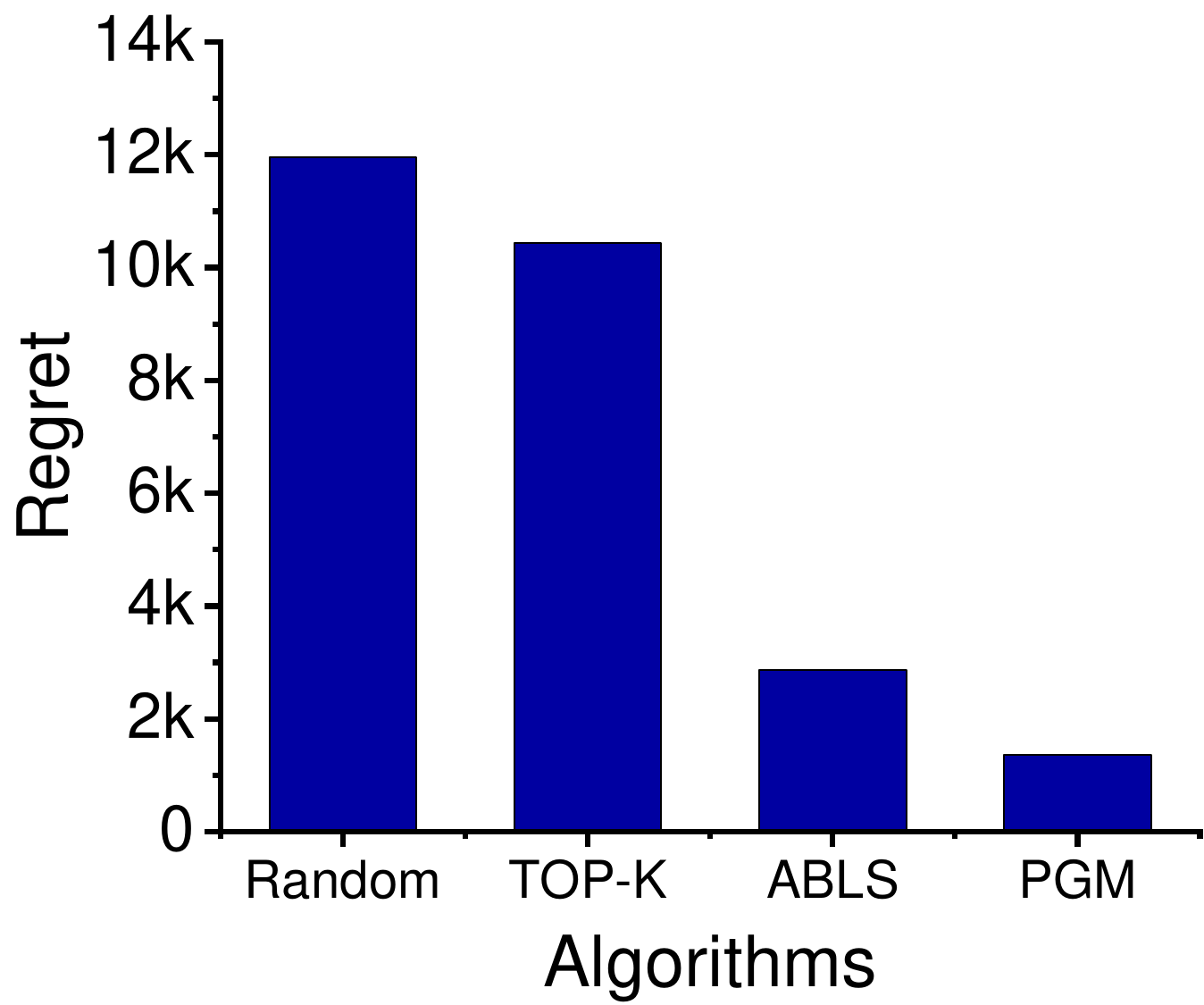} &
        \includegraphics[width=0.185\linewidth]{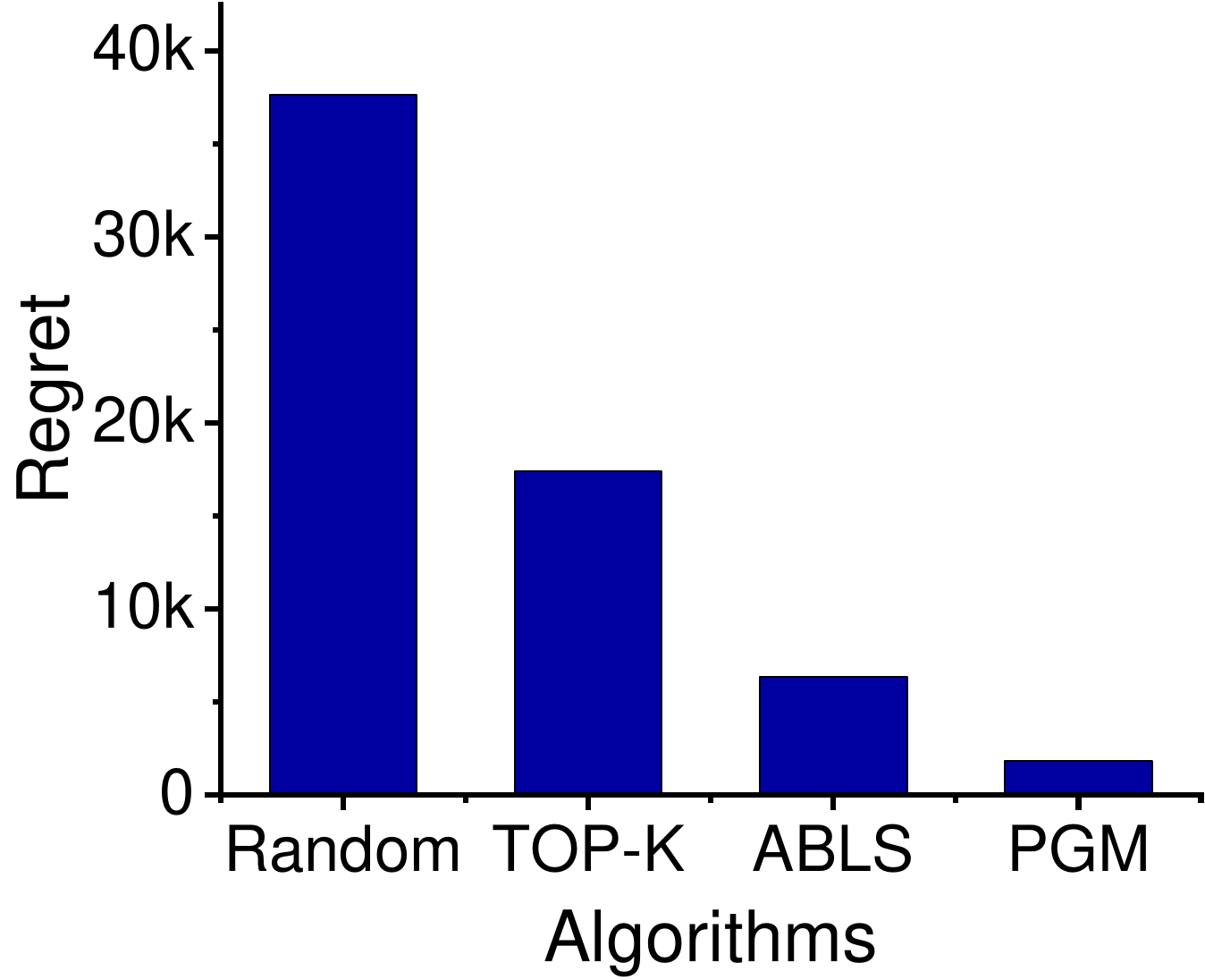} &
        \includegraphics[width=0.185\linewidth]{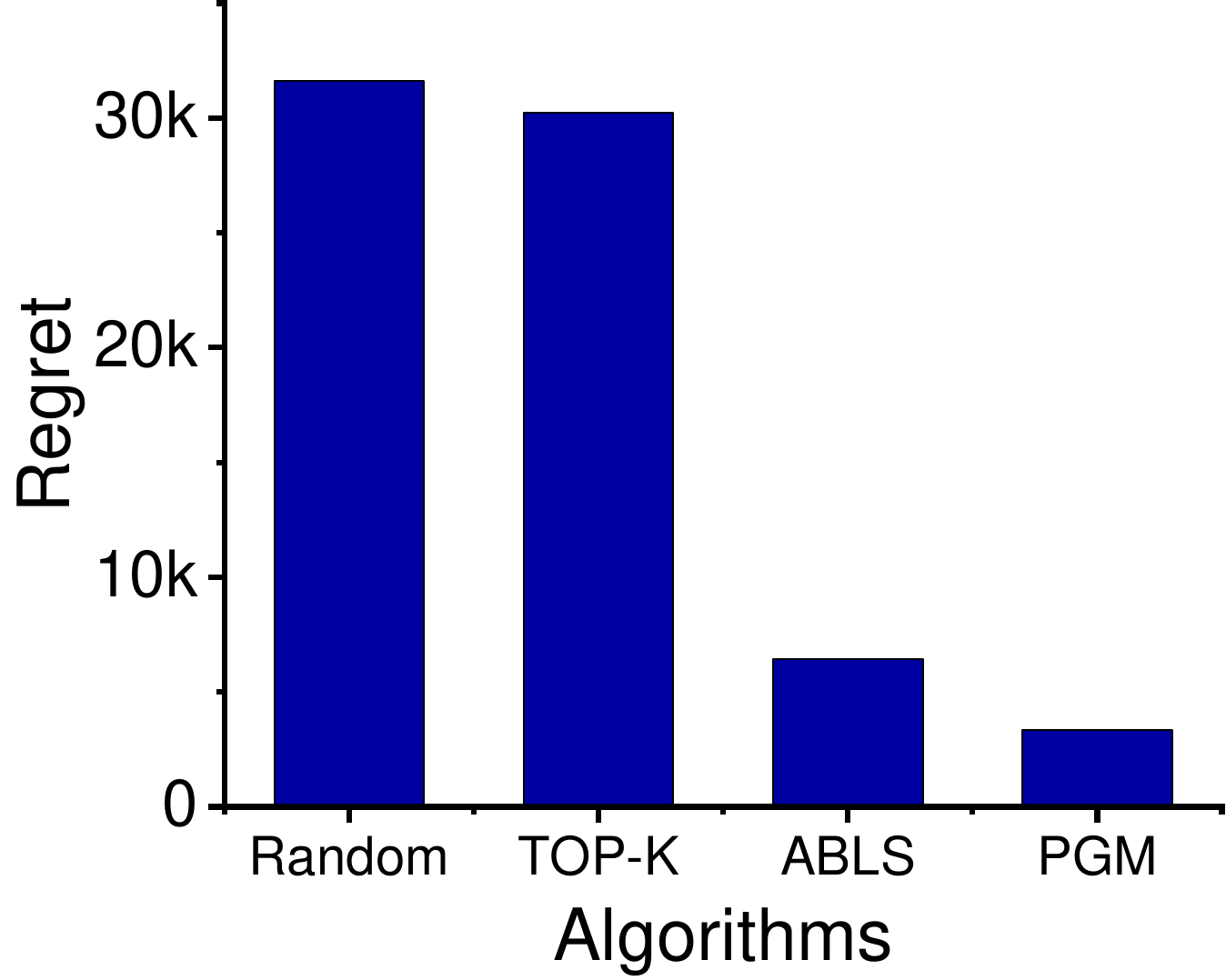} &
        \includegraphics[width=0.185\linewidth]{Regret/Weighted/Regret2W.pdf} \\
        {\tiny (k) $\alpha = 40 \%$} &
        {\tiny $(\ell)$ $\alpha = 60 \%$} &
        {\tiny (m) $\alpha = 80 \%$} &
        {\tiny (n) $\alpha = 100 \%$} &
        {\tiny (o) $\alpha = 120 \%$} \\
    \end{tabular}
    \caption{Regret on varying $\alpha$, when $\lambda = 5\%, \mathcal{|A|} = 20$ for Uniform (a-e), Trivalency (f-j), Weighted Cascade (k-o)Settings}
    \label{Fig:Regret-Uniform}
\end{figure*}

\begin{figure*}[ht]
    \centering
    \setlength{\tabcolsep}{2pt} 
    \renewcommand{\arraystretch}{0.9} 
    \begin{tabular}{ccccc}          
        \includegraphics[width=0.185\linewidth]{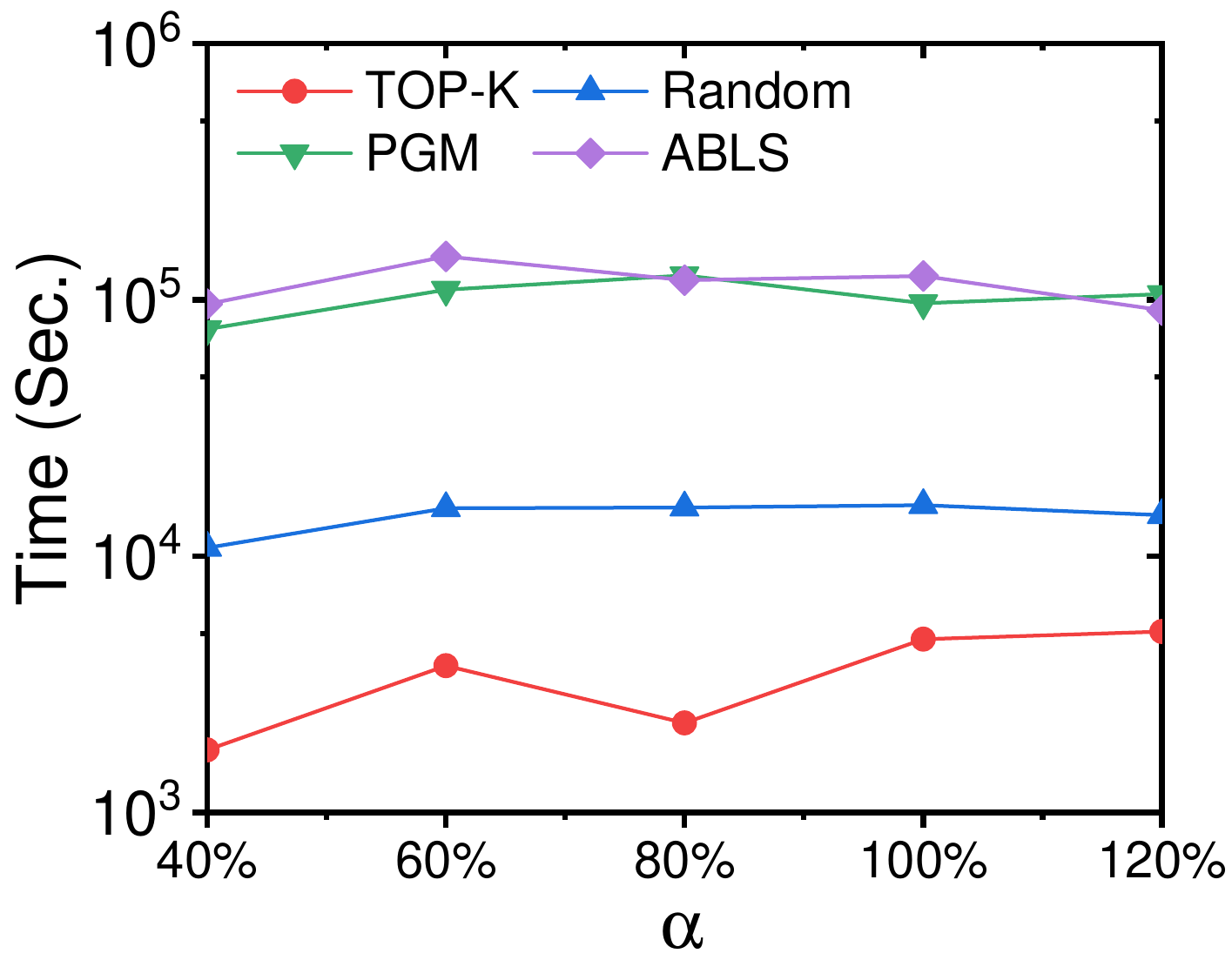} &
        \includegraphics[width=0.185\linewidth]{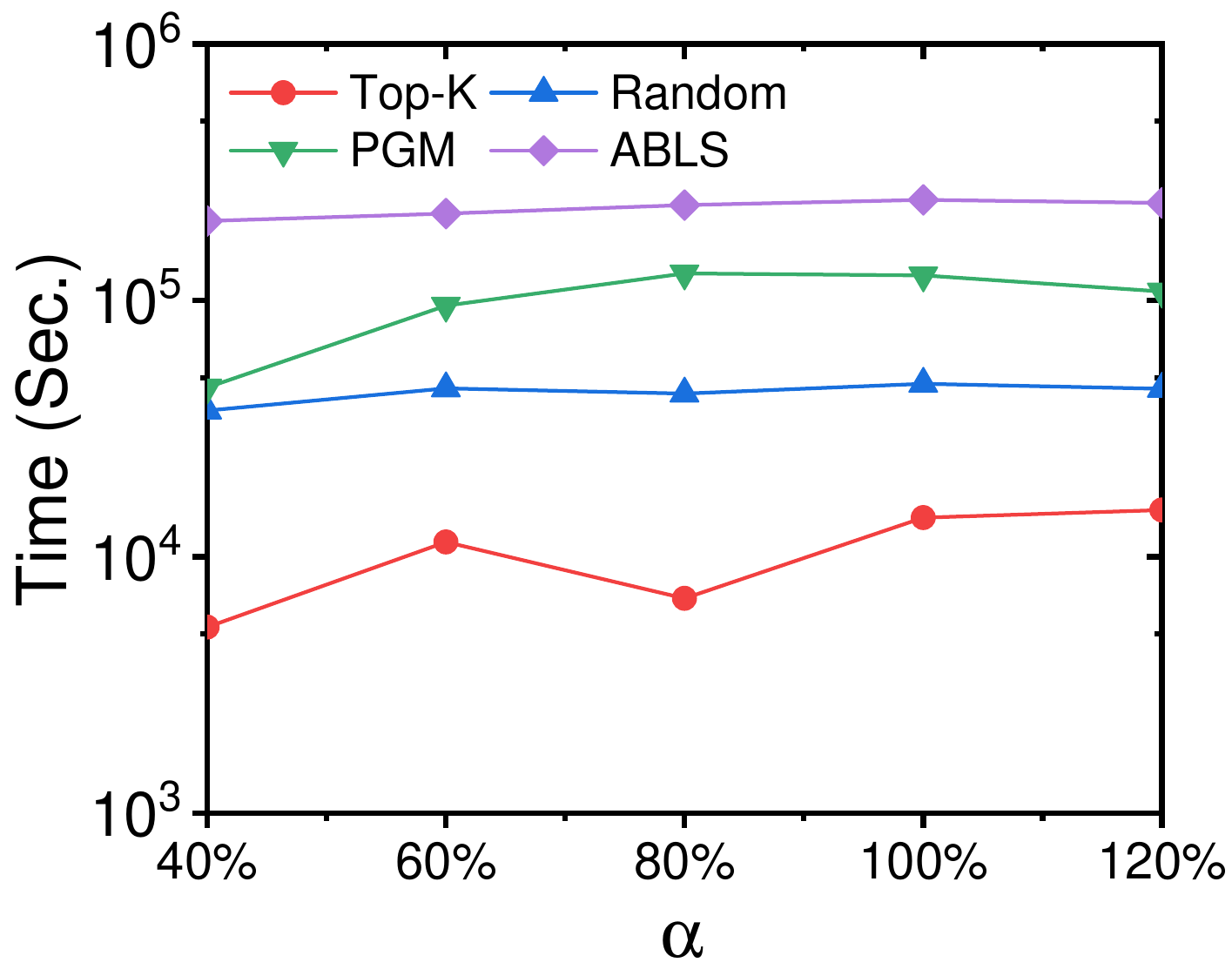} &
        \includegraphics[width=0.185\linewidth]{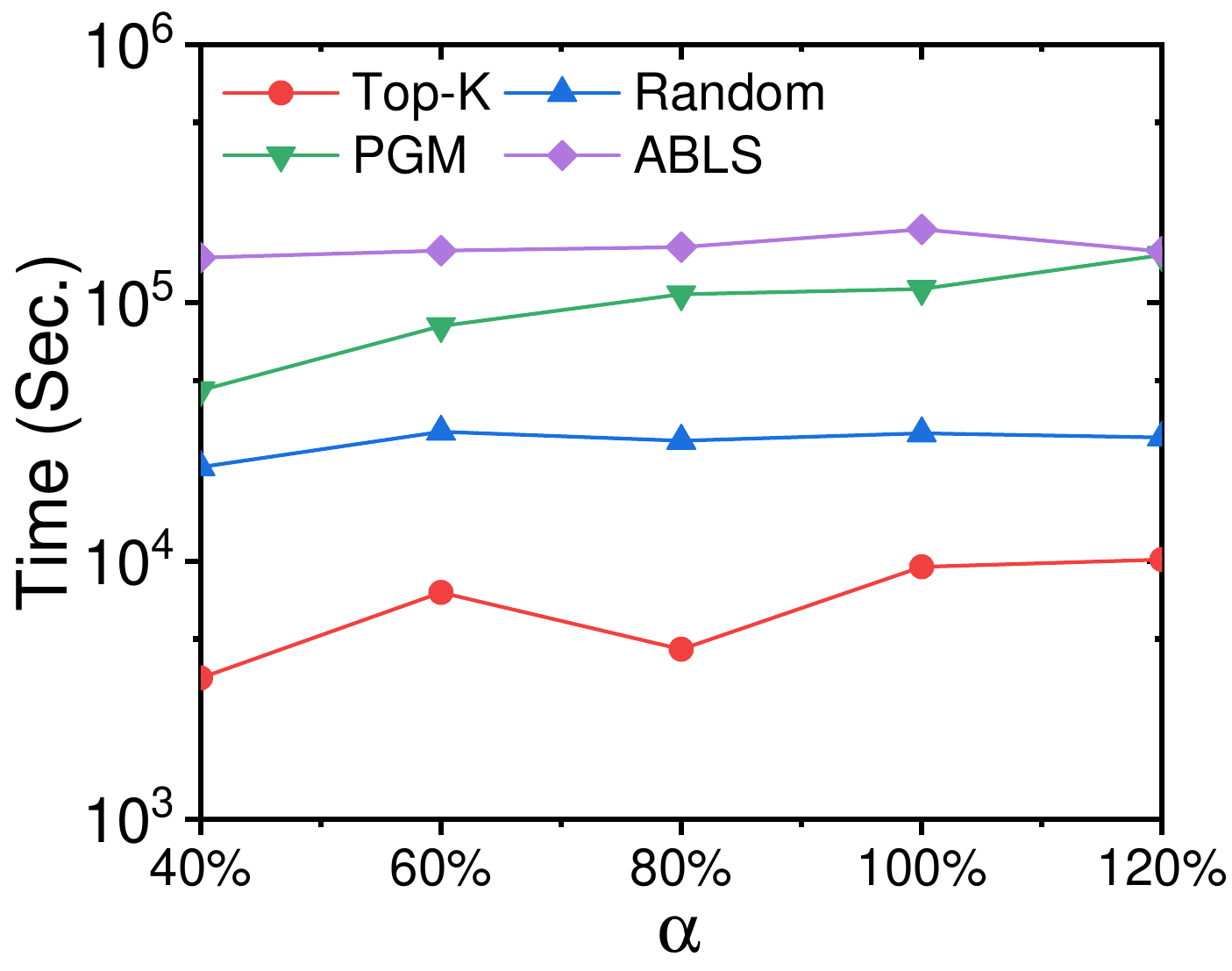} &
        \includegraphics[width=0.185\linewidth]{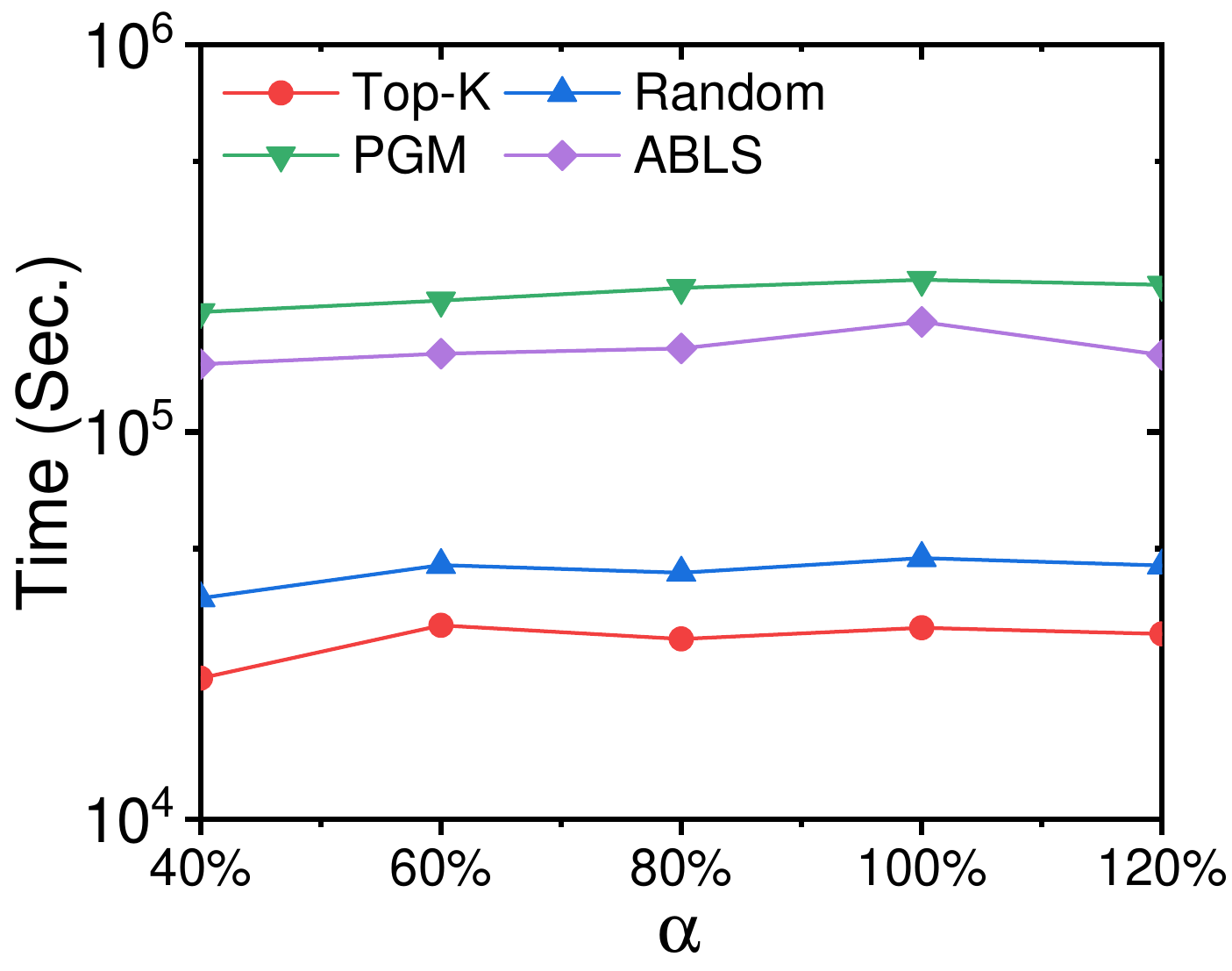} & 
        \includegraphics[width=0.185\linewidth]{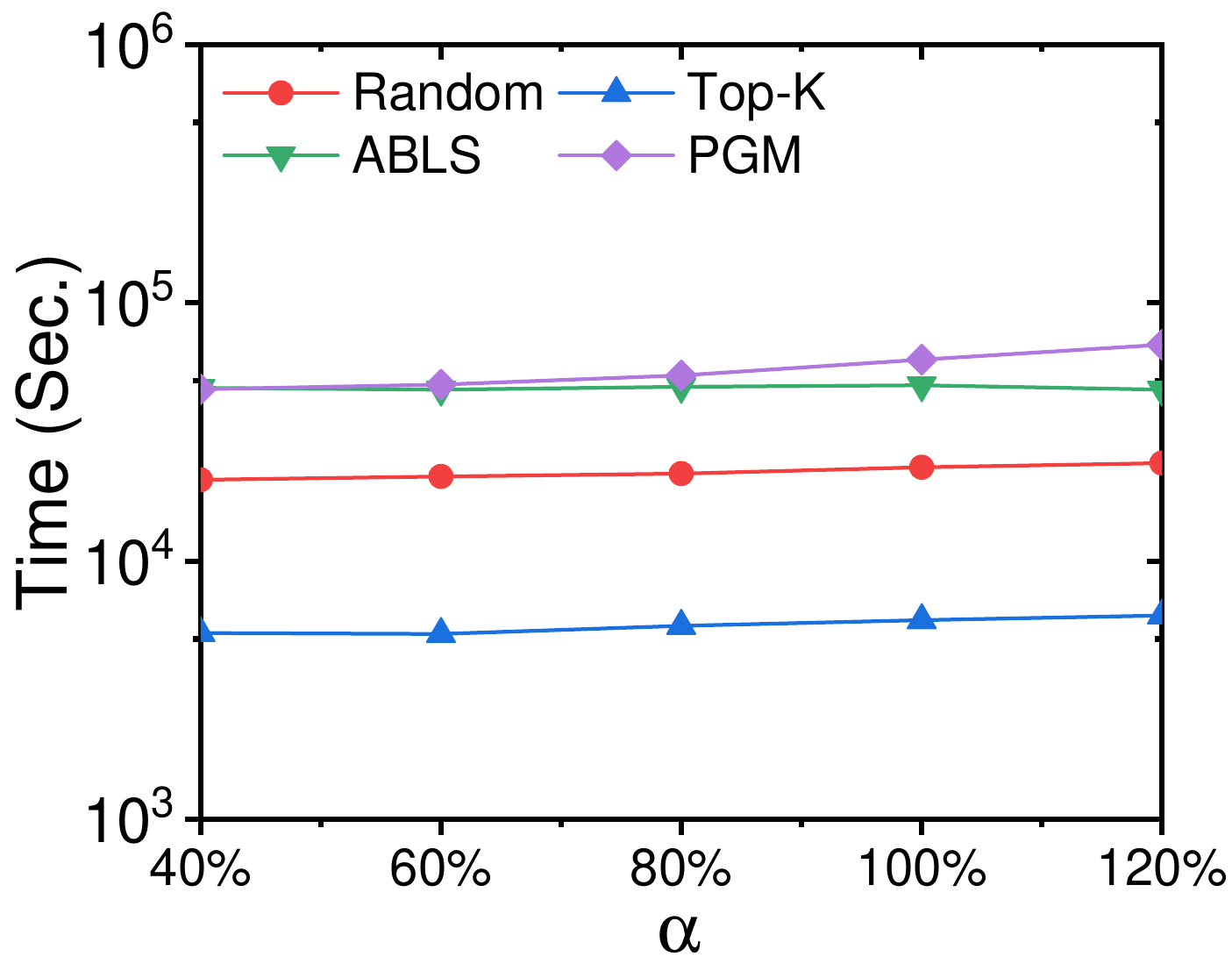} \\

        {\tiny (a) $\lambda = 5\%, \mathcal{|A|} = 20$} &
        {\tiny (b) $\lambda = 5\%, \mathcal{|A|} = 20$} &
        {\tiny (c) $\lambda = 5\%, \mathcal{|A|} = 20$} &
        {\tiny (d) $\lambda = 1\%, \mathcal{|A|} = 100$} &
        {\tiny (e) $\lambda = 20\%, \mathcal{|A|} = 5$} \\
        \includegraphics[width=0.185\linewidth]{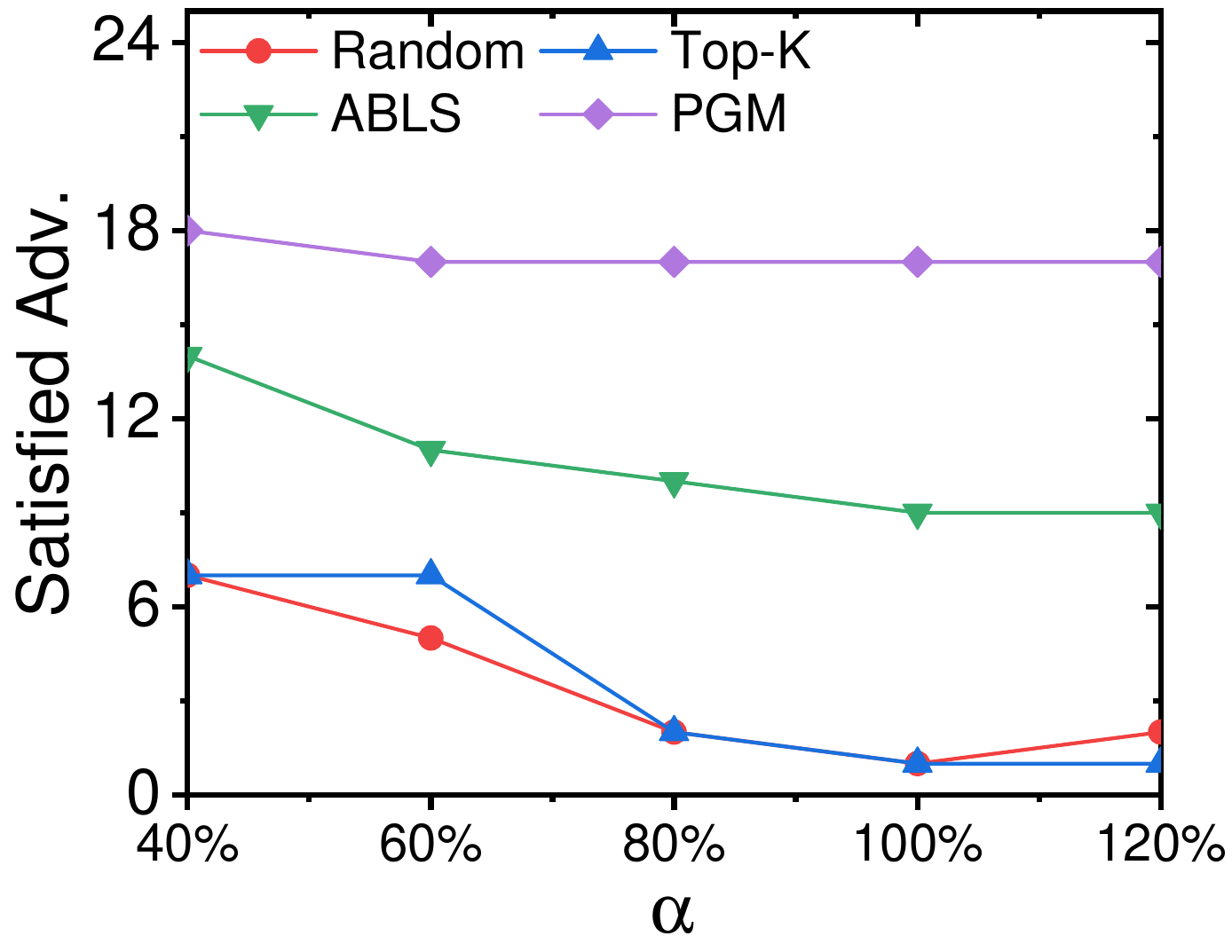} & 
        \includegraphics[width=0.185\linewidth]{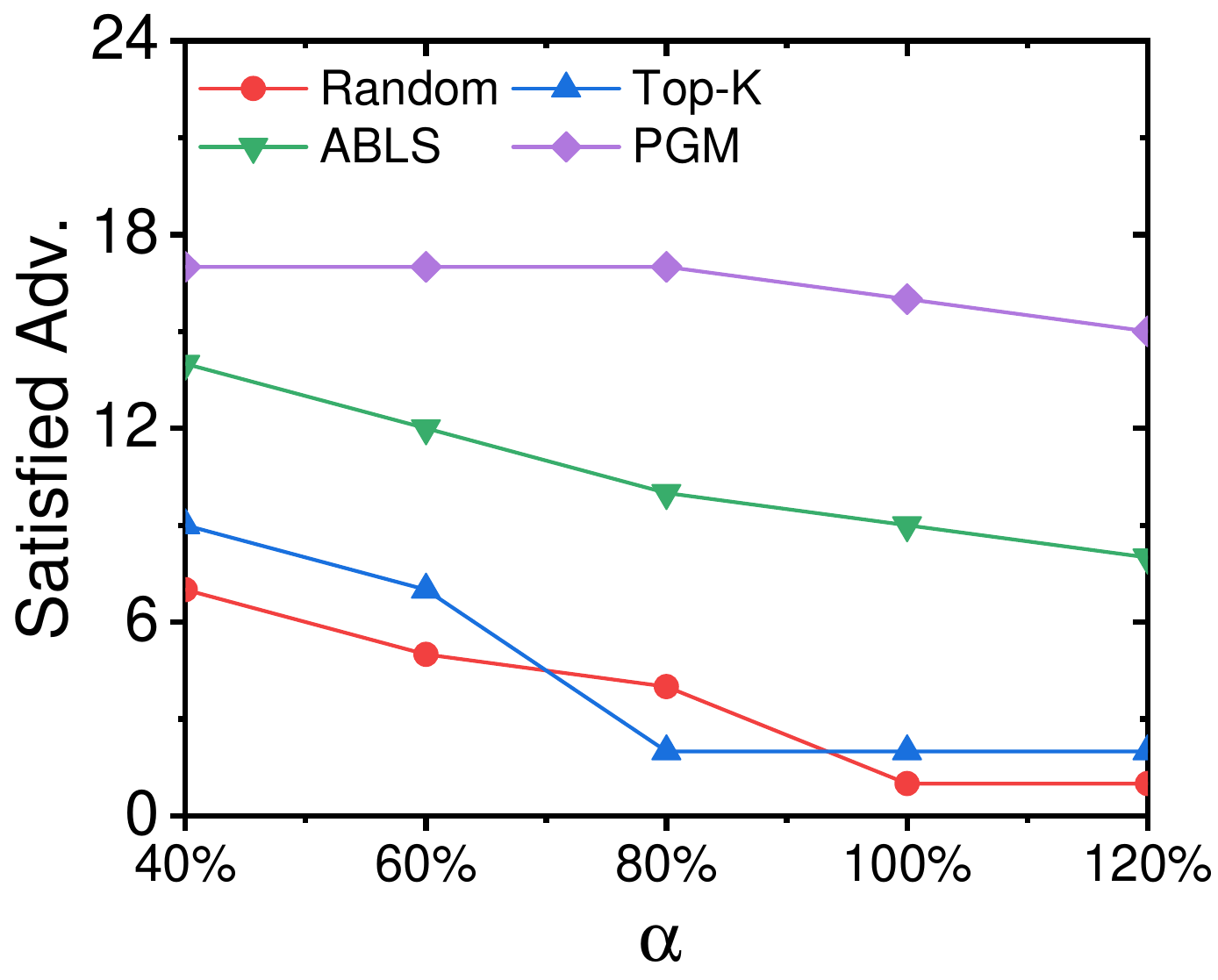} &
        \includegraphics[width=0.185\linewidth]{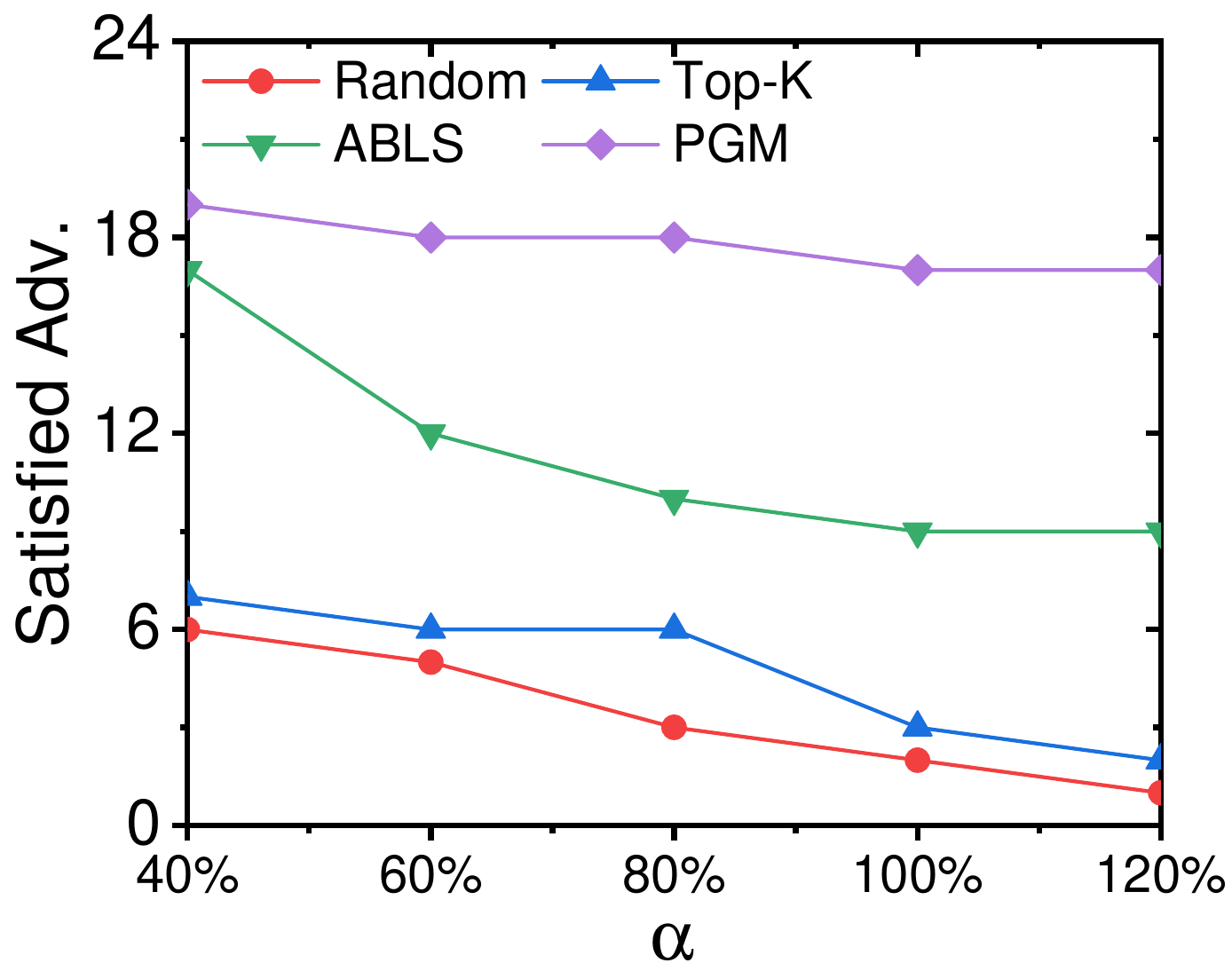} &
        \includegraphics[width=0.185\linewidth]{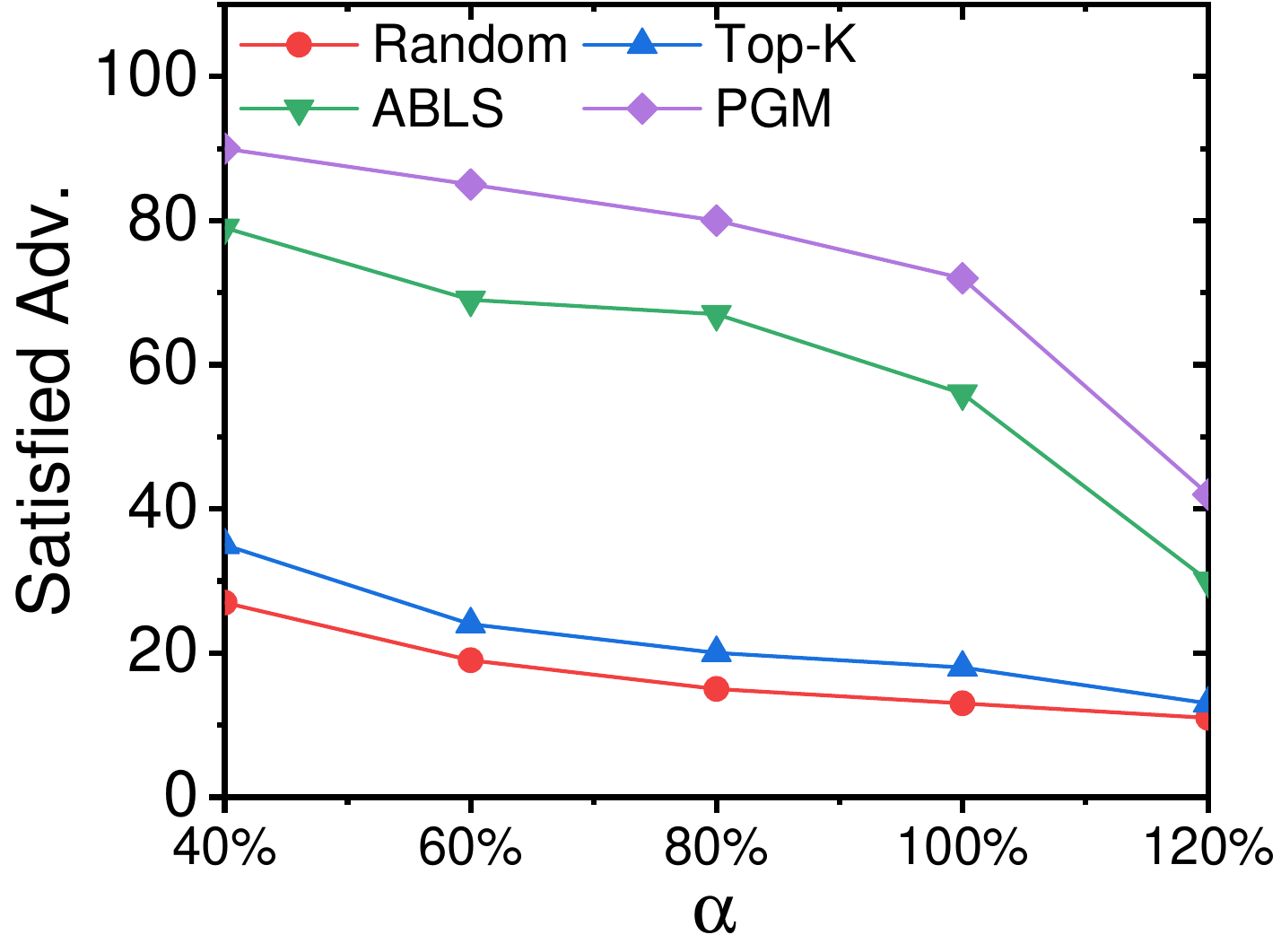}&
        \includegraphics[width=0.185\linewidth]{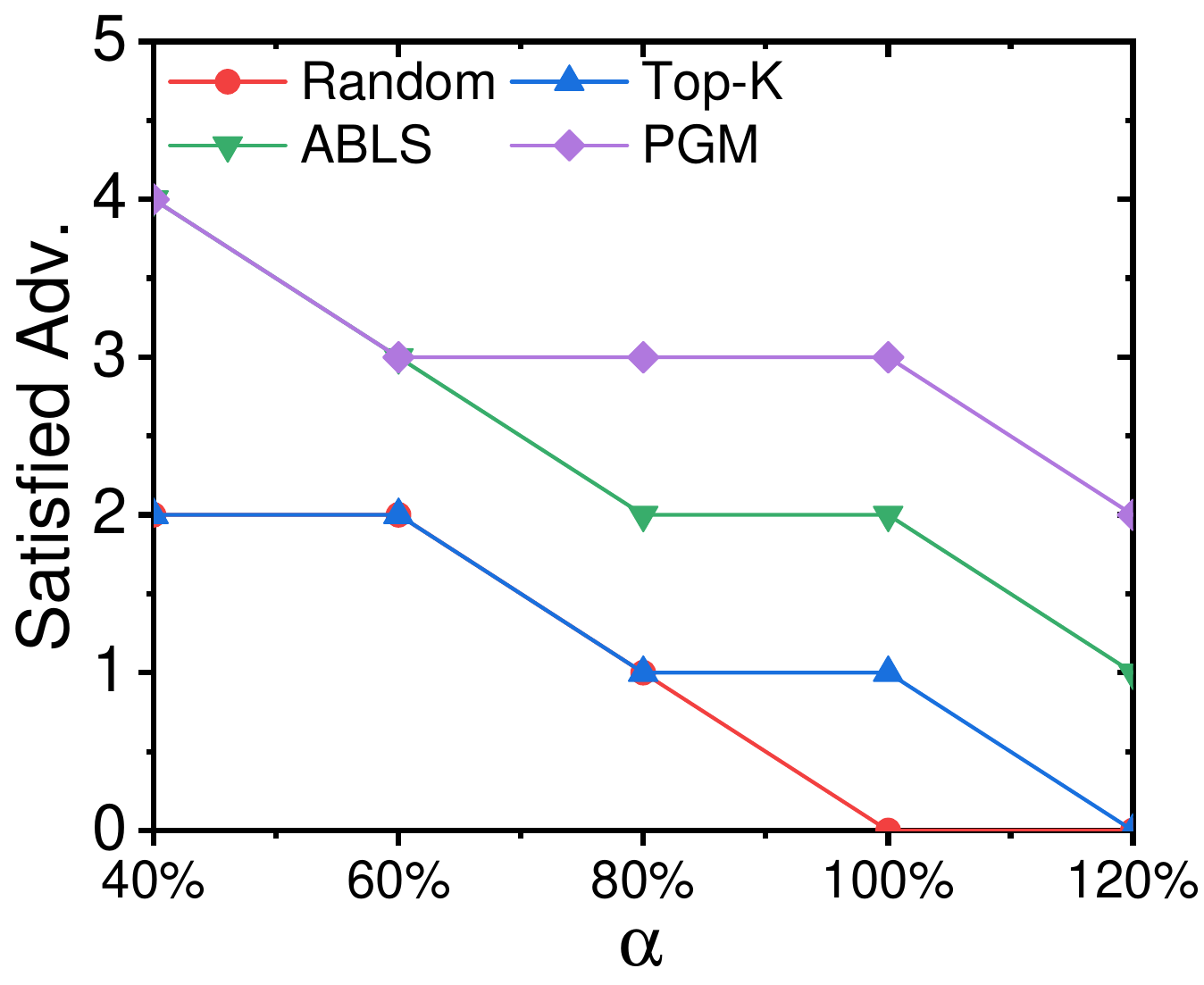}\\

        {\tiny (f) $\lambda = 5\%, \mathcal{|A|} = 20$} &
        {\tiny (g) $\lambda = 5\%, \mathcal{|A|} = 20$} &
        {\tiny (h) $\lambda = 5\%, \mathcal{|A|} = 20$} &
        {\tiny (i) $\lambda = 1\%, \mathcal{|A|} = 100$} &
        {\tiny (j) $\lambda = 20\%, \mathcal{|A|} = 5$} \\
    \end{tabular}
    \caption{Time on varying $\alpha$, $(a)$ Uniform, $(b)$ Weighted, $(c,d,e)$ Trivalency probability Setting and Satisfied Adv. on Varying $\alpha$,$(f)$ Uniform, $(g)$ Weighted, $(i,j,k)$ Trivalency probability Setting}
    \label{Fig:Regret2}
\end{figure*}
\begin{figure*}[ht]
    \centering
    \setlength{\tabcolsep}{2pt} 
    \renewcommand{\arraystretch}{0.9} 
    \begin{tabular}{ccccc}     
        \includegraphics[width=0.185\linewidth]{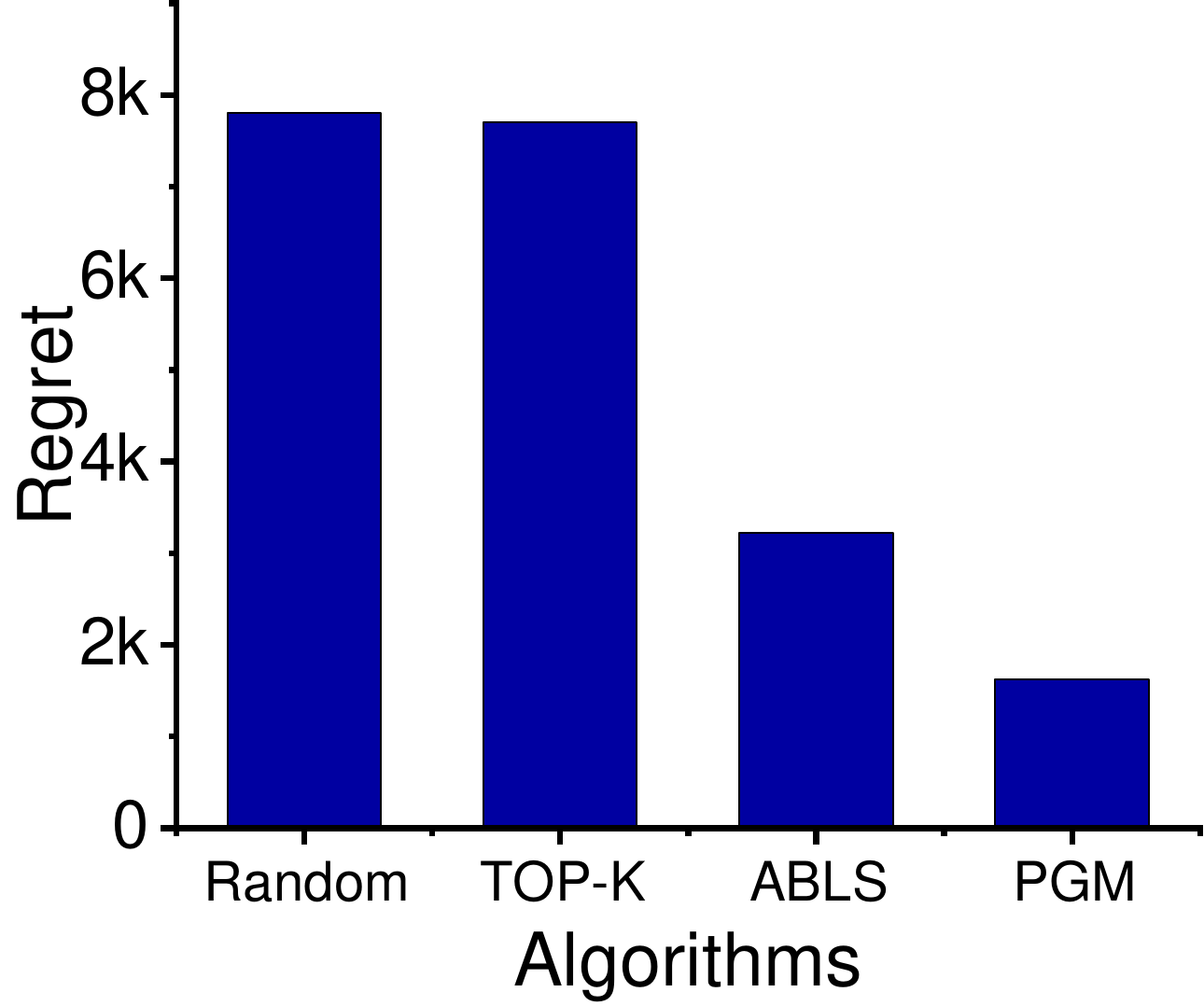} &
        \includegraphics[width=0.185\linewidth]{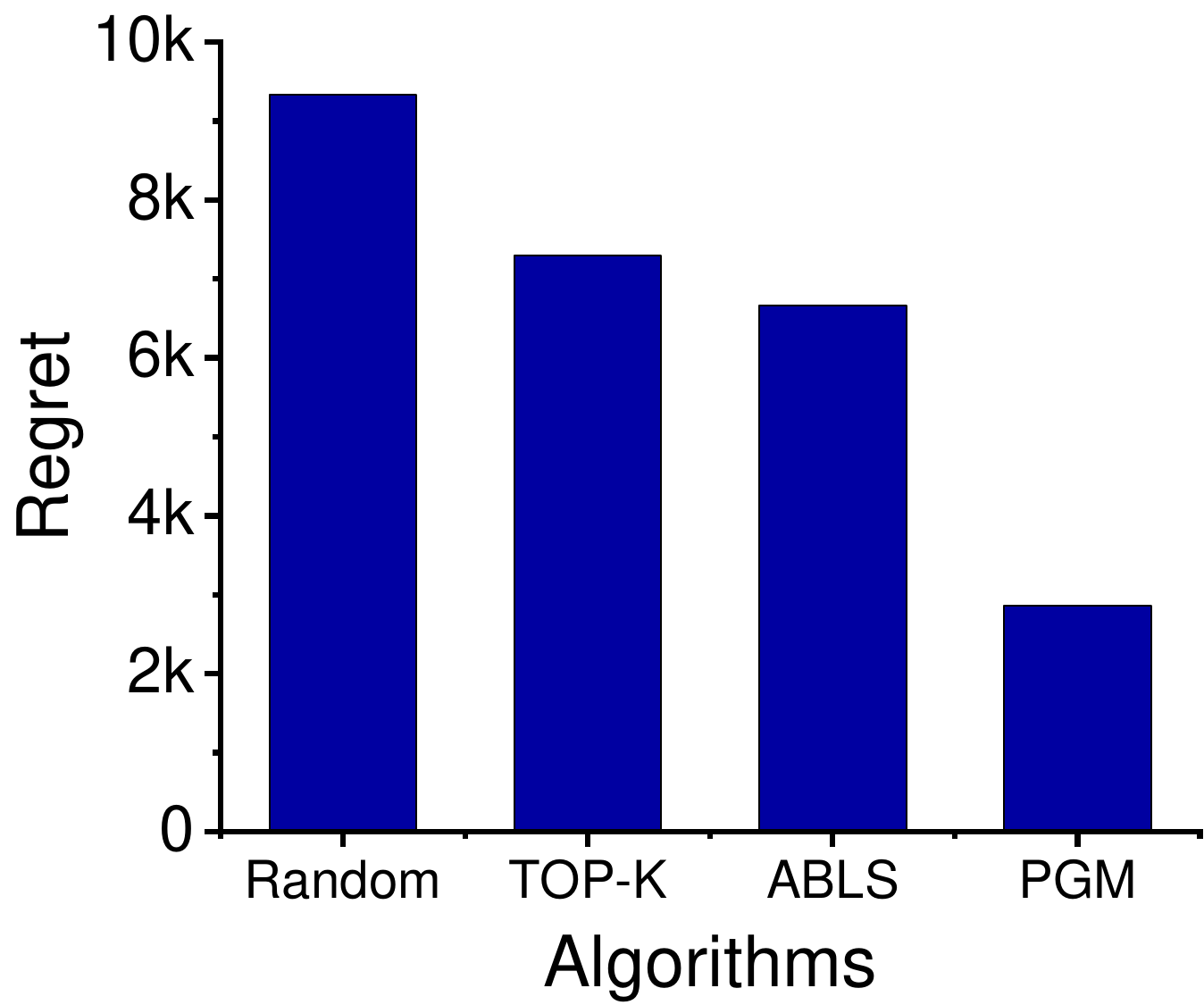} &
        \includegraphics[width=0.185\linewidth]{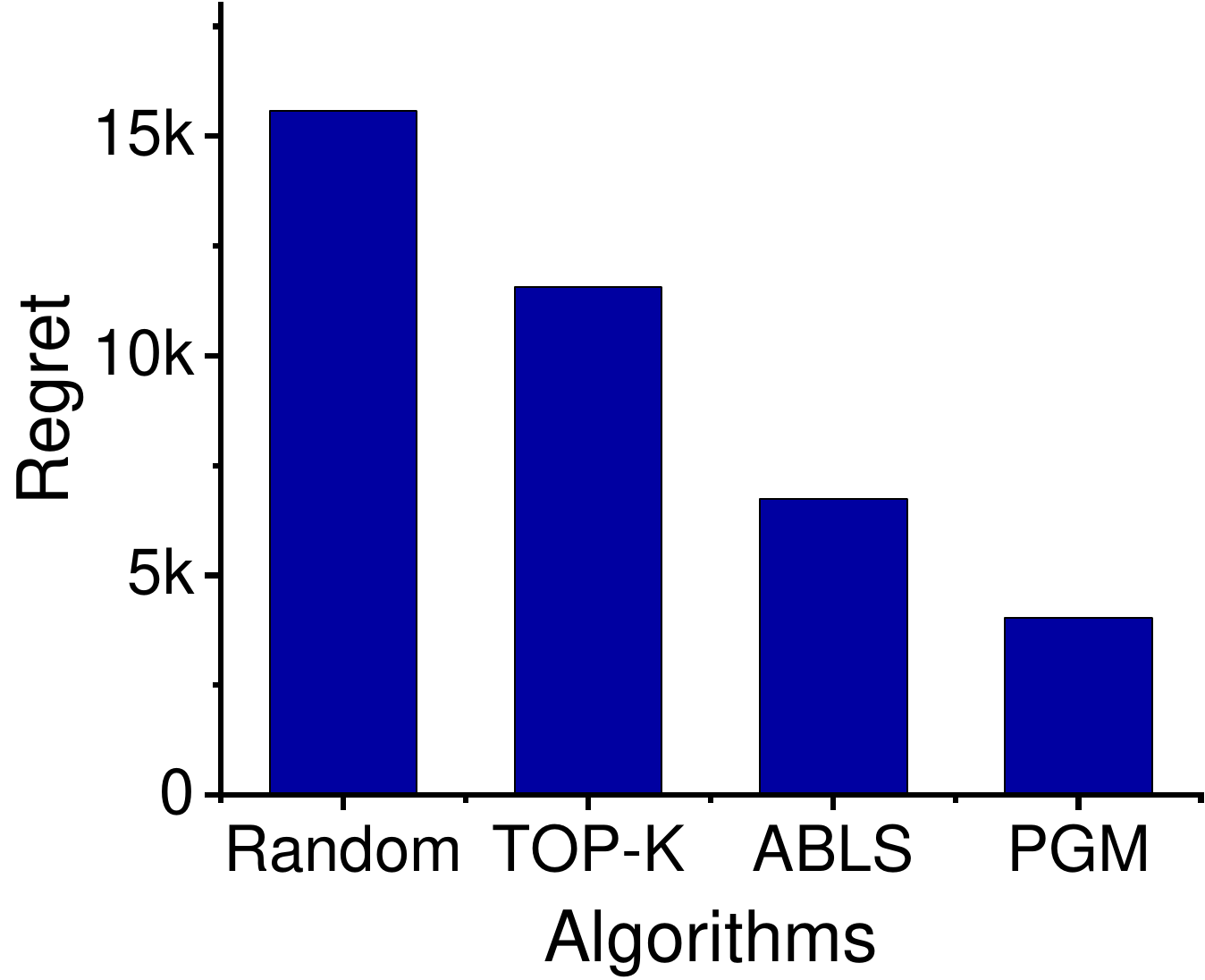} &
        \includegraphics[width=0.185\linewidth]{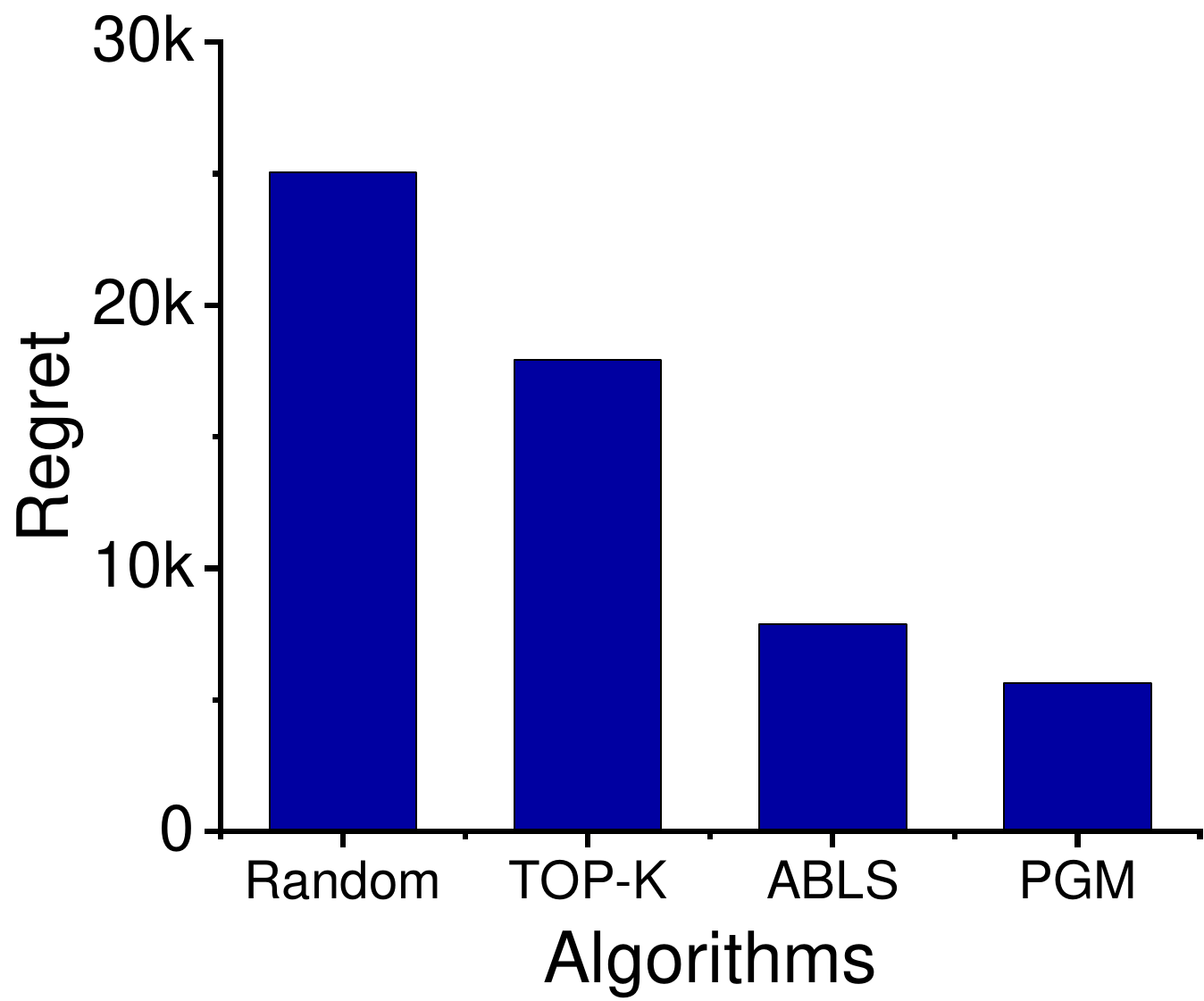} &
        \includegraphics[width=0.185\linewidth]{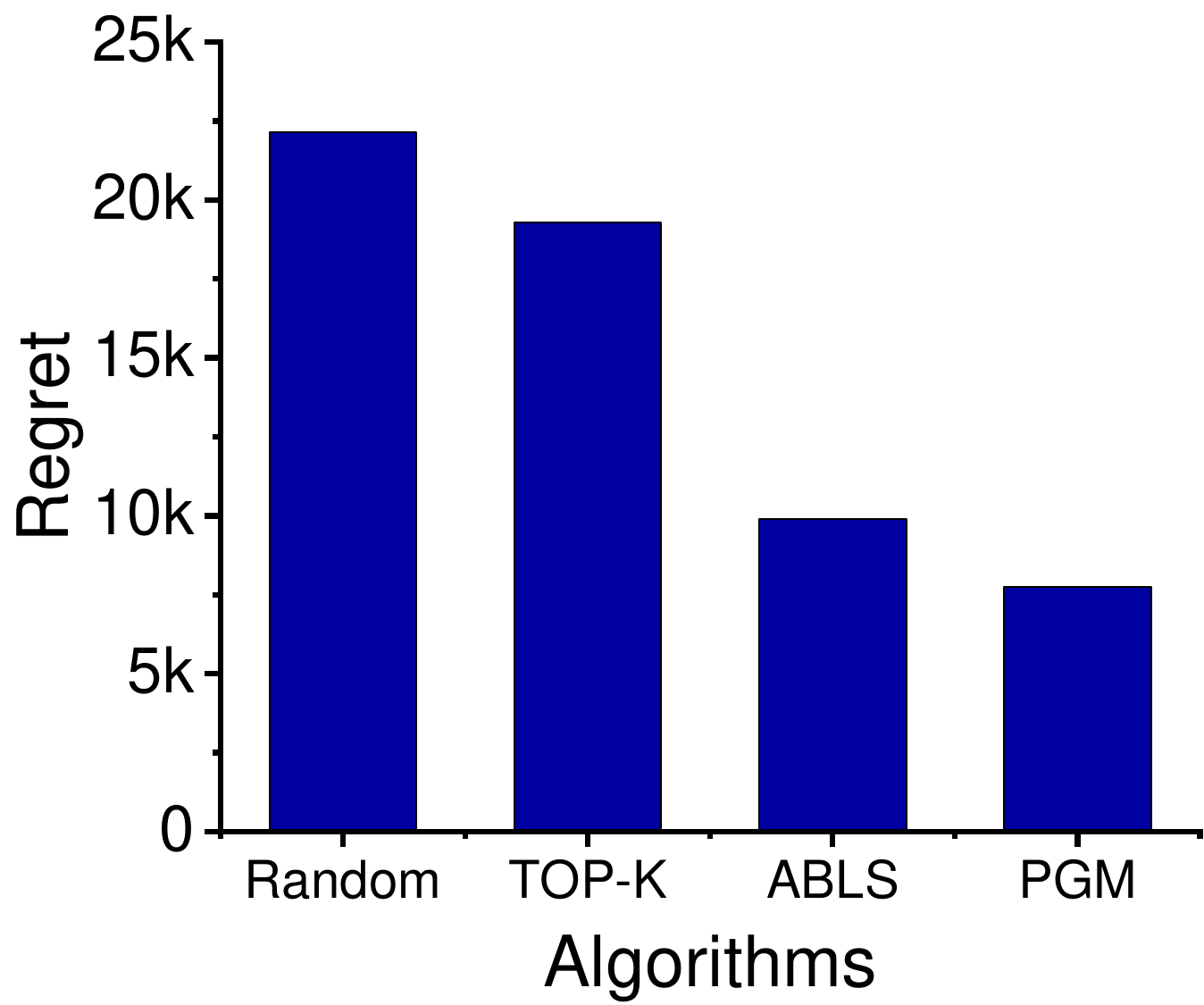} \\
        {\tiny (a) $\alpha = 40 \%$} &
        {\tiny (b) $\alpha = 60 \%$} &
        {\tiny (c) $\alpha = 80 \%$} &
        {\tiny (d) $\alpha = 100 \%$} &
        {\tiny (e) $\alpha = 120 \%$} \\
        \includegraphics[width=0.185\linewidth]{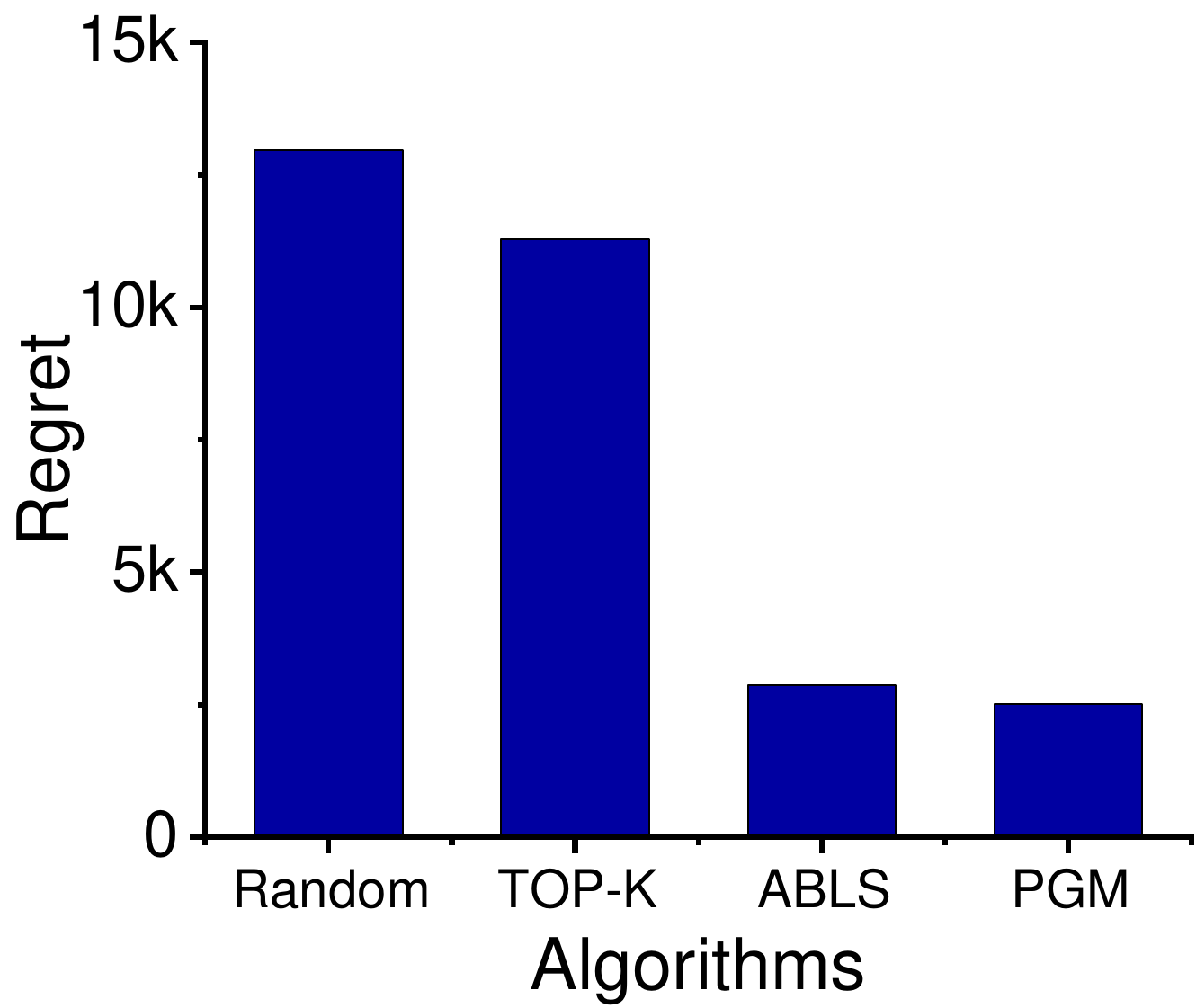} &
        \includegraphics[width=0.185\linewidth]{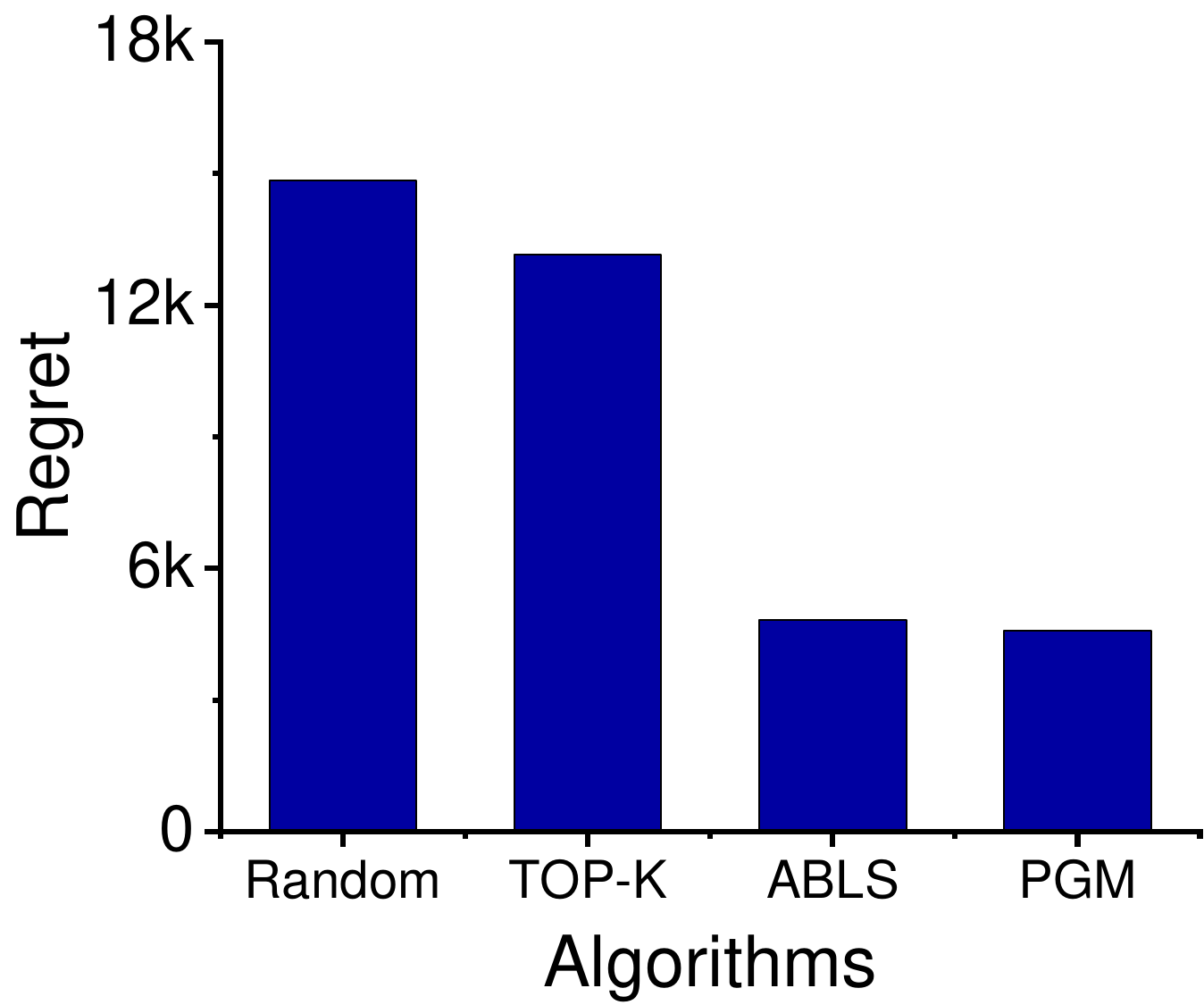} &
        \includegraphics[width=0.185\linewidth]{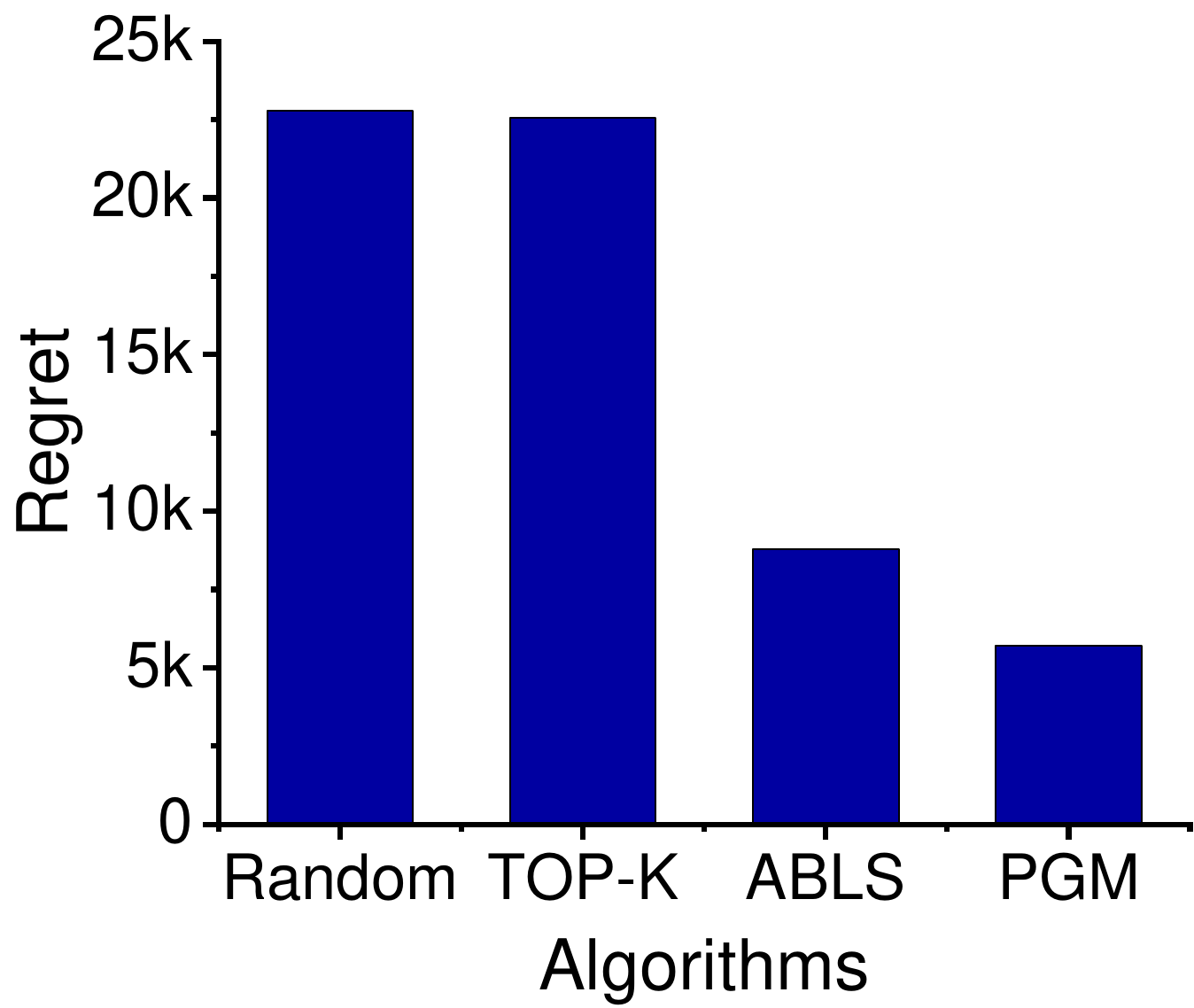} &
        \includegraphics[width=0.185\linewidth]{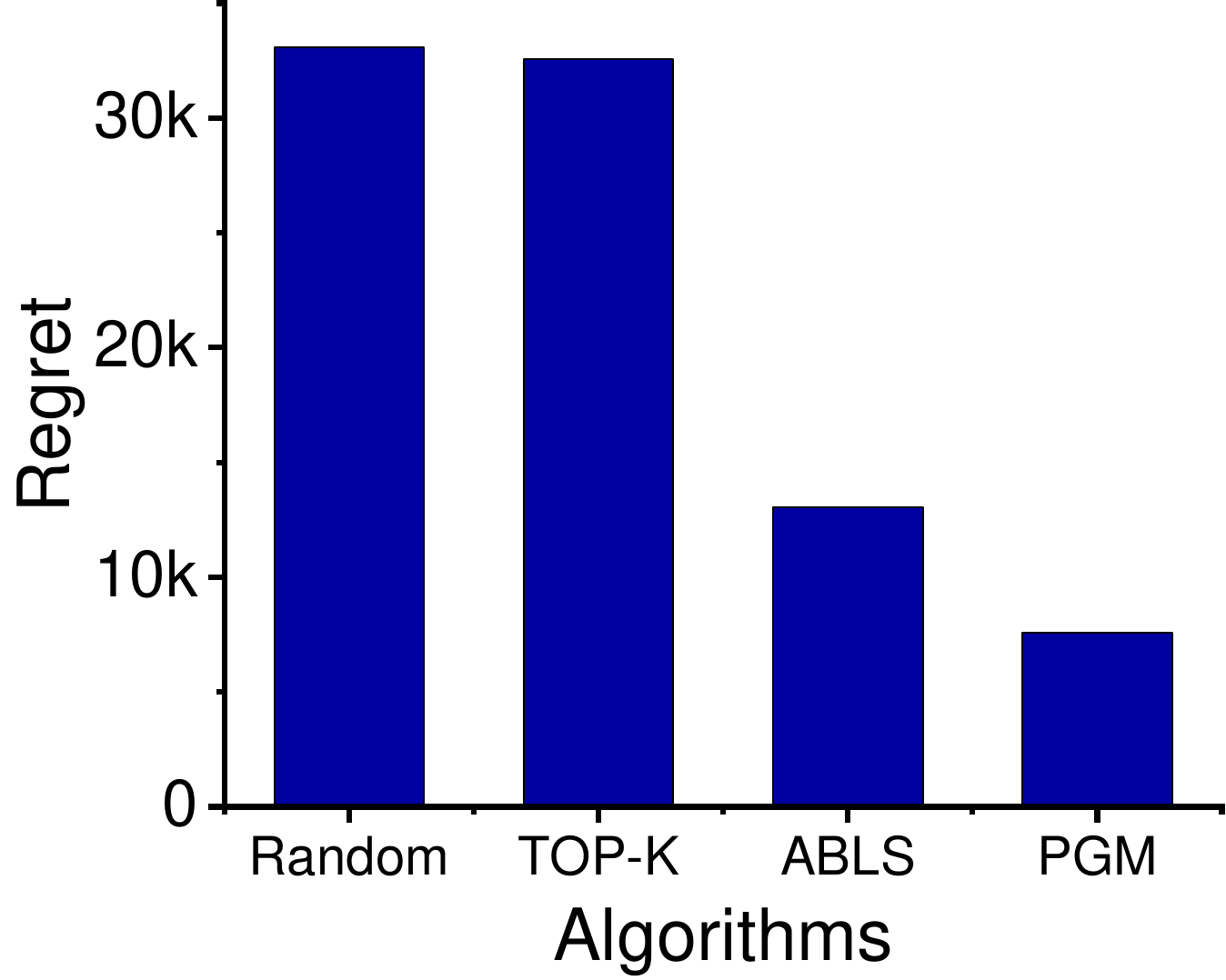} &
        \includegraphics[width=0.185\linewidth]{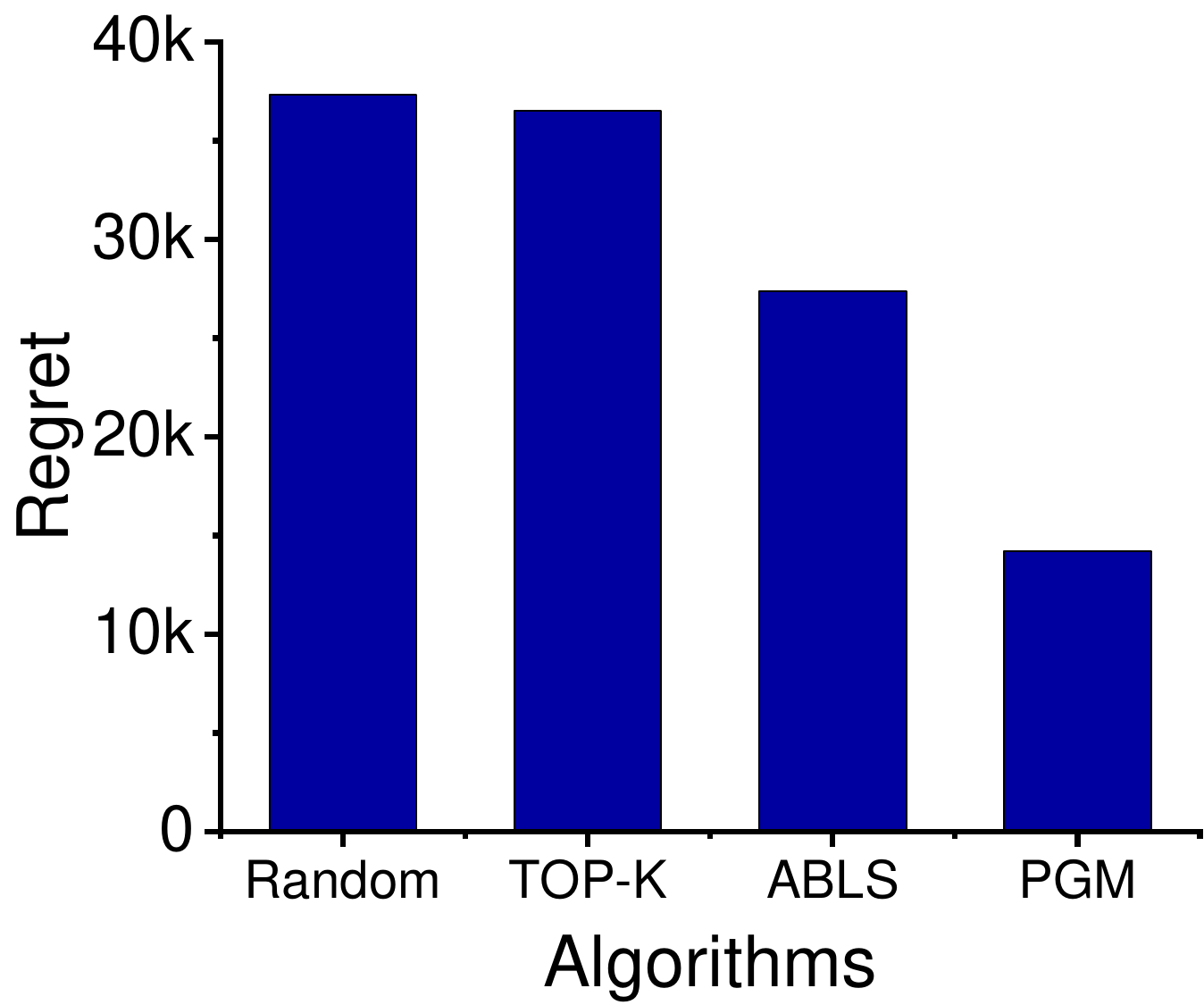} \\
        {\tiny (f) $\alpha = 40 \%$} &
        {\tiny (g) $\alpha = 60 \%$} &
        {\tiny (h) $\alpha = 80 \%$} &
        {\tiny (i) $\alpha = 100 \%$} &
        {\tiny (j) $\alpha = 120 \%$} \\
    \end{tabular}
    \caption{Regret on varying $\alpha$, when $\lambda = 1\%, \mathcal{|A|} = 100$$(a-e)$,   when $\lambda = 20\%, \mathcal{|A|} = 5$ $(f-k)$, for Trivalency Settings}
    \label{Fig:Regret3}
\end{figure*}
\subsection{Key Parameters.}
In this study, the values of the following parameters need to be fixed, and we describe them briefly. All key parameters and their corresponding values are summarized in Table \ref{Table-2}.

\begin{table}[h!]
\caption{\label{Table-2} Key Parameters}
\vspace{-0.15 in}
\begin{center}
    \begin{tabular}{ | p{2cm}| p{5.5cm}|}
    \hline
    Parameter & Values  \\ \hline
    $\alpha$ & $40\%, 60\%, 80\%, \textbf{100\%}, 120\%$   \\ \hline
    $\lambda$ & $1\%, 2\%, \textbf{5\%}, 10\%, 20\%$  \\ \hline
    $\gamma$ & $0, 0.25, \textbf{0.5}, 0.75, 1$  \\ \hline
    $\delta$ & $0, 0.25, \textbf{0.5}, 0.75, 1$  \\ \hline
    $\epsilon$ & $0, 0.01, \textbf{0.05}, 0.075, 0.1$  \\ \hline
    $\rho$ & $0, 0.2, \textbf{0.5}, 0.7, 1$  \\ \hline
    $\pi$ & $25m,50m,\textbf{100m},125m,150m$  \\ \hline
    \end{tabular}
\end{center}
\end{table}

\paragraph{\textbf{Demand-Supply Ratio $(\alpha)$.}}
It denotes the ratio of global influence demand over the influence supply by the influence provider, i.e., $\alpha = I^{d}/I^{s}$, where  $I^{d} = \sum_{a \in \mathcal{A}} I^a$ is the global influence demand, $I^{s} =  \sum_{b \in \mathcal{BS}} \mathcal{I}(\{b\}) + \sum_{p \in \mathcal{P}} \mathcal{I}^{G}(\{p\})$ is influence supply from influence provider.

\paragraph{\textbf{Average-Individual Demand Ratio $(\lambda)$.}}
It is the percentage of average influence demand over the influence provider supply i.e., $\lambda = I^{\bar{d}}/I^{s}$, where $I^{\bar{d}} = I^{d}/|\mathcal{A}|$ is the average individual demand of the advertiser.

\paragraph{\textbf{Advertiser’s Demand $(I)$.}}
Once the average individual demand is fixed, we can calculate the advertiser demand $I^{a} = \lfloor \omega \cdot  I^{s} \cdot \lambda \rfloor$, where $\omega$ is a random choice factor between $0.8$ and $1.2$ to generate different demand for the advertisers.

\paragraph{\textbf{Advertiser’s Payment $(\mathcal{L})$.}}
Following widely adapted settings in regret maximization \cite{ali2024minimizing,ali2024toward,zhang2021minimizing} and economic studies \cite{aslay2017revenue,aslay2015viral,banerjee2019maximizing}, we set each advertiser's payment to be proportional to their influence demand, i.e.,  $\mathcal{L}_{i} = \lfloor \beta \cdot I^{a} \rfloor$, where $\beta$ is the random factor chosen between $0.9$ to $1.1$.

\paragraph{\textbf{Unsatisfied Penalty Ratio $(\gamma)$.}}
Recall Definition \ref{Def:Combine_Regret_Model}, where $\gamma \in [0, 1]$ denotes the proportion of the payment penalty incurred when the advertiser's demand is not met.

\paragraph{\textbf{Cardinality Penalty Ratio $(\delta)$.}}
This ratio $\delta$ defines the penalty imposed for allocating the maximum number of slots or seed nodes, where $\delta \in [0, 1]$.

\paragraph{\textbf{Interaction Parameter $(\rho)$}.}
The $\rho$ controls the strength of interaction between billboard slots and social media seed nodes. We vary the value of $\rho$ between $0$ and $1$.

\paragraph{\textbf{Regret Tolerance Parameter $(\epsilon)$.}}
We vary the $\epsilon$ value from $0$ to $0.1$ to simulate the trade-off between regret and runtime.

\paragraph{\textbf{Distance Parameter $(\pi)$.}}
The distance parameter $\pi$ decides the influence of a billboard slot. We vary the value of $\pi$ between $25$ meters $150$ meters.

\subsection{Baseline Methodologies.} \label{Sec:Baseline}
\paragraph{\textbf{Random Allocation (RA)}}
The billboard slots and social media seeds are selected randomly. It picks nodes randomly without considering any influence maximization criteria and stops when the budget is used up.
\paragraph{\textbf{Top-k Allocation.}} In this approach, most influential billboard slots and seed nodes are selected till their respective demand and budget constraints are satisfied.

\subsection{\textbf{Goals of Our Experimentation.}}
The following research question is addressed in our study.
\begin{itemize}
    \item \textbf{RQ1.} What will happen if the influence provider's influence supply is below, near, or exceeds the global demand of the advertisers in a multi-mode advertisement setting? 
    \item \textbf{RQ2.} Which kind of advertisers, i.e., fewer advertisers with high influence demand or a large number of advertisers with small influence demand, are more beneficial? 
    \item \textbf{RQ3.} How does the interaction effect affect the selection of slots and reduce total regret? Also how the parameters like $\epsilon$, $\delta$, T, $\gamma$ impacts on minimizing regret.
\end{itemize}

\subsection{Results and Discussions.} \label{Sec:RD} 
In this work, we consider four cases to show the effectiveness of our proposed approach. 
\paragraph{\textbf{Effectiveness Study.}} The efficiency of the proposed solutions is described in four different cases.
\paragraph{Case 1:~ Low $\alpha$, Low $\lambda$.}
Corresponding to case $1$, we have $\alpha \leq 80\%$ and $\lambda \leq 2\%$. This represents the situation where global demand and individual influence demand are low, i.e., influence providers have more influence to supply than the demand from the advertiser. That is why the number of satisfied advertisers is higher, except for `Top-k' and `Random' approaches. We have three main observations. First, with the increase of $\alpha$, the individual demand of the advertisers increases, and the number of satisfied advertisers decreases. Second, the `PGM' and `ABLS' outperform `Random' and `Top-k', i.e., reduce the regret better because they are able to satisfy advertisers with fewer slots or seeds. Third, as there is a large number of advertisers with small individual demand, each advertiser gets more influence than required, which leads some of the advertisers to become dissatisfied.
\paragraph{Case 2:~ Low $\alpha$, High $\lambda$.} Corresponding to case $2$ we have $\alpha \leq 80\%$ and $\lambda \geq 5\%$. This refers to the situation where global demand is still lower than the supply. However, individual demand is much higher. We have two observations. First, when the global demand becomes low but individual demand is high, the excessive influence supply to advertisers drops. This happens because with the increase of $\beta$ value, the number of advertisers is less, and individual influence demand is higher. That influence demand is closer to the influence provided by slots or seeds, and consequently, the extra influence supply to the advertisers decreases. Second, with the higher individual influence demand from advertisers, the influence provider could deploy more slots or seeds using `PGM' and `ABLS' to minimize the regret, as shown in Figure \ref{Fig:Regret-Uniform} for three different probability settings: uniform, trivalency, and weighted cascade. Among them, trivalency reduces the regret better than the other probability settings because trivalency models simulate realistic variations in influence strength between different individuals as shown in Figure \ref{Fig:Regret-Uniform}$(a,b,c,f,g,h,k,\ell,m)$ and Figure \ref{Fig:Regret3}(i,j).
\paragraph{Case 3:~ High $\alpha$, Low $\lambda$.}
Corresponding to case $3$, we have $\alpha \geq 100\%$ and $\lambda \leq 2\%$. With the increase of the $\alpha$ value, the global influence demand is high and individual demand is low. We have two main observations. First, as global influence demand is high, none of the algorithms can satisfy all the advertisers, and dissatisfaction increases. Second, when $\alpha \geq 100\%$, the influence supply is equal to or less than the demand from the advertisers. As we cannot control the excessive influence supply, the regret increases in this case. This behavior can be observed in Figure \ref{Fig:Regret3}(d,e). 

\paragraph{Case 4:~ High $\alpha$, High $\lambda$.}
Corresponding to case $4$, we have $\alpha \geq 100\%$ and $\lambda \geq 5\%$ defines a situation where both global and individual influence demand is high, i.e., demand comes from a few advertisers with high influence demand. We have two main observations.
First, large $\alpha$ and $\lambda$ value leads to higher regret for each advertiser. Hence, all algorithms suffer from higher regret.
second, when $\lambda$  value increases from $5\%$ to $20\%$ all algorithms suffer from higher regret as shown in Figure \ref{Fig:Regret-Uniform}(d,e,i,j,n,o) and Figure \ref{Fig:Regret3}(j,k). So, a smaller number of advertisers with higher influence demand is not very beneficial for the influence provider. Instead, a large number of advertisers with small individual demand is more beneficial for the influence provider.

\paragraph{\textbf{Efficiency Study.}} Efficiency is important as every day thousands of advertisers come to the influence provider with the required influenced demand. Hence, we have conducted an efficiency evaluation under different cases of global and individual influence demand. we have some observations. First, with the increase of $\alpha$, each proposed and baseline method requires extra search time to deploy slots or seeds to the advertisers. We have presented the time requirements for `PGM', `ABLS', `Top-k', and `Random' approaches for $\lambda = 5\%, \mathcal{|A|} = 20$ in the Figure \ref{Fig:Regret2}(a,b,c) for uniform, weighted, and trivalency probability settings, respectively. We observe that with the increase of the number of advertisers, the computational time increases, and this change happens drastically for the `PGM' approach as shown in Figure \ref{Fig:Regret2}(a,b,c). When $|\mathcal{A}| = 20$ the `ABLS' takes more run time compared to `PGM' and baselines. However, when $|\mathcal{A}| = 100$, `PGM' takes the highest computational time as shown in Figure \ref{Fig:Regret2}(c,d).
\paragraph{\textbf{Scalability Test.}}
The show the scalability of our proposed solution approaches `PGM' and `ABLS' we experimented with two extreme cases when  $\lambda = 1\%, \mathcal{|A|} = 100$$(a, b, c, d, e)$ and when $\lambda = 20\%, \mathcal{|A|} = 5$ $(f,g,h,j,k)$, for trivalency probability settings. In both cases, `PGM' and `ABLS' outperform the baseline in terms of minimizing regret as shown in Figure \ref{Fig:Regret3}(a-j). One point needs to be highlighted: with the increase of the demand-supply ratio $\alpha$, the number of satisfied advertisers decreases. Hence, the regret increases as shown in Figure \ref{Fig:Regret2}(f-j).

\paragraph{\textbf{Other Parameter Study.}}
We analyze the impact of key parameters: (a) Increasing $\epsilon$ reduces runtime but increases regret due to stricter selection; (b) Higher $\delta$ penalizes large allocations, leading to compact but less effective solutions; (c) Larger $\gamma$ emphasizes demand satisfaction, lowering regret at higher resource use; (d) More iterations $T$ in PGM improve regret marginally after a point, with increased computation cost. (e) With increasing $\rho$ value, the strength of interaction between slots and seed nodes also increases. (f) With an increase of $\pi$, the influence of the slots increases as one slot can influence a larger number of trajectories. We set $\alpha = 100\%, \lambda = 5\%$, $\gamma = 0.5$, $\delta = 0.5$, $\epsilon = 0.05$ $\rho = 0.5$ and $\pi = 100$ meter as the default setting for our experiment.

\section{Conclusion and Future Directions} \label{Sec:CFD}
 In this work, we studied the problem of regret minimization in multi-mode advertising, where billboard slots and social media seeds are allocated jointly to multiple advertisers. We introduce a novel \textit{Interaction Effect} formulation and propose a new regret model that captures both individual and combined effects of two different advertising media and their interaction. We proved that the problem is NP-hard and inapproximable, and we introduced two solution approaches: the Projected Subgradient Method (PGM) and the Approximate Bisubmodular Local Search (ABLS). Experiments on large-scale real-world datasets demonstrated that our methods consistently reduce total regret compared to existing baselines across various demand settings and influence probability models. Our flexible framework can be adapted to other resource allocation problems beyond advertising. The current methods (PGM and ABLS) work efficiently on large datasets; however, this work could be extended for streaming and online settings, where advertisers arrive dynamically. Another potential direction could be the development of parallel and distributed optimization frameworks for scalability in real-world advertising platforms.

\bibliographystyle{ACM-Reference-Format}
\bibliography{sample-bibliography} 
\end{document}